\documentclass[a4paper,UKenglish,cleveref, autoref, thm-restate, pdfa]{lipics-v2021}

\pdfoutput=1 
\hideLIPIcs  

\usepackage{graphicx}
\usepackage{amsmath}
\usepackage{algorithm}
\usepackage{algpseudocode}
\usepackage{url}
\usepackage{booktabs}
\usepackage{array}
\usepackage{multirow}
\usepackage{tikz}

\title{Faster Linear-Space Data Structures for Path Frequency Queries} 

\author{Ovidiu Rața}{University of Copenhagen, Copenhagen, Denmark}{ovidiurata03@gmail.com}{https://orcid.org/0009-0006-8816-2478}{}

\authorrunning{O. Rața} 

\Copyright{Ovidiu Rața} 

\ccsdesc[300]{Theory of computation~Sorting and searching} 

\keywords{ Data structure, Range query, Mode, Minority, Least frequent element, Trees, Linear-space, Path query, Weighted Frequency Queries, Weighted Range Queries, Weighted Path Queries.}

\category{} 

\relatedversion{} 

\funding{Supported by VILLUM Foundation grant 54451, Basic Algorithms Research Copenhagen (BARC).}

\acknowledgements{I wish to thank Professor Mikkel Thorup for his valuable support as my Master's Thesis supervisor. I also wish to thank the anonymous reviewers of SWAT 2026 for their meaningful feedback, which helped improve the paper.}

\nolinenumbers 

\EventEditors{Pierre Fraigniaud}
\EventNoEds{1}
\EventLongTitle{20th Scandinavian Symposium on Algorithm Theory (SWAT 2026)}
\EventShortTitle{SWAT 2026}
\EventAcronym{SWAT}
\EventYear{2026}
\EventDate{June 17--19, 2026}
\EventLocation{Copenhagen, Denmark}
\EventLogo{}
\SeriesVolume{370}
\ArticleNo{15}

\begin{document}

\maketitle

\begin{abstract}
 We present linear-space data structures for several frequency queries on trees, namely: path mode, path least frequent element, and path $\alpha$-minority queries. We present the first linear-space data structures, requiring $O(n \sqrt{nw})$ preprocessing time, that can answer path mode and path least frequent element queries in $O(\sqrt{n/w})$ time. This improves upon the best previously known bound of $O(\log\log n \sqrt{n/w})$ achieved by Durocher et al.~\cite{durocher2016linear} in 2016. 
 
 For the path $\alpha$-minority problem, where $\alpha$ is specified at query time, we reduce the query time of the linear-space data structure of Durocher et al.~\cite{durocher2016linear} from $O(\alpha^{-1}\log\log n)$ down to $O(\alpha^{-1})$ by employing a simple randomized algorithm with a success probability $\geq 1/2$. 
 
 We also present the first linear-space data structure supporting ``Path Maximum $g$-value Color'' queries in $O(\sqrt{n/w})$ time, requiring $O(n \sqrt{nw})$ preprocessing time. This general framework encapsulates both path mode and path least frequent element queries. For our data structures, we consider the word-RAM model with $w\in \Omega(\log n)$, where $w$ is the word size in bits.

\end{abstract}

\section{Introduction}
\label{section:introduction}
According to Chan et al.~\cite{chan2014linear}, given a multiset $S$, the frequency of an element $x$ in $S$, denoted by $\text{freq}_S(x)$, is the number of occurrences of the element $x$ in $S$. A mode of $S$ is an element $a\in S$ such that for all $x\in S$ , $\text{freq}_S(x)\leq\text{freq}_S(a)$. Given an array $A$ of $n$ integers, a range mode of $A[i:j]$ is a mode of the multiset determined by the range $A[i:j]$. Alternatively, one least frequent element in a multiset $A$ is an element of minimum multiplicity in $A$.
    
    Generalizing the range mode query problem to trees, we study the problem of building a data structure that, given two nodes $(i,j)$, can efficiently find the mode of the multiset of nodes of the path $P(i,j)$. An alternative to the path mode query is the path least frequent element query, which asks to report \textbf{any single} element on the path $P(i,j)$ of minimal multiplicity. 
    
    Historically, there has always been a gap between the query times for the array and tree versions of the mode and least frequent element problems. Chan et al.~\cite{chan2014linear} developed the fastest known linear-space data structure with $O(\sqrt{n/w})$ query time for the array mode query problem, while the time for tree path queries remained $O(\log n\sqrt{n})$. Later, in 2016, Durocher et al.~\cite{durocher2016linear} reduced the time on trees down to $O(\log\log n \sqrt{n/w})$, for both mode and least frequent element queries. In this paper, we bridge this gap by reducing the time for both path mode queries and least frequent element queries on trees down to $O(\sqrt{n/w})$, in the context of linear-space data structures. 
    
    We also consider a new, more general problem, which we call the \textbf{Range Maximum $g$-value Color} query. By abstracting specific frequency conditions into a generic function $g$, we provide a unified algorithmic framework that simultaneously solves both the mode and least frequent element problems, while also capturing more complex frequency-based metrics. An array $A$ of $n$ positive integers (the ``array of colors'') and a function $g$ are given. We require $g$ to satisfy several properties, which are also satisfied by the range mode function. A detailed formal definition of these properties is provided in Section~\ref{section:problem_presentation}.
    
    The \textbf{Range Maximum $g$-value Color} problem is to preprocess $A$ and the function $g$ to efficiently answer queries of the following form:
    \begin{quote}
        Given two indices $i, j$ of $A$, determine: $\arg\max_{c \in A[i \dots j]} \{ g(i, j, c) \}$
    \end{quote}
    In other words, the query asks to report any single color $c$ corresponding to the maximum value of $g(i,j, c)$.
    
    We study the generalization of this problem to trees, which we call \textbf{Path Maximum $g$-value Color}, and prove that there exists a linear-space data structure supporting these queries in $O(\sqrt{n/w})$ time, requiring $O(n \sqrt{nw})$ preprocessing time. Because this problem acts as a generalization, it directly reduces the $O(\log\log n \sqrt{n/w})$ worst-case query time required by the previous linear-space data structures for path mode and path least frequent element queries~\cite{durocher2016linear}, down to $O(\sqrt{n/w})$.

    We conclude our work by presenting an improvement over the path $\alpha$-minority problem, studied by Durocher et al.~\cite{durocher2016linear}. The path $\alpha$-minority query for a tree path $(u,v)$ asks to report \textbf{any single} color that occurs at most an $\alpha$ fraction of the time on the path between nodes $u$ and $v$. We prove that by applying a simple randomized algorithm with success probability $\geq 1/2$, their linear-space data structure can support $\alpha$-minority queries in $O(\alpha^{-1})$ time (where $\alpha$ is specified at query time), improving on the previous $O(\alpha^{-1}\log\log n)$ query time.
    
    We assume the Word RAM model of computation using words of $w =\Omega(\log n)$ bits, where $n=|A|$.
    
    \subsection{Related Work and Contribution}

    Krizanc et al.~\cite{krizanc2005range} presented $O(n)$-space data structures that support range mode queries in $O(\sqrt{n}\log \log n)$ time on arrays and $O( \sqrt{n} \log n)$ time on trees. Chan et al.~\cite{chan2014linear} achieved $o(\sqrt{n})$ query time with an $O(n)$-space data structure that supports queries in $O(\sqrt{n/w} ) \subseteq O(\sqrt{n/ \log n})$ time on arrays. Durocher et al.~\cite{durocher2016linear} present an $O(n)$ space data structure supporting path mode queries on trees in $O(\log\log n \sqrt{ n/w })$ time. Durocher et al.~\cite{durocher2016linear} ask whether the $\log\log n$ gap between the path mode query time on trees and the range mode query time on arrays can be closed. In this paper, we present the first $O(n)$ space data structure that supports path mode queries in $O(\sqrt{ n/w })\subset o(\log\log n \sqrt{ n/w })$ query time, thus eliminating the gap between path mode and range mode query times. 
    
    In the context of the least frequent element problem, Chan et al.~\cite{chan2012linear} presented an $O(n)$-space data structure that supports range least frequent element queries on arrays in $O(\sqrt{n})$ time. Durocher et al.~\cite{durocher2016linear} improve the range least frequent element query time down to $O(\sqrt{n/w})$, and designed the first path least frequent element query data structure for trees, with an $O(\log\log n\sqrt{n/w})$ query time. In our paper, we present the first linear-space data structure that supports path least frequent element queries on trees in $O(\sqrt{n/w})$ time, thus also closing the $\log \log n$ gap between the array and tree query times for this problem. A history of previous best query times, achieved by linear-space data structures for mode queries and least frequent element queries, is presented in Tables~\ref{tab:mode_history},~\ref{tab:lfe_history}.  

    \begin{table}[H]
    \centering
    \caption{History of Mode Queries (linear-space)}
    \begin{tabular}{lccc}
        \toprule
        \textbf{Authors} & \textbf{Array Query Time} & \textbf{Tree Query Time} & \textbf{Source} \\
        \midrule
        Krizanc et al. & $O(\sqrt{n} \log \log n)$ & $O(\sqrt{n} \log n)$ &~\cite{krizanc2005range} \\
        Chan et al. & $O(\sqrt{n/w})$ & -- &~\cite{chan2014linear} \\
        Durocher et al. & -- & $O(\log \log n \sqrt{n/w})$ &~\cite{durocher2016linear} \\
        \textbf{This Paper} & -- & $\mathbf{O(\sqrt{n/w})}$ & -- \\
        \bottomrule
    \end{tabular}
    \label{tab:mode_history}
\end{table}

\begin{table}[H]
    \centering
    \caption{History of Least Frequent Element Queries (linear-space)}
    \begin{tabular}{lccc}
        \toprule
        \textbf{Authors} & \textbf{Array Query Time} & \textbf{Tree Query Time} & \textbf{Source} \\
        \midrule
        Chan et al. & $O(\sqrt{n})$ & -- &~\cite{chan2012linear} \\
        Durocher et al. & $O(\sqrt{n/w})$ & $O(\log \log n \sqrt{n/w})$ &~\cite{durocher2016linear} \\
        \textbf{This Paper} & -- & $\mathbf{O(\sqrt{n/w})}$ & -- \\
        \bottomrule
    \end{tabular}
    \label{tab:lfe_history}
\end{table}

    In their work, Durocher et al.~\cite{durocher2016linear} present a linear-space data structure that can answer path $\alpha$-minority queries on trees in $O(\alpha^{-1}\log \log n)$ time, with $\alpha$ specified at query time. The path $\alpha$-minority query for a tree path $(u,v)$ asks to report any color that occurs at most an $\alpha$ fraction of the tree path between nodes $u$ and $v$.  We prove that by applying a simple randomized algorithm with success probability $\geq 1/2$, their linear-space data structure can support $\alpha$-minority queries in $O(\alpha^{-1})$ time, improving on the previous $O(\alpha^{-1}\log\log n)$ query time. A history of query times achieved by linear-space data structures for $\alpha$-minority is given in Table~\ref{tab:minority_history}.

\begin{table}[H]
    \centering
    \caption{History of $\alpha$-Minority Queries (linear-space)}
    \begin{tabular}{lccc}
        \toprule
        \textbf{Authors} & \textbf{Array Query Time} & \textbf{Tree Query Time} & \textbf{Source} \\
        \midrule
        Chan et al. & $O(1/\alpha)$ & -- &~\cite{chan2012linear} \\
        Durocher et al. & -- & $O(\alpha^{-1} \log \log n)$ &~\cite{durocher2016linear} \\
        \textbf{This Paper} & -- & $\mathbf{O(1/\alpha)}$ & (Randomized) \\
        \bottomrule
    \end{tabular}
    \label{tab:minority_history}
\end{table}

\section{Problem Presentation}
\label{section:problem_presentation}
We begin with a detailed formalization of the main problem studied in this paper, namely, the \textbf{Path Maximum $g$-value Color} problem. We formulated this problem to generalize the path mode query problem as broadly as possible. We also note that we present the first efficient linear-space data structure for the range version of the maximum $g$-value color problem, as this general problem has not been studied previously.

Our study of this problem and the methods used to solve it were inspired by the domain of retrieval/counting problems and array range query problems. A description of retrieval/counting problems is given by Chazelle~\cite{chazelle1986filtering}, and some interesting instances of array range query problems can be found in Skala et al.~\cite{Skala2013}. For clarity, we first present the problem in the context of arrays, and subsequently generalize it to trees.

\subsection{The Range Maximum \texorpdfstring{$g$}{g}-value Color Problem}

An instance of our problem consists of an array of integers $A$ (referred to as the ``array of colors'') and an integer function $g$. Let $n = |A|$. We assume that $1\leq A[i]\leq n$. If $A$ does not fulfill this requirement, $A$ can be transformed in $O(n\log n)$ time to an array $A'$ with values compressed to the interval $\{1,\dots,n\}$. 

The starting point of our study was the range mode function. We developed our data structure using only a subset of the properties of the range and path mode functions. We require $g$ to satisfy the following constraints, all of which are shared by the range mode function:

\begin{enumerate}
    \item \textbf{Domain:} The function $g(i, j, c)$ takes three integer arguments, where $1 \leq i \leq j \leq n$ are indices of $A$, and $c$ is a color present in the subarray $A[i \dots j]$.
    \item \textbf{Output:} $g$ produces an integer output that fits into a single machine word.
    \item \textbf{Range Contraction:} For any valid input $(i, j, c)$,
    \[
        g(i, j, c) = g(i', j', c)
    \]
    where $i'$ and $j'$ are the minimum and maximum indices, respectively, such that $i \leq i' \leq j' \leq j$ and $A[i'] = A[j'] = c$. In other words, the value of $g$ depends only on the range spanned by the occurrences of $c$ within the query interval.
    \item \textbf{Oracle Complexity:} We assume $g$ can be computed efficiently given a contracted range. Specifically, computing $g(i,j,c)$ such that $A[i]=A[j]=c$ takes $O(1)$ time. This computation may use an additional $O(n)$-space static data structure.
\end{enumerate}

The \textbf{Range Maximum $g$-value Color} problem is to preprocess $A$ and the function $g$ to efficiently answer queries of the following form:
\begin{quote}
    Given two indices $i, j$ of $A$, determine: $\arg\max_{c \in A[i \dots j]} \{ g(i, j, c) \}$
\end{quote}
In other words, determine any single color $c$ corresponding to the maximum value of $g(i,j, c)$.

\subsection{Generalization to trees}

This problem generalizes naturally to trees. Instead of an array, we are given a \textbf{rooted}, node-colored tree $\mathcal{T}$ and a function $g$ defined on tree paths. An input to $g$ consists of a triple $(u, v, c)$, where $u, v$ are nodes in $\mathcal{T}$ and $c$ is a color occurring on the simple path $P(u, v)$. From now on, we will consider $\mathcal{T}$ to be rooted at node $1$.

The properties of $g$ remain analogous. Crucially, the \textbf{Range Contraction} property (Property 3) translates as follows:
\[
    g(u, v, c) = g(u', v', c)
\]
where $u'$ is the node on $P(u, v)$ closest to $u$ such that $\mathcal{T}[u'] = c$, and $v'$ is the node on $P(u, v)$ closest to $v$ such that $\mathcal{T}[v'] = c$, where $\mathcal{T}[u]$ denotes the color of node $u$ in the tree $\mathcal{T}$.

We refer to this generalized problem as the \textbf{Path Maximum $g$-value Color} problem. This is the main problem we focus on in our work. In the following sections, we present a linear-space data structure supporting path maximum $g$-value color queries in $O(\sqrt{n/w})$ time, requiring $O(n \sqrt{nw})$ preprocessing time.

\subsection{Applications of the path maximum g-value color problem} 

To motivate why this abstract definition is interesting in its own right, we introduce the \textbf{Range Maximum Weight Color} problem, which falls under the broader umbrella of \textbf{weighted range queries}. Given a color array $A$ and a weight array $W$ (representing node weights), we wish to find the color in a range $A[i \dots j]$ that maximizes the sum of its associated weights. Formally:
\begin{align*}
    \arg\max_{c \in A[i \dots j]} \left\{ \sum_{k \in [i, j] : A[k]=c} W[k] \right\}    
\end{align*}

This is a specific instance of the maximum $g$-value color problem, as the weight summation function satisfies all properties of $g$ (specifically, the sum of weights for color $c$ in range $[i, j]$ is identical to the sum in the contracted range $[i', j']$). 
 
This problem naturally generalizes to trees, introducing the \textbf{Path Maximum Weight Color} problem as a fundamental operation for \textbf{weighted path queries}. By using the \textbf{virtual tree} data structure, and Lemma~\ref{lemma:constant_time_lca_query} presented in Section~\ref{section:prerequisites}, we can easily verify that:
\begin{align}
\label{relation:g_path_colored_sum}
    g(u_c, v_c,c)=
    \begin{cases}
         g(u_c, r_c, c)+g(v_c,r_c,c)-2\cdot g(lca,r_c,c) & lca\notin P(u_c,v_c)\\
         g(u_c, r_c, c)+g(v_c,r_c,c)-2\cdot g(lca,r_c,c)+W[lca] & lca \in P(u_c,v_c)
    \end{cases}
\end{align}

where $lca$ is the lowest common colored ancestor of $u_c$ and $v_c$ (which requires $\mathcal{T}$ to be rooted), $u_c,v_c$ are both $c$-colored nodes, and $r_c$ is the root of the virtual tree of color $c$. We can verify that $g(u_c,v_c,c)$ can be computed in $O(1)$ time using relation~\ref{relation:g_path_colored_sum} if we store for each $u_c$ the values $g(u_c,r_c,c)$ in a table of size $O(n)$.

Both on arrays and on trees, the \textbf{Maximum Weight Color} problem acts as a generalization for standard frequency queries, effectively uniting them as \textbf{weighted frequency queries}. Specifically, it encompasses both the \textbf{Mode} problem (where $W[k]=1$ for all $k$) and the \textbf{Least Frequent Element} problem (where $W[k]=-1$ for all $k$, allowing us to maximize the negated sum).

\subsubsection{Practical Example: Financial Time Series.}
As an additional motivation towards the independent study of the \textbf{Range Maximum $g$-value Color} problem, consider a log of chronologically ordered stock records: ``Stock $c$ changed by value $v$''. Storing stock IDs in $A$ and changes as weights in $W$, the Maximum Weight Color problem answers: \textit{``Which stock had the highest net increase/decrease between time $i$ and $j$?''}

Furthermore, our general $g$-value framework allows for more complex queries, such as identifying the stock with the \textbf{maximum mean fluctuation} per record within a time period, provided the mean is calculated over the stock's records. This demonstrates that the $g$-value framework extends beyond standard frequency queries to elegantly handle cumulative weights and penalties.

\subsection{Hierarchy and Contribution}
We conclude this section by outlining the difficulty hierarchy. We focus on solving the most general variant: the Path Maximum $g$-value Color problem on trees. We present a linear-space data structure for this problem with a query time of $O(\sqrt{n/w})$. Remarkably, this matches the query time of the fastest known data structure for the range mode query problem (Chan et al.~\cite{chan2014linear}), which corresponds to the simplest case in our hierarchy.

\begin{figure}[H]
    \centering
    \includegraphics[width=0.4\linewidth]{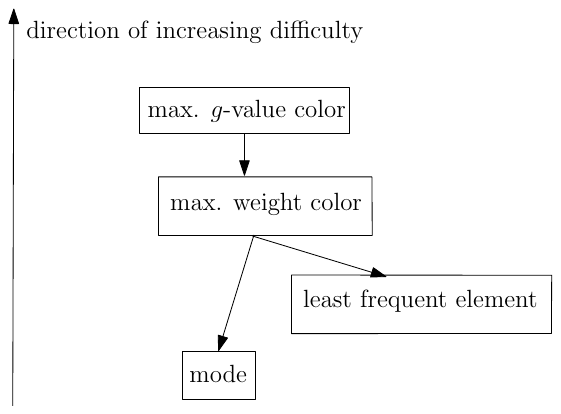}
    \caption{The hierarchy of problems ordered by increasing difficulty and generality. The \textbf{max.\ $g$-value color} problem generalizes the \textbf{max.\ weight color} problem, which covers both the \textbf{mode} and \textbf{least frequent element} problems.}
    \label{fig:problem_hierarchy}
\end{figure}

\section{Prerequisites}
\label{section:prerequisites}
We mention that we consider the colored tree $\mathcal{T}$ given as input to be rooted at node $1$. Prior to presenting our solution, we introduce several prerequisite results and definitions. We rely on the following standard results for tree data structures:

\begin{lemma}[Bender and Farach-Colton~\cite{bender2004level}; Berkman et al.~\cite{berkman1994finding}; Dietz~\cite{dietz1991finding}]
    \label{lemma:constant_time_level_ancestor_queries}
    There exists a linear-space data structure supporting level ancestor queries on trees in $O(1)$ time, requiring $O(n)$ preprocessing time.
\end{lemma}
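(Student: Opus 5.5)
This is a cited standard result, so the plan is to reconstruct the classical ladder-decomposition argument of Bender and Farach-Colton rather than anything specific to this paper. The query $\mathrm{LA}(v,d)$ asks for the ancestor of $v$ at depth $d$. We combine two ingredients: jump pointers and ladders. A first step removes the extra logarithmic space that jump pointers alone would cost, using micro-trees. Throughout, we work in the word-RAM with $w=\Omega(\log n)$, so table lookups on $O(\log n)$-bit indices take constant time.

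\textbf{Ladders.} First compute depths and a long-path decomposition: each node's heavy child is the one with the deepest subtree. Each resulting path of length $h$ is extended upward by $h$ further ancestors into a ladder of length at most $2h$. The ladder is stored as an array indexed by depth. The total size is at most twice the sum of path lengths, i.e.\ $O(n)$, and everything is built by one DFS. The key fact to verify is the following: if a node $x$ has height at least $h$, then the ladder of the long path containing $x$ reaches at least $h$ levels above $x$. This holds because that long path has length at least $h$.

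\textbf{Jump pointers, only at macro leaves.} Call a node a jump node if it has at least $\lceil \tfrac14\log n\rceil$ descendants but none of its children does. There are $O(n/\log n)$ jump nodes, since their subtrees are disjoint. For each jump node, store its $2^k$-th ancestors for all $k$. This costs $O(n)$ space and, using ladders to fill them, $O(n)$ time. Every node outside the micro-trees has some jump-node descendant; store a pointer to one.

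To answer a query at such a node $v$ with distance $\delta=\mathrm{depth}(v)-d$:
\begin{enumerate}
\item Move to the jump descendant $j$ and set $\delta'$ to the distance from $j$ to the target.
\item Jump $2^{\lfloor\log\delta'\rfloor}$ levels up from $j$, arriving at a node $x$.
\item Node $x$ has height at least $2^{\lfloor\log\delta'\rfloor}\ge\delta'/2$, so the ladder through $x$ covers the remaining distance. Finish with one array access.
\end{enumerate}
Computing $\lfloor\log\delta'\rfloor$ is constant time via a table or a most-significant-bit operation.

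\textbf{Micro-trees.} The remaining components have fewer than $\tfrac14\log n$ nodes. Each is encoded by its shape as a balanced-parentheses string of $O(\log n)$ bits, giving $O(\sqrt n)$ distinct shapes. For every shape, precompute a table of all in-shape level-ancestor answers, using $o(n)$ total space. A query inside a micro-tree either resolves with a single table lookup, or climbs to the micro-tree root's parent, a macro node, and then uses the scheme above.

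I expect the main obstacle to be the micro-tree tabulation bookkeeping, which requires mapping nodes to local indices and keeping the tables within linear total size. The ladder-height argument is the conceptual core but is short.
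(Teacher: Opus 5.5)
Your reconstruction is correct: it is the Bender--Farach-Colton construction (long-path ladders, jump pointers stored only at the $O(n/\log n)$ macro leaves, and tabulated micro-trees of fewer than $\tfrac14\log n$ nodes), which is one of the sources the paper cites for this lemma. The paper gives no proof of its own and only invokes this standard result, so your argument is consistent with what it relies on and there is nothing further to compare.
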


\begin{lemma}[Bender and Farach-Colton~\cite{bender2000lca}; Berkman and Vishkin~\cite{berkman1993recursive}]
    \label{lemma:constant_time_lca_query}
    There exists a linear-space data structure supporting lowest common ancestor (LCA) queries on trees in $O(1)$ time, requiring $O(n)$ preprocessing time.
\end{lemma}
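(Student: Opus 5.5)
The plan is the standard Euler-tour reduction to range minimum queries, organized as in Bender and Farach-Colton. The first step is a DFS from the root (node $1$). It records:
\begin{itemize}
\item the Euler tour $E$ of length $2n-1$;
\item the depth array $D$, where $D[k]$ is the depth of $E[k]$;
\item for each node $u$, an index $\mathrm{first}(u)$ of its first occurrence in $E$.
\end{itemize}
All of this takes $O(n)$ time and space. The key structural fact is that for nodes $u,v$ with $\mathrm{first}(u)\le\mathrm{first}(v)$, the subarray $E[\mathrm{first}(u)\dots\mathrm{first}(v)]$ is the part of the walk that leaves $u$ and reaches $v$. This walk never leaves the subtree of $\mathrm{lca}(u,v)$ and visits $\mathrm{lca}(u,v)$ itself. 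Hence the position of minimum depth in $D[\mathrm{first}(u)\dots\mathrm{first}(v)]$ holds $\mathrm{lca}(u,v)$. I would prove this by a short induction on the DFS, using the fact that consecutive tour entries are parent and child.

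The second step is to answer range minimum queries on $D$ in $O(1)$ time with $O(n)$ space and preprocessing. I would use the $\pm 1$ property: adjacent entries of $D$ differ by exactly one. Split $D$ into blocks of size $b=\lceil \tfrac12\log n\rceil$. Then:
\begin{itemize}
\item Store the minimum of each block in an array $B$ of length $O(n/\log n)$.
\item Build a sparse table on $B$. It stores minima of all dyadic-length ranges and uses $O((n/\log n)\log n)=O(n)$ space. A query covers its range with two overlapping dyadic intervals, and $\lfloor\log_2\rfloor$ of the length is obtained from a precomputed table.
\item For in-block queries, normalize each block by subtracting its first entry. The block is then determined by a $(b-1)$-bit vector of $\pm1$ steps, so there are at most $2^{b-1}=O(\sqrt n)$ distinct block types. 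For every type, precompute answers to all $O(b^2)$ in-block queries. This costs $O(\sqrt n\log^2 n)=o(n)$ space and time. Each block stores its type index.
\end{itemize}
A general query $[i,j]$ is answered in $O(1)$ time by combining three candidates: a suffix query in the block of $i$, the sparse-table query on the full blocks strictly between, and a prefix query in the block of $j$.

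The main obstacle is the correctness of the Euler-tour reduction rather than the RMQ machinery. I must argue that no node of depth at most $\mathrm{depth}(\mathrm{lca})$ other than the lca appears in the range, and that the lca does appear. The first claim holds because the DFS enters the subtree of the lca before $\mathrm{first}(u)$ and exits it only after $\mathrm{first}(v)$. For the second, if $u$ is an ancestor of $v$ then the endpoint $u$ is itself the lca. Otherwise the walk must return from the child subtree containing $u$ to the lca before descending into the child subtree containing $v$. Everything else is bookkeeping within the word-RAM model with $w=\Omega(\log n)$, which lets block type indices and table entries fit in single words.
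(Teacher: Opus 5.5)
Your proposal is correct. It is exactly the Euler-tour plus $\pm 1$-RMQ construction of Bender and Farach-Colton, which the paper only cites and does not prove. One bookkeeping detail to make explicit is that the block array, the sparse table and the in-block tables should store positions of minima rather than values, so that a query returns the node $E[k]$ and not just its depth.
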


\subsection{Virtual Trees}
\label{subsection:prerequisites:subsubsection:intermediary_results_on_colored_trees}

We define \textbf{virtual trees}, which are essential for our data structure.

\begin{definition}
    Given a colored tree $\mathcal{T}$ and a color $c$, the \textbf{virtual tree of color $c$}, denoted $\mathcal{V}_c$, is a rooted tree defined over the set of nodes $\{v \in \mathcal{T} \mid \text{color}(v) = c\}$.
    
    The structure of $\mathcal{V}_c$ is defined by the following parent relationship: for any node $v \in \mathcal{V}_c$, the parent of $v$ in $\mathcal{V}_c$ is the nearest proper ancestor of $v$ in $\mathcal{T}$ that also has color $c$.
\end{definition}

To ensure $\mathcal{V}_c$ is a single connected tree (rather than a forest), we introduce a virtual root node $r_c$ for each color $c$. If a node $v$ of color $c$ has no ancestors of color $c$ in $\mathcal{T}$, we set its parent in $\mathcal{V}_c$ to be $r_c$.

It is possible to construct all virtual trees $\mathcal{V}_c$ for all colors $c$ in $O(n)$ total time using a single depth-first traversal of $\mathcal{T}$. Since the node sets for each color form a partition of the colored nodes in $\mathcal{T}$, the total space required is linear, $O(n)$. We defer the algorithm for computing all the virtual trees $\mathcal{V}_c$ to appendix~\ref{appendix:virtual_trees}. 

By applying standard LCA and Level Ancestor structures (Lemmas~\ref{lemma:constant_time_level_ancestor_queries} and \ref{lemma:constant_time_lca_query}) to these virtual trees, we obtain the following corollary:

\begin{corollary}
    \label{corollary:virtual_tree_queries}
    There exists a linear-space data structure, constructed in $O(n)$ time, that answers the following queries in $O(1)$ time:
    \begin{enumerate}
        \item Given two nodes $u, v \in \mathcal{T}$ of the same color $c$, determine the lowest common ancestor of $u$ and $v$ \textbf{with respect to color} $c$ (i.e., the LCA of $u$ and $v$ in $\mathcal{V}_c$).
        \item Given a node $u$ of color $c$ and an integer $k$, determine the $k$-th ancestor of $u$ \textbf{of color} $c$ (i.e., the $k$-th ancestor of $u$ in $\mathcal{V}_c$).
    \end{enumerate}
\end{corollary}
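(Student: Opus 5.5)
The plan is to build the virtual trees $\mathcal{V}_c$ for all colors, as described above and in Appendix~\ref{appendix:virtual_trees}, and then combine them into a single forest. The LCA and level-ancestor structures of Lemmas~\ref{lemma:constant_time_lca_query} and~\ref{lemma:constant_time_level_ancestor_queries} are then built once on this combined structure rather than separately per color. Two facts give linearity. First, the node sets of the $\mathcal{V}_c$ partition the nodes of $\mathcal{T}$. Second, we add at most one virtual root $r_c$ per color. So the total number of nodes over all virtual trees is at most $n + (\text{number of colors}) \le 2n$. The construction takes $O(n)$ time with one DFS of $\mathcal{T}$. During the DFS we keep, for each color $c$, a stack of the currently open $c$-colored ancestors. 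When we enter a node $v$ of color $c$, its parent in $\mathcal{V}_c$ is the top of stack $c$, or $r_c$ if that stack is empty. We then push $v$, and pop it on exit.

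Next, join all roots $r_c$ as children of one super-root $R$, giving a single tree $\mathcal{F}$ with at most $2n+1$ nodes. Building the structures of Lemmas~\ref{lemma:constant_time_lca_query} and~\ref{lemma:constant_time_level_ancestor_queries} on $\mathcal{F}$ takes $O(n)$ time and space. This avoids any per-tree overhead, although that overhead would also be fine since it sums to $O(\sum_c |\mathcal{V}_c|)$. We also store, for each original node $v$, a pointer to its copy in $\mathcal{F}$, and its depth $d(v)$ in $\mathcal{F}$.

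Each query is then one constant-time call on $\mathcal{F}$.

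\textbf{LCA query.} For $u,v$ of color $c$, both lie in the subtree of $r_c$. Hence their LCA in $\mathcal{F}$ equals their LCA in $\mathcal{V}_c$, and we return it with one LCA query on $\mathcal{F}$. The answer may be $r_c$, meaning the two nodes have no common $c$-colored ancestor in $\mathcal{T}$. The caller can detect this case by checking whether the returned node is virtual.

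\textbf{Level-ancestor query.} For $u$ of color $c$ and an integer $k$, the $k$-th ancestor of $u$ in $\mathcal{V}_c$ is the node at depth $d(u)-k$ on the root path of $u$ in $\mathcal{F}$. This is a single level-ancestor query. We require $k \le d(u)-2$ for the answer to be a real $c$-colored node; $k = d(u)-1$ yields $r_c$.

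The only step needing care is showing that ancestor relations in $\mathcal{F}$, restricted to the subtree of $r_c$, coincide with those in $\mathcal{V}_c$. This holds because $r_c$'s subtree in $\mathcal{F}$ is exactly $\mathcal{V}_c$ and attaching $R$ above it changes nothing below $r_c$. With that observation the corollary follows directly.
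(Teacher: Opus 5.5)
Your argument is correct, but it makes a different structural choice from the paper. The paper builds a separate instance of the structures from Lemmas~\ref{lemma:constant_time_level_ancestor_queries} and~\ref{lemma:constant_time_lca_query} on each $\mathcal{V}_c$. It bounds each instance by $O(n_c)$, sums to $O(n)$, and answers a query by consulting the instance for the query nodes' color. You instead hang every $r_c$ below a super-root $R$ and build one instance on the combined tree $\mathcal{F}$ of at most $2n+1$ nodes, shifting level-ancestor queries by the stored depth $d(u)$.

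Your version uses the two lemmas purely as black boxes on a single tree. It needs no argument that an instance on an $n_c$-node tree costs $O(n_c)$ with no additive per-instance term. That matters when there are $\Theta(n)$ colors with tiny virtual trees, for example if an implementation sized its lookup tables in terms of $n$. It also needs no per-color re-indexing of nodes. The price is the bookkeeping you already do: observing that attaching $R$ leaves ancestor relations below $r_c$ unchanged, and reading an answer equal to $r_c$ (equivalently $k = d(u)-1$) as ``no $c$-colored ancestor.'' The paper's per-color formulation is shorter to state; yours is slightly more robust and equally efficient.
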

\begin{proof}
    For each distinct color $c$ present in $\mathcal{T}$, we explicitly construct the virtual tree $\mathcal{V}_{c}$. Over each $\mathcal{V}_{c}$, we build instances of the data structures described in Lemmas~\ref{lemma:constant_time_level_ancestor_queries} and~\ref{lemma:constant_time_lca_query}.
    
    Let $n_c = |V(\mathcal{V}_c)|$. The construction and preprocessing for a specific color $c$ takes $O(n_c)$ time and space. Summing over all colors, the total preprocessing time is $\sum_{c} O(n_c) = O(n)$, and the total space is similarly $\sum_{c} O(n_c) = O(n)$. The queries are answered by querying the specific structure for $\mathcal{V}_c$ in $O(1)$ time.
\end{proof}

\section{Blocking Technique}
\label{section:blocking_technique}

Krizanc et al.~\cite{krizanc2005range}, Chan et al.~\cite{chan2014linear, chan2012linear}, and Durocher et al.~\cite{durocher2016linear} successfully employed blocking techniques to solve non-trivial frequency queries on arrays and trees. We adapt their framework to develop a multi-level blocking strategy, allowing us to achieve the desired query time for path maximum $g$-value color queries on trees.

\begin{lemma}[Durocher et al.~\cite{durocher2016linear}]
\label{lemma:durocher_blocking}
Let $\mathcal{T}$ be a tree with $n$ nodes, and let $t < n$ be an integer parameter (the \textit{blocking factor}). There exists a subset of marked nodes $M_t \subseteq V(\mathcal{T})$ and an associated rooted tree $\mathcal{T}_t$ with vertex set $M_t$, satisfying the following properties:
\begin{enumerate}
    \item \textbf{Size:} $|M_t| = O(n/t)$;
    \item \textbf{LCA Closure:} $M_t$ is closed under the lowest common ancestor operation (i.e., for any $u, v \in M_t$, the lowest common ancestor of $u$ and $v$ in $\mathcal{T}$ is also in $M_t$);
    \item \textbf{Gap Size:} Any simple path in $\mathcal{T}$ consisting entirely of unmarked nodes has length at most $t$.
\end{enumerate}
Two nodes $u, v \in M_t$ are adjacent in $\mathcal{T}_t$ if and only if the simple path between them in $\mathcal{T}$ contains no other nodes from $M_t$. The set $M_t$ and the tree $\mathcal{T}_t$ can be constructed in $O(n \log n)$ time.
\end{lemma}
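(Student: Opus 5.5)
We would prove the lemma with a greedy bottom-up marking rule, then close the marked set under LCA, and finally take $\mathcal{T}_t$ to be the tree induced on $M_t$ by the ancestor relation.

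\textbf{Marking rule.} Process $\mathcal{T}$ in post-order. For each node $v$, maintain $h(v)$, the length (in nodes) of the longest downward path that starts at $v$ and consists only of unmarked nodes. Set $h(v)=0$ if $v$ is marked; otherwise $h(v)=1+\max_{\text{children } x} h(x)$, with the maximum taken as $0$ for a leaf. Mark $v$ if this value would reach $\lceil t/2\rceil$, or if $v$ is the root.

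\textbf{Gap size.} After this rule, every downward unmarked path has fewer than $\lceil t/2\rceil$ nodes. Any simple path of unmarked nodes has a topmost node $z$ and splits into at most two downward unmarked paths leaving $z$. Its length is therefore below $t$, which gives Property~3. Adding nodes in the closure step below can only shorten unmarked paths, so the property survives.

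\textbf{Size of the marked set.} This is the step I expect to be the main obstacle, because a single node of large degree could be marked while owning little, so a naive per-node charge fails. I would use a counting argument instead:
\begin{itemize}
\item Each marked non-root node $v$ has at least one unmarked downward path below it of $\lceil t/2\rceil-1$ nodes.
\item I would charge $v$ to the unmarked nodes on a longest such path. These paths are disjoint for different marked nodes, because each unmarked node lies below exactly one nearest marked ancestor.
\item Hence the number of greedily marked nodes is at most $n/(\lceil t/2\rceil-1)+1=O(n/t)$.
\end{itemize}

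\textbf{LCA closure.} Let $M'$ be the greedy set. Add all pairwise LCAs of nodes in $M'$. It is standard that this adds at most $|M'|-1$ nodes: the new nodes are exactly those with at least two children whose subtrees contain nodes of $M'$, and such nodes correspond to internal nodes of a tree with at most $|M'|$ leaves. The resulting set $M_t$ is closed under LCA and still has size $O(n/t)$, giving Properties~1 and~2.

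\textbf{The tree $\mathcal{T}_t$ and adjacency.} Define the parent of $u\in M_t$ to be its nearest proper marked ancestor; the root is marked, so this is well defined. We must check that $u,v\in M_t$ are adjacent in this parent relation exactly when the $\mathcal{T}$-path between them contains no other marked node.
\begin{itemize}
\item Suppose the path between $u$ and $v$ has no other marked node. Their LCA lies on that path and is marked by closure. So the LCA equals $u$ or $v$, and the pair is ancestor/descendant with nothing marked between them. That is exactly a parent--child pair.
\item The converse is immediate from the definition of the parent.
\end{itemize}

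\textbf{Running time.} The greedy pass takes $O(n)$. For the closure, I would list the marked nodes in DFS order and take the LCAs of consecutive nodes, using Lemma~\ref{lemma:constant_time_lca_query}. This consecutive-pair set is the full pairwise closure, which is the standard virtual-tree fact. Building $\mathcal{T}_t$ by a stack walk over the DFS order is linear after sorting. The total is well within $O(n\log n)$.
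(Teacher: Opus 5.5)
Your proof is correct, but it uses a different marking rule from the paper. The paper (Appendix~\ref{appendix:blocking_results_proofs}) follows Durocher et al.'s subtree-size rules. It marks each node of size at least $t$ whose children all have size below $t$; these have disjoint subtrees, which gives the $O(n/t)$ count. It then closes under LCA via a pre-order-sorted auxiliary tree, and repeatedly marks the lowest unmarked node whose size exceeds a descendant's by at least $t$, using a priority queue. It credits the structural properties to Durocher et al.\ and argues only the $O(n\log n)$ running time.

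Your approach is a height-based greedy with threshold $\lceil t/2\rceil$. You charge each greedily marked node to a private downward unmarked path of $\lceil t/2\rceil-1$ nodes, and observe that LCA closure at most doubles the set. This gives a self-contained proof of all three properties and of the adjacency characterization of $\mathcal{T}_t$, in $O(n)$ time after LCA preprocessing. Only $t\le 2$, where your charging divides by zero, needs a remark; there every node is marked, so $|M_t|=n=O(n/t)$.

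What the size-based rule buys is control over how \emph{many} nodes lie between consecutive marked nodes, not just over path lengths. On a complete binary tree, your $M_t$ (even after closure) consists only of the $O(n/2^{t/2})$ nodes of height at least $\lceil t/2\rceil$. That leaves $2^{\Theta(t)}$ unmarked nodes below each lowest marked node. The paper's rules instead mark everything of height at least about $\log t$, so these regions have $O(t)$ nodes. The block construction behind Lemma~\ref{lemma:durocher_real_blocking} forms its blocks from exactly these unmarked regions and claims size $O(t)$. So your set proves the present lemma as stated, but would not be a suitable input to that next step.
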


\begin{lemma}[Durocher et al.~\cite{durocher2016linear}]
\label{lemma:durocher_real_blocking}
    Given a tree $\mathcal{T}$ with $n$ nodes, an integer $t < n$, and a set of marked nodes $M_t$ (obtained according to Lemma~\ref{lemma:durocher_blocking}), there exists a partition of the vertex set $V(\mathcal{T})$, denoted by $\mathcal{B}_t = \{B_1, \dots, B_k\}$, satisfying the following properties:
    \begin{enumerate}
        \item \textbf{Block Count:} The number of blocks is $k = O(n/t)$;
        \item \textbf{Block Size:} For each block $B \in \mathcal{B}_t$, the number of nodes is $|B| = O(t)$;
        \item \textbf{Connectivity:} For each block $B \in \mathcal{B}_t$, the subgraph of $\mathcal{T}$ induced by the nodes in $B$ is connected.
    \end{enumerate}
    The partition $\mathcal{B}_t$ can be computed in $O(n \log n)$ time.
\end{lemma}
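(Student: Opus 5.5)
Given $M_t$ from Lemma~\ref{lemma:durocher_blocking}, I will build the partition $\mathcal{B}_t$ by removing all marked nodes and grouping what remains. Deleting $M_t$ from $\mathcal{T}$ leaves a forest of unmarked components. Each component $C$ has a unique topmost node, and the parent of that node is either marked or absent. Because $M_t$ is LCA-closed, $C$ is adjacent to at most one marked node below it. If two marked nodes $x, y$ hung below $C$, then $\mathrm{lca}(x,y)$ would lie in $C$ or be the marked parent above $C$; the first is impossible since $\mathrm{lca}(x,y)$ is marked, and the second is ruled out by a connectivity argument. So each component corresponds to an edge of $\mathcal{T}_t$ (plus hanging subtrees) or to a subtree below a marked node.

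A component can still be large. The Gap Size property bounds only the length of its paths by $t$, not its size, for example a star of unmarked leaves. So the key step is to cut components into connected pieces of size $O(t)$ without creating too many pieces. I would use the standard greedy tree-partitioning procedure: process each component bottom-up, accumulating the sizes of pending subtrees at each node. Whenever a node's accumulated pending size reaches $t$, it closes the children's pending groups into blocks of size in $[t,2t)$. Each such block is made connected by including the node itself, or by grouping consecutive children under that node. Each marked node becomes part of a block too, for instance as a singleton block. This adds $|M_t| = O(n/t)$ blocks.

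The counting argument has two parts. Every block formed by reaching the threshold has size at least $t$, so there are at most $n/t$ of them. The leftover small blocks, one per component at most, number at most the number of components. Since each component hangs off a distinct marked node or edge of $\mathcal{T}_t$, there are $O(|M_t|) = O(n/t)$ components. Hanging subtrees off a single marked node should be merged together before cutting, so that each marked node contributes $O(1)$ leftovers. The total count is therefore $O(n/t)$, and every block has size $O(t)$ and is connected by construction.

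The main obstacle is this charging of leftover blocks. Many small unmarked subtrees can hang off the same marked node or off different nodes of one path between marked nodes. I expect to handle this by treating all unmarked nodes of the region between consecutive marked nodes as one unit, including the marked node's pendant subtrees, and letting the greedy procedure create at most one undersized block per such unit.

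The procedure is a linear DFS, and sorting children by size when grouping consecutive children costs $O(n\log n)$. Together with the construction in Lemma~\ref{lemma:durocher_blocking}, this gives the stated $O(n\log n)$ time.
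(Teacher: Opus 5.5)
Your reduction to the components of $\mathcal{T}\setminus M_t$ is fine, and so is the LCA-closure argument that at most one marked node hangs below each component. The proof fails at exactly the step you flag as the main obstacle, and that step cannot be repaired. Sibling subtrees are adjacent only through their common parent. So ``grouping consecutive children under that node'' gives a disconnected block unless the parent is in it, and the parent can be in only one block. The same holds when you merge the pendant subtrees of a marked node: they are pairwise non-adjacent. At most one group of them can join the marked node's block, and every other pendant subtree must be a block by itself. The leftover blocks therefore cannot be charged to marked nodes or to edges of $\mathcal{T}_t$. Your claim that there are $O(|M_t|)$ components already fails for a star.

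In fact the three properties cannot all be achieved for high-degree trees, so no proof of the statement as written exists. Take the star $K_{1,n-1}$ with $t=\sqrt{n}$. The marking rules give $M_t=\{\text{center}\}$, which satisfies Lemma~\ref{lemma:durocher_blocking}. In any partition into connected blocks, every block avoiding the center is a single leaf. A center block of size $O(t)$ therefore forces $n-O(t)=\Theta(n)$ blocks, which is not $O(n/t)$.

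The paper's own proof does not get around this. It uses Durocher et al.'s blocks: for each marked $u$, the unmarked region reached through $u$'s parent edge, and for each marked $w$ without marked descendants, the subtree of $w$. This makes the count $O(|M_t|)$ by construction. But it argues only the $O(n\log n)$ running time and defers the size bound, and on the star its block for the center is the whole tree.

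What does hold is a bounded-degree version. If every node has at most $d$ children, your bottom-up greedy works provided each closed block consists of the current node together with \emph{all} of its pending child groups. This gives connected blocks of size between $t$ and $O(dt)$, hence $O(n/t)$ of them. So the lemma needs either a degree bound (e.g.\ binarizing $\mathcal{T}$ with dummy nodes first) or a weaker notion of block.
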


We defer the full details of the proofs of Lemmas~\ref{lemma:durocher_blocking} and~\ref{lemma:durocher_real_blocking} to Appendix~\ref{appendix:blocking_results_proofs}. Further, for our approach, we require a multilevel partition of the nodes of $\mathcal{T}$. From Lemma~\ref{lemma:durocher_real_blocking}, we derive the following corollary, which formalizes the construction of a hierarchical partition.

\begin{corollary}
\label{corollary:multi_level_blocking}
Given a colored tree $\mathcal{T}$, two blocking factors $t_1 < t_2$, and the set of marked nodes $M_{t_1}$ with its associated tree $\mathcal{T}_{t_1}$ (obtained according to Lemma~\ref{lemma:durocher_blocking}), there exists a subset $M_{t_2} \subseteq M_{t_1}$ and a partition $\mathcal{B}_{t_2/t_1}$ of $M_{t_1}$, and a tree $\mathcal{T}_{t_2}$.

These can be computed in $O(|M_{t_1}| \log |M_{t_1}|)$ time and satisfy the following properties:
\begin{enumerate}
    \item \textbf{Consistency:} The set $M_{t_2}$ is a valid set of marked nodes for $\mathcal{T}$ with blocking factor $t_2$ (satisfying all properties of Lemma~\ref{lemma:durocher_blocking}).
    \item \textbf{Relative Gap:} Any simple path in the compressed tree $\mathcal{T}_{t_1}$ consisting entirely of nodes from $M_{t_1} \setminus M_{t_2}$ has length $O(t_2/t_1)$.
    \item \textbf{Partition Structure:} The partition $\mathcal{B}_{t_2/t_1}$ divides the nodes of $M_{t_1}$ into $O(|M_{t_1}| \cdot \frac{t_1}{t_2})$ blocks, each of size $O(t_2/t_1)$, such that the subgraph of $\mathcal{T}_{t_1}$ induced by each block is connected.
\end{enumerate}
\end{corollary}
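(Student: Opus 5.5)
We obtain the corollary by applying Lemmas~\ref{lemma:durocher_blocking} and~\ref{lemma:durocher_real_blocking} a second time, now to the compressed tree $\mathcal{T}_{t_1}$ instead of to $\mathcal{T}$, with blocking factor $t' = \lceil t_2/t_1\rceil$. Let $m = |M_{t_1}| = O(n/t_1)$. Lemma~\ref{lemma:durocher_blocking} applied to $\mathcal{T}_{t_1}$ with parameter $t'$ gives a marked set $M' \subseteq M_{t_1}$ of size $O(m/t')$. It is LCA-closed in $\mathcal{T}_{t_1}$, and every path of $\mathcal{T}_{t_1}$ avoiding $M'$ has length at most $t'$. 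We set $M_{t_2} := M'$. The tree $\mathcal{T}_{t_2}$ is the compressed tree on $M_{t_2}$ built by the adjacency rule of Lemma~\ref{lemma:durocher_blocking}. Lemma~\ref{lemma:durocher_real_blocking} applied to $\mathcal{T}_{t_1}$ with $M'$ gives the partition $\mathcal{B}_{t_2/t_1}$ of $M_{t_1}$ into $O(m/t') = O(m t_1/t_2)$ connected blocks of size $O(t')$. Both constructions run on an $m$-node tree, so the time bound is $O(m\log m)$. Property~2 (Relative Gap) and Property~3 (Partition Structure) then follow directly from the two lemmas.

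Property~1 (Consistency) requires checking the three conditions of Lemma~\ref{lemma:durocher_blocking} for $M_{t_2}$ as a subset of $V(\mathcal{T})$.

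\textbf{Size.} $|M_{t_2}| = O(m t_1/t_2) = O(n/t_2)$.

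\textbf{LCA closure.} We use that the ancestor relation on $M_{t_1}$ is the same in $\mathcal{T}$ and in $\mathcal{T}_{t_1}$, when $\mathcal{T}_{t_1}$ is rooted at the $\mathcal{T}$-LCA of all of $M_{t_1}$. This holds because $M_{t_1}$ is LCA-closed, so two nodes of $M_{t_1}$ are adjacent in $\mathcal{T}_{t_1}$ exactly when one is the nearest marked ancestor of the other. For $u,v \in M_{t_2}$, the node $x = \mathrm{lca}_{\mathcal{T}}(u,v)$ lies in $M_{t_1}$. It is a common ancestor of $u$ and $v$ in $\mathcal{T}_{t_1}$, and any lower common ancestor in $\mathcal{T}_{t_1}$ would also be a lower common ancestor in $\mathcal{T}$. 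Hence $x = \mathrm{lca}_{\mathcal{T}_{t_1}}(u,v) \in M'$.

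\textbf{Gap size.} This is the main obstacle. Take a simple path $P$ in $\mathcal{T}$ that contains no node of $M_{t_2}$. The nodes of $M_{t_1}$ on $P$ form a path $P'$ in $\mathcal{T}_{t_1}$. This uses LCA closure: the portion of $P$ between consecutive marked nodes has no marked nodes, so those two nodes are adjacent in $\mathcal{T}_{t_1}$. The path $P'$ avoids $M'$, so it has at most $t'+1$ nodes. Consequently $P$ is cut by these marked nodes into at most $t'+2$ segments of unmarked nodes, and each segment has length at most $t_1$. The total length of $P$ is therefore $O(t' t_1) = O(t_2)$.

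This bound is $O(t_2)$ rather than exactly $t_2$. To get the exact bound, we choose $t' = \lfloor t_2/(c\,t_1)\rfloor$ for a suitable constant $c$, or equivalently interpret Property~3 of Lemma~\ref{lemma:durocher_blocking} up to constant factors. This changes none of the other asymptotic bounds.

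One further detail needs care: whether $\mathcal{T}_{t_1}$ is rooted consistently. If the root of $\mathcal{T}$ is not in $M_{t_1}$, we add it to $M_{t_1}$ beforehand. This keeps $M_{t_1}$ LCA-closed and does not affect any of the bounds.
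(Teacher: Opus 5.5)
Your proposal is correct and follows the paper's route. Like the paper, you apply Lemmas~\ref{lemma:durocher_blocking} and~\ref{lemma:durocher_real_blocking} to the compressed tree $\mathcal{T}_{t_1}$ with blocking factor $t_2/t_1$, so Properties~2 and~3 and the $O(|M_{t_1}|\log|M_{t_1}|)$ time bound are inherited directly. Your treatment of Consistency is more complete than the paper's, which only asserts that LCA closure transfers and never checks the gap condition in $\mathcal{T}$: you show that the ancestor relation on $M_{t_1}$ is the same in $\mathcal{T}$ and $\mathcal{T}_{t_1}$, and you bound unmarked paths of $\mathcal{T}$ by $O(t' t_1)=O(t_2)$, absorbing the constant by rescaling $t'$.
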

\begin{proof}
    We apply the construction procedures from Lemma~\ref{lemma:durocher_blocking} and Lemma~\ref{lemma:durocher_real_blocking} directly to the compressed tree $\mathcal{T}_{t_1}$, treating it as the input tree with a blocking factor of $t' = t_2/t_1$.
    Since $|V(\mathcal{T}_{t_1})| = |M_{t_1}|$, the construction requires $O(|M_{t_1}| \log |M_{t_1}|)$ time. The resulting marked nodes $M_{t_2}$ and partition $\mathcal{B}_{t_2/t_1}$ satisfy the size and connectivity properties with respect to $\mathcal{T}_{t_1}$ by definition. Furthermore, since $M_{t_1}$ preserves the LCA closure property of $\mathcal{T}$, and $M_{t_2}$ is constructed to preserve LCA closure on $\mathcal{T}_{t_1}$, $M_{t_2}$ maintains the necessary topological properties for $\mathcal{T}$.
\end{proof}

Crucially, the partition $\mathcal{B}_{t_2/t_1}$ of $M_{t_1}$ implicitly defines a coarser partition $\mathcal{B}_{t_2}$ of the original vertex set $V(\mathcal{T})$. A high-level block $B \in \mathcal{B}_{t_2}$ consists of a set of connected low-level blocks from $\mathcal{B}_{t_1}$. To enumerate the nodes of a high-level block $B \in \mathcal{B}_{t_2}$, we identify the corresponding block of marked nodes $B' \in \mathcal{B}_{t_2/t_1}$ (where $B' \subseteq M_{t_1}$) and iterate through the low-level block $B''_u \in \mathcal{B}_{t_1}$ associated with each $u \in B'$.

\begin{corollary}
    \label{corollary:block_trees}
    Given a colored tree $\mathcal{T}$, two blocking factors $t_1<t_2$, and the respective partitions into blocks computed according to Lemma~\ref{lemma:durocher_real_blocking} and Corollary~\ref{corollary:multi_level_blocking}, we can compute in $O(|M_{t_1}|)$ time, for each block $B_2\in \mathcal{B}_{t_2/t_1}$, a tree $\mathcal{T}_{t_2/t_1}[B_2]$, which is the subtree of $T_{t_1}$ induced by the nodes in $B_{2}$ (which are also present in $M_{t_1}$).
\end{corollary}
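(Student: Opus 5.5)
The plan is to build all the block trees at once with a single traversal of $\mathcal{T}_{t_1}$, followed by linear-time bucketing. Each node $u \in M_{t_1}$ receives a label $\beta(u)$ identifying the block $B_2 \in \mathcal{B}_{t_2/t_1}$ that contains it. These labels come from the partition of Corollary~\ref{corollary:multi_level_blocking} and can be stored in an array indexed by the nodes of $M_{t_1}$ in $O(|M_{t_1}|)$ time. We root $\mathcal{T}_{t_1}$ consistently with $\mathcal{T}$; its nodes are closed under LCA by Lemma~\ref{lemma:durocher_blocking}. Every node $u \in M_{t_1}$ other than the root then has a unique parent $p(u)$ in $\mathcal{T}_{t_1}$, and we obtain all parent pointers with one DFS in $O(|M_{t_1}|)$ time.

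We define $\mathcal{T}_{t_2/t_1}[B_2]$ as the subgraph of $\mathcal{T}_{t_1}$ induced by $B_2$. Its edge set is exactly the set of tree edges $(u, p(u))$ with $\beta(u) = \beta(p(u)) = B_2$. By the connectivity guarantee in Corollary~\ref{corollary:multi_level_blocking} (property~3), this induced subgraph is connected. It is therefore a subtree of $\mathcal{T}_{t_1}$, and we may root it at its unique node $r$ whose parent is missing or lies outside $B_2$. This node is unique because a connected subset of a rooted tree has exactly one topmost node.

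The algorithm proceeds in three steps.
\begin{enumerate}
\item Scan all nodes $u$ of $\mathcal{T}_{t_1}$. If $p(u)$ exists and $\beta(p(u)) = \beta(u)$, append $u$ to the child list of $p(u)$ in the block tree. Otherwise, record $u$ as the root of block $\beta(u)$.
\item Bucket the nodes by $\beta$, using counting sort over the $O(|M_{t_1}| t_1/t_2)$ block identifiers, to obtain the node list of each block.
\item Optionally, relabel the nodes of each block with local indices $0, \dots, |B_2|-1$ through a per-node array, so that each block tree can be stored compactly.
\end{enumerate}
Every step is a constant amount of work per node or per edge of $\mathcal{T}_{t_1}$, and the number of block identifiers is at most $|M_{t_1}|$. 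The total time is therefore $O(|M_{t_1}|)$, and the total space is linear because the blocks partition $M_{t_1}$.

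The only point needing care is that the partition of Corollary~\ref{corollary:multi_level_blocking} is given as sets of nodes rather than as a membership array. If it is given as lists of nodes, we convert it to labels $\beta$ in time linear in the total list length, which is $|M_{t_1}|$. The rest is routine. I expect the main (mild) obstacle to be justifying that the induced subgraph is a tree with a well-defined root, and this follows from the connectivity property, as argued above.
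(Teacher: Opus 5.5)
Your proposal is correct and follows essentially the same approach as the paper: scan the edges of $\mathcal{T}_{t_1}$ once, keep each edge whose two endpoints lie in the same block of $\mathcal{B}_{t_2/t_1}$, and invoke the connectivity property (Property~3 of Corollary~\ref{corollary:multi_level_blocking}) to conclude that each induced subgraph is a tree. Your block labels $\beta$, parent pointers, root identification and counting-sort bucketing are implementation details that the paper leaves implicit, and they keep the total time within $O(|M_{t_1}|)$.
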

\begin{proof}
    By Property 3 of Corollary~\ref{corollary:multi_level_blocking}, the subgraph of $\mathcal{T}_{t_1}$ induced by any block $B \in \mathcal{B}_{t_2/t_1}$ is connected. Since $\mathcal{T}_{t_1}$ is a tree, any connected subgraph is also a tree.
    To construct these explicit tree representations, we iterate through the edges of $\mathcal{T}_{t_1}$. For each edge $(u, v) \in E(\mathcal{T}_{t_1})$, if both $u$ and $v$ belong to the same block $B \in \mathcal{B}_{t_2/t_1}$, we add the edge to $\mathcal{T}_{t_2/t_1}[B]$. This traversal visits every edge exactly once, resulting in a total runtime of $O(|V(\mathcal{T}_{t_1})|) = O(|M_{t_1}|)$.
\end{proof}
We refer to $\mathcal{T}_{t_2/t_1}[B_2]$ as the \textbf{block tree} of $B_2$. We depict an example of a $2$-level block partition in Figure~\ref{fig:block_hierarchy}.

\begin{figure}[H]
    \centering
    \includegraphics[width=0.3\linewidth]{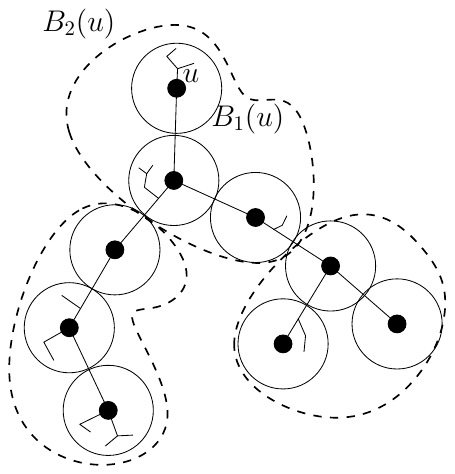}
    \caption{A $2$-level partition of a tree. The representatives of a block are depicted with black dots. $t_1$-blocks are depicted by circles, and $t_2$-blocks by dashed splinoids. An example representative node $u$ is chosen, and its respective blocks $B_1(u)$ and $B_2(u)$ are depicted.}
    \label{fig:block_hierarchy}
\end{figure}

We define the notions of home blocks for nodes and node representatives for blocks.
\begin{definition}[Home Blocks and Representatives]
    For any node $u \in V(\mathcal{T})$ and level $k$, let $B_k(u)$ denote the unique block in the partition $\mathcal{B}_{t_k}$ that contains $u$.
    We assume each block $B \in \mathcal{B}_{t_k}$ is associated with a unique representative node in $M_{t_k}$.
\end{definition}

This definition allows us to examine the $3$-level nested structure around a node $u$. The node $u$ resides in a local block $B_1(u)$, which is nested within $B_2(u)$, which is further nested within $B_3(u)$. 

For any level $k\in \{1,2,3\}$, we refer to any block $B\in \mathcal{B}_{t_k}$ as a $k$-block. We call two $k$-block representatives $u$ and $v$ \textit{adjacent} if there is no other $k$-block representative $z$ appearing on the path $P(u,v)$. 

Similarly to how every $k$-block is a subtree of $\mathcal{T}$, the subgraph connecting adjacent $k$-block representatives is also a subtree. We formalize this statement via the following lemma. 

\begin{lemma}
    \label{lemma:adjacent-representatives-connection}
    For any adjacent $k$-block representatives $u,v$, the set of nodes reachable from any node $x\in P(u,v) \setminus \{u,v\}$, via paths that do not intersect $u$ or $v$, induces a subtree of the original tree $\mathcal{T}$. We denote this subtree by $\mathcal{T}(u,v)$.

    Additionally, the only $k$-block representatives in $\mathcal{T}(u,v)$ are $u$ and $v$ (which bound the subtree).
\end{lemma}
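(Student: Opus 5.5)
We read the statement as follows: $\mathcal{T}(u,v)$ is the set of nodes reachable from the interior of $P(u,v)$ in $\mathcal{T}\setminus\{u,v\}$, together with the two endpoints $u,v$. We argue directly from the tree structure of $\mathcal{T}$ and the LCA closure of the marked sets $M_{t_k}$ (Lemma~\ref{lemma:durocher_blocking} and Corollary~\ref{corollary:multi_level_blocking}). Representatives of $k$-blocks lie in $M_{t_k}$. Without loss of generality we may take the representatives of level $k$ to be exactly the nodes of $M_{t_k}$, since every block is associated with a marked node. If $u$ and $v$ are adjacent in $\mathcal{T}$, the interior of $P(u,v)$ is empty and the claim is trivial, so assume the interior is nonempty.

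\textbf{Step 1 (subtree).} Removing the two nodes $u,v$ from the tree $\mathcal{T}$ leaves a forest. The interior of $P(u,v)$ is itself a path avoiding $u$ and $v$, so it lies in a single component $C$ of $\mathcal{T}\setminus\{u,v\}$. Hence the set reachable from any interior node $x$ is exactly $C$, and it does not depend on the choice of $x$. $C$ is connected and a subgraph of a tree, hence a tree. Adding back $u$ and $v$ preserves connectivity, because $u$ and $v$ are adjacent to the first and last interior nodes of the path. The induced subgraph on $C\cup\{u,v\}$ is acyclic since $\mathcal{T}$ is, so it is a subtree.

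\textbf{Step 2 (no other representatives).} This is the main step. Suppose some representative $z\notin\{u,v\}$ lies in $C$. Let $p$ be the point where the path from $z$ to $P(u,v)$ first meets $P(u,v)$. Then $p$ is the median of $u$, $v$, $z$ in $\mathcal{T}$. Because $z\in C$, the path from $z$ to $p$ avoids $u$ and $v$, so $p$ is an interior node of $P(u,v)$. We want to show $p\in M_{t_k}$, which contradicts adjacency. The median of three nodes in a rooted tree equals one of $\mathrm{lca}(u,v)$, $\mathrm{lca}(u,z)$, $\mathrm{lca}(v,z)$, namely the deepest of the three. By LCA closure of $M_{t_k}$, this median is marked, hence a representative lying on $P(u,v)\setminus\{u,v\}$, contradicting the adjacency of $u$ and $v$. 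Therefore the only representatives in $\mathcal{T}(u,v)$ are $u$ and $v$.

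The main obstacle is justifying the LCA closure at level $k$ with respect to $\mathcal{T}$ itself rather than a compressed tree $\mathcal{T}_{t_{k-1}}$. For this we invoke the Consistency property of Corollary~\ref{corollary:multi_level_blocking}. We would also verify the median identity by a short case analysis on the relative positions of the three nodes, depending on whether $z$ hangs off $P(u,v)$ below or above $\mathrm{lca}(u,v)$.
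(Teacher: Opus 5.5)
Your proof is correct and follows the paper's argument. The paper also treats the first part as the observation that a connected component of a tree with nodes removed is a tree. For the second part, both arguments show that the point where a hypothetical representative $z\in\mathcal{T}(u,v)$ attaches to $P(u,v)$ is an LCA of marked nodes, hence marked by LCA closure, which contradicts adjacency.

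The only difference is cosmetic. The paper first uses LCA closure to get $\mathrm{lca}(u,v)\in\{u,v\}$, so $P(u,v)$ is vertical and the attachment point is simply $\mathrm{lca}(z,v)$. You skip that case split by identifying the attachment point with the median of $u,v,z$, i.e.\ the deepest of the three pairwise LCAs.
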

\begin{proof}
    \begin{figure}[H]
    \centering
    \begin{tabular}{cc}
        \includegraphics[width=0.6\linewidth]{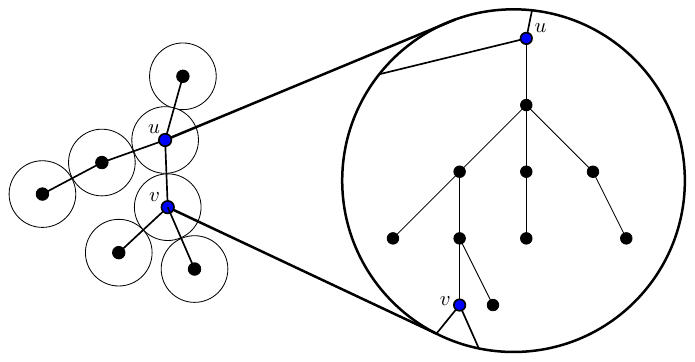} &
        \includegraphics[width=0.33\linewidth]{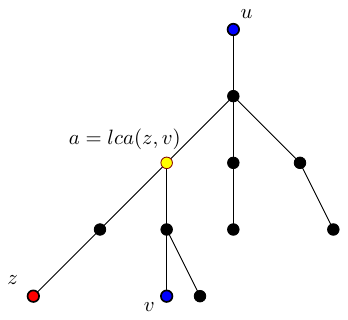} \\
         (a) Valid subtree $\mathcal{T}(u, v)$ & (b) Invalid configuration
    \end{tabular}
    \caption{Depiction of a valid subtree $\mathcal{T}(u,v)$, and an invalid configuration, connecting the $k$-block representatives $u,v$. In (a), $\mathcal{T}(u,v)$ is a valid tree connecting the representatives $u$, $v$. In (b), for the sake of contradiction, we introduce a new representative $z \in \mathcal{T}(u,v)$. The node $a=lca(z,v)$ must merge with the path $P(u,v)$, but according to the LCA property of blocking, $a$ is also a representative, which yields a contradiction.}
    \label{fig:color_endpoints_small_blocks_cases}
    \end{figure}
    
    The first part of the lemma follows directly from the definition of trees; any connected component of a tree obtained by removing specific edges or nodes remains a tree.
    
    To prove the second part, note that by the LCA-closure property of our blocking structure (Lemma~\ref{lemma:durocher_blocking}), the lowest common ancestor of any two representatives is also a representative. Because $u$ and $v$ are adjacent representatives, their LCA must lie on the path $P(u,v)$. Since no other representatives exist on this path, $lca(u,v)$ must be exactly $u$ or $v$. Without loss of generality, let $u$ be the ancestor of $v$.

    Assume, for the sake of contradiction, that some other $k$-block representative $z$ is part of $\mathcal{T}(u,v)$. Because $z$ is in $\mathcal{T}(u,v)$, the path from $z$ to $v$ must join $P(u,v)$ at some node $x$ that is strictly between $u$ and $v$ (i.e., $x \in P(u,v) \setminus \{u,v\}$). Consequently, the node $a = lca(z,v)$ is exactly this attachment node $x$. According to the LCA property, $a$ must be a $k$-block representative. However, this implies that $a$ is a representative located on $P(u,v)$ strictly between $u$ and $v$, which contradicts the assumption that $u$ and $v$ are adjacent.
\end{proof}

\subsection{Parameter Selection}
For our specific data structure, we employ a $3$-level partition of $\mathcal{T}$. We assume the first blocking factor $t_1$ is provided as a parameter. We then define the higher-level factors as follows:
\[
    t_2 = t_1 \log \log n, \quad t_3 = t_2 \sqrt{\log n}
\]
During the preprocessing phase, we compute the associated structures $M_{t_i}$, $\mathcal{T}_{t_i}$, and $\mathcal{B}_{t_i}$ for $1 \leq i \leq 3$. Specifically, we construct the full block partitions $\mathcal{B}_{t_i}$ using the hierarchical method described in Corollary~\ref{corollary:multi_level_blocking}. Based on the established results, this entire preprocessing step requires $O(n \log n)$ time.

\section{Overview and Strategy}
\label{section:important_regions_subtasks}

Having established the prerequisites, we now provide a high-level overview of our query answering strategy. Our approach refines the blocking technique of Durocher et al.~\cite{durocher2016linear}. In their solution, the query algorithm identifies a set of $O(\sqrt{n/w})$ candidate nodes. For every distinct color appearing in this set of nodes, they perform a frequency verification step taking $O(\log\log n)$ time.

To improve upon this, we introduce a hierarchical candidate selection process. Instead of a single set, we identify two distinct sets of \textbf{candidate nodes}, denoted $S_1$ and $S_2$. These sets act as proxies for the colors we need to verify:
\begin{itemize}
    \item \textbf{Primary Candidate Nodes ($S_1$):} A small set of nodes ($|S_1|=O(t_1)$). For every color $c$ appearing in $S_1$, we perform a standard verification taking $O(\log\log n)$ time.
    \item \textbf{Secondary Candidate Nodes ($S_2$):} A larger set of nodes ($|S_2| = O(t_2)$). For every color $c$ appearing in $S_2$, we store a triple $(c,l,r)$, where $l$ and $r$ are the first occurrences of $c$, when traversing the path in the direction $i\rightarrow j$ and $j\rightarrow i$, respectively. We need only $O(1)$ time to compute $g(c,l,r)$ for one such color.
\end{itemize}
This split allows us to process a larger total number of candidate nodes (and thus potential colors) without degrading the overall query complexity. 

\subsection{Decomposition into Disjoint Subtasks}
\label{subsection:decomposition_into_subtasks}
To efficiently identify these candidate nodes, we decompose the query based on the hierarchy of blocks defined in Section~\ref{section:blocking_technique}. We decompose the problem into disjoint subtasks by partitioning the set of distinct colors present on the query path. Each subtask identifies a unique subset of colors based on their presence or absence in the hierarchical blocks surrounding $i$ and $j$.

Let $C_{path} = C(P(i, j))$ denote the set of all distinct colors present on the query path.
For any block $B$, let $C(B)$ denote the set of distinct colors present in that block.

We define the complement of a block's color set with respect to the \textbf{entire tree}. Let $\mathcal{U}$ be the universe of all distinct colors in the tree $\mathcal{T}$. We define:
\[
    \overline{C(B)} = \mathcal{U} \setminus C(B)
\]
This represents the set of all colors in the tree that are \textit{not} present in block $B$.

The set of path colors $C_{path}$ is partitioned into the following disjoint subsets $C_1, \dots, C_{10}$ using intersection operations on these color sets:

\begin{itemize}
    \item \textbf{1. The Global Colors:}
    $
       C_1 = C_{path} \cap \overline{C(B_3(i))} \cap \overline{C(B_3(j))}
    $.
    
    Colors on the path that are not present in the Level 3 blocks of either endpoint.

    \item \textbf{2. The Level 3 Ascent:}
    $
       C_2 = C_{path} \cap C(B_3(i)) \cap \overline{C(B_2(i))} \cap \overline{C(B_3(j))}
    $.
    
    Colors on the path that are in $i$'s Level 3 block, but not in $i$'s Level 2 block or $j$'s Level 3 block.

    \item \textbf{3. The Level 2 Ascent:}
    $
       C_3 = C_{path} \cap C(B_2(i)) \cap \overline{C(B_1(i))} \cap \overline{C(B_3(j))}
    $.
    
    Colors on the path that are in $i$'s Level 2 block, but not in $i$'s Level 1 block or $j$'s Level 3 block.

    \item \textbf{4. The Level 3 Descent:}
    $
       C_4 = C_{path} \cap C(B_3(j)) \cap \overline{C(B_2(j))} \cap \overline{C(B_3(i))}
    $.
    
    Symmetric to $C_2$.

    \item \textbf{5. Level 3 Intersection:}
    $
       C_5 = C_{path} \cap C(B_3(i)) \cap C(B_3(j)) \cap \overline{C(B_2(i))} \cap \overline{C(B_2(j))}
    $.
    
    Colors on the path present in both Level 3 blocks, but in neither Level 2 block.

    \item \textbf{6. Mixed Intersection:}
    $
       C_6 = C_{path} \cap C(B_2(i)) \cap C(B_3(j)) \cap \overline{C(B_1(i))} \cap \overline{C(B_2(j))}
    $.
    
    Colors on the path present in $i$'s Level 2 block and $j$'s Level 3 block, excluding inner blocks.

    \item \textbf{7. The Level 2 Descent:}
    $
       C_7 = C_{path} \cap C(B_2(j)) \cap \overline{C(B_1(j))} \cap \overline{C(B_3(i))}
    $.
    
    Symmetric to $C_3$.

    \item \textbf{8. Mixed Intersection:}
    $
       C_8 = C_{path} \cap C(B_3(i)) \cap C(B_2(j)) \cap \overline{C(B_2(i))} \cap \overline{C(B_1(j))}
    $.
    
    Symmetric to $C_6$.

    \item \textbf{9. Level 2 Intersection:}
    $
       C_9 = C_{path} \cap C(B_2(i)) \cap C(B_2(j)) \cap \overline{C(B_1(i))} \cap \overline{C(B_1(j))}
    $.
    
    Colors on the path present in both Level 2 blocks, excluding both Level 1 blocks.

    \item \textbf{10. Local Colors:}
    $
       C_{10} = C_{path} \cap \Big( C(B_1(i)) \cup C(B_1(j)) \Big)
    $.
    
    Colors on the path present in the smallest (Level 1) blocks of either endpoint.
\end{itemize}

In the subsequent analysis, we will omit the subtasks marked as symmetric to others (e.g., $C_4$, $C_7$, $C_8$), as their solutions are algorithmically identical to their counterparts, differing only in orientation. 

For each of the subtasks 4-9, we will identify sets of candidate nodes of size $O(t_2)$, together with the corresponding endpoints $l,r$, and add them to the set $S_2$. For subtasks $1-3$ and $10$, we will identify sets of candidates of size $O(t_1)$ or $O(1)$, and add them to $S_1$. At the end, we process the sets $S_2$ and $S_1$, in time $O(|S_2|)=O(t_2)$ and $O(|S_1|\cdot\log\log n)=O(t_1\cdot \log\log n)$, respectively, and return the color with the largest $g$-value. Since $t_1\log\log n=t_2$, processing $S_1$ and $S_2$ takes $O(t_2)$ in total.

\subsubsection{Proof of Completeness}
\label{subsection:proof_of_completeness}
We prove that these subtasks partition $C_{path}$. Consider any color $c \in C_{path}$. Based on the block hierarchy, we can determine the membership of $c$ in the color sets of the blocks surrounding $i$ and $j$.
Let $P_u$ be the proposition $c \in C(B_3(u))$ and $Q_u$ be the proposition $c \in C(B_2(u))$ for $u \in \{i, j\}$. Since $B_2(u) \subseteq B_3(u)$, we have $Q_u \implies P_u$.
The definitions of $C_1$ through $C_9$ correspond precisely to the disjoint logical conjunctions of these propositions (e.g., $C_2$ corresponds to $P_i \land \neg Q_i \land \neg P_j$). The set $C_{10}$ captures the remaining cases where the color is present in $C(B_1(i))$ or $C(B_1(j))$.
Thus, every color $c \in C_{path}$ satisfies exactly one of these conditions, ensuring that $\bigcup_{k=1}^{10} C_k = C_{path}$ and all $C_k$ are disjoint.

\section{Solving the Subtasks}
\label{section:solving_subtasks}

In this section, we present the algorithmic solutions for the subtasks defined previously. While the query procedures are concise, the precomputation strategies vary in complexity across subtasks; we sketch the data structures here, deferring full construction details to Appendix~\ref{appendix:precomputation_algorithms}. We first state a key result from Durocher et al.~\cite{durocher2016linear} that enables efficient verification for our primary candidate set $S_1$.

\begin{lemma}[Durocher et al.~\cite{durocher2016linear}]
\label{lemma:durocher:path_frequency_query}
There exists an $O(n)$-space data structure that supports lowest colored ancestor queries on trees in $O(\log\log n)$ time.
\end{lemma}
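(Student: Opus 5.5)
The statement is the lowest colored ancestor result of Durocher et al.: given a node $v$ and a color $c$, return the nearest ancestor of $v$ (inclusive) whose color is $c$. The plan is to reduce it to a predecessor search over an Euler-tour ordering and then use a linear-space $O(\log\log n)$ predecessor structure.

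First, perform a DFS of $\mathcal{T}$ and record for each node $x$ its entry time $\mathrm{in}(x)$ and exit time $\mathrm{out}(x)$. Then $y$ is an ancestor of $v$ if and only if $\mathrm{in}(y)\le \mathrm{in}(v)\le \mathrm{out}(y)$.

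Next, for each color $c$, store the $2n_c$ interval endpoints $\{\mathrm{in}(x),\mathrm{out}(x) : \mathcal{T}[x]=c\}$ as a sorted set $E_c$. Over $E_c$ we build a predecessor structure with $O(n_c)$ space and $O(\log\log n)$ query time. A $y$-fast trie with perfect hashing, or a van Emde Boas variant with hashing, achieves this for universe $[1,2n]$. Summed over all colors the space is $O(n)$.

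To answer a query for $(v,c)$, find the predecessor $p$ of $\mathrm{in}(v)$ in $E_c$. The intervals of $c$-colored nodes form a laminar family, so there are two cases.
\begin{itemize}
\item If $p=\mathrm{in}(x)$ for some $x$, then $x$ is the deepest $c$-colored node whose interval opened before $v$ and has not yet closed, so $x$ is the answer.
\item If $p=\mathrm{out}(x)$, then $x$'s subtree has already finished. The answer is then the parent of $x$ in the virtual tree $\mathcal{V}_c$, which we store explicitly. The reason is that any $c$-colored interval containing $\mathrm{in}(v)$ that opened before $\mathrm{out}(x)$ must also contain $x$. By laminarity, the closest such interval is the $\mathcal{V}_c$-parent of $x$, or the answer is ``none'' if that parent is the virtual root $r_c$.
\end{itemize}
Each query therefore costs one predecessor search plus $O(1)$ work.

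The main obstacle is justifying the second case carefully. We need to show that no $c$-colored interval opens strictly between $\mathrm{out}(x)$ and $\mathrm{in}(v)$, which holds because $p=\mathrm{out}(x)$ is a predecessor. We also need to show that the $\mathcal{V}_c$-parent of $x$ is indeed an ancestor of $v$. If it were not, its out-time would lie in $(\mathrm{out}(x),\mathrm{in}(v))$, which contradicts predecessorship. The remaining work is citing a suitable linear-space $O(\log\log n)$ predecessor structure in the word-RAM with $w=\Omega(\log n)$.
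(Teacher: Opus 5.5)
Your proof is correct. The paper itself gives no proof of this lemma. It imports the result from Durocher et al.\ as a black box and uses it only to locate the first and last occurrences of a color on $P(i,j)$, namely in the verification of $S_1$ and in the Las Vegas $\alpha$-minority variant.

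You instead give the standard self-contained reduction to a single predecessor search over the Euler-tour parentheses of color $c$, and both of your cases hold up:
\begin{itemize}
\item If $p=\mathrm{in}(x)$, then $\mathrm{out}(x)\in E_c$ and $\mathrm{out}(x)>p$ force $\mathrm{out}(x)>\mathrm{in}(v)$, so $x$ is an ancestor of $v$. It is the lowest one, since every $c$-colored ancestor of $v$ has in-time at most $p$.
\item If $p=\mathrm{out}(x)$, laminarity makes every $c$-colored ancestor of $v$ a proper ancestor of $x$. Your predecessor argument then shows that the $\mathcal{V}_c$-parent of $x$ is itself an ancestor of $v$, hence the answer.
\end{itemize}

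Compared with the bare citation, your route makes explicit that the whole $O(\log\log n)$ cost is one predecessor query. The only randomness is the static perfect hashing inside the $y$-fast tries: worst-case query time and expected linear construction, in line with the paper's other hashing-based components. Your route also reuses the virtual-tree parent pointers the paper already builds.

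To make the argument airtight, state three small things explicitly:
\begin{itemize}
\item the timestamps come from a single DFS counter, so all $2n$ of them are distinct; your strict inequalities rely on this;
\item the per-color structures are reached through an array indexed by color;
\item the answer is ``none'' when $\mathrm{in}(v)$ has no predecessor in $E_c$.
\end{itemize}

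Finally, the paper's application needs slightly more than one such query. When $c$ does not occur on $P(i,\mathrm{lca}(i,j))$, the endpoint $i_c$ is the highest occurrence of $c$ on $P(\mathrm{lca}(i,j),j)$. You obtain it from the lowest $c$-colored ancestors of $j$ and of the parent of $\mathrm{lca}(i,j)$, together with one level-ancestor query in $\mathcal{V}_c$ (Corollary~\ref{corollary:virtual_tree_queries}). This still takes $O(\log\log n)$ time, so your lemma supplies exactly the primitive that is needed.
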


To verify any $c\in S_1$, we must evaluate $g(i, j, c)$, which requires identifying the first and last occurrences of $c$ on $P(i, j)$. Using the lowest colored ancestor structure from Lemma~\ref{lemma:durocher:path_frequency_query}, we locate these endpoints in $O(\log \log n)$ time, and compute $g$ in $O(1)$. We analyze the space complexity, query time, and preprocessing time for each subtask. Note that any candidate color identified in these subtasks is added to the candidate set $S_1$ (verification taking $O(\log \log n)$ time) or $S_2$ (verification taking $O(1)$ time), as per the strategy in Section~\ref{section:important_regions_subtasks}.

\subsection{Solving Subtask 1: The Global Region}
\label{section:generalized_Z_query_trees:subsection:solvingsubtasks:subtask1}

We precompute a table $T$ of size $O\left( (n/t_3)^2\right)$, indexed by the identifiers of the blocks in $\mathcal{B}_{t_3}$. For every pair of blocks $B_u, B_v \in \mathcal{B}_{t_3}$ with representatives $u, v$, the entry $T[u][v]$ stores the color $c$ that maximizes $g(u, v, c)$, restricted to colors appearing on the path $P(u, v)$ but not in $C(B_u) \cup C(B_v)$. During a query, we identify the blocks $u=B_3(i)$ and $v=B_3(j)$, retrieve the candidate $c=T[u][v]$ in $O(1)$ time, and add it to the primary candidate set $S_1$. 

\textbf{Complexity:} Preprocessing $O(n \cdot (n/t_3))$; Space $O((n/t_3)^2)$ words; Query Time $O(1)$.

\subsection{Solving Subtask 2: The Level 3 Ascent}
\label{section:generalized_Z_query_trees:subsection:solvingsubtasks:subtask2}

We employ a lookup table $T$ of size $O(\frac{n}{t_2} \cdot \frac{n}{t_3})$, indexed by pairs $(u, v)$ representing a Level 2 block and a Level 3 block. The entry $T[u][v]$ stores the relative index (using $O(\log \log n)$ bits) of a specific Level 1 block (or $t_1$-node) inside $B_3(i)$ that contains at least one occurrence of the optimal color. During a query, we iterate through every distinct color $c'$ in this retrieved $t_1$-node's subtree and add each to $S_1$.

\textbf{Complexity:} Preprocessing $O(n \cdot (n/t_3))$; Space $O((n/t_3)(n/t_2) \log \log n)$ bits; Query Time $O(t_1)$.

\subsection{Solving Subtask 3: The Level 2 Ascent}
\label{section:generalized_Z_query_trees:subsection:solvingsubtasks:subtask3}

This solution is analogous to Subtask 2. We use a table $T$ of size $O(\frac{n}{t_1} \cdot \frac{n}{t_3})$. For a pair of blocks $B_u$, $B_v$ (Level 1 and Level 3), $T[u][v]$ stores the index of a $t_1$-node inside $B_2(u)$ containing the optimal color. All colors in that $t_1$-node are added to $S_1$.

\textbf{Complexity:} Preprocessing $O(n \cdot (n/t_3))$; Space $O((n/t_3)(n/t_1) \log \log n)$ bits; Query Time $O(t_1)$.

\subsection{Prerequisites for Subtasks 5, 6 and 9}
\label{subsection:Prerequisites_subtask9_6}

To efficiently solve the intersection subtasks, we require advanced data structures that allow us to locate specific color occurrences within the block hierarchy in constant time. We refer to trees with size bounded by $s = \frac{\log n}{8 \log \log n}$ as \textbf{small trees}.

\begin{lemma}
    \label{lemma:small_tree_search}
    There exists an $O(n)$-space data structure, constructible in $O(n)$ time, that answers the following query in $O(1)$ time: Given a small tree $\tau$, a binary string $\sigma_{V}$ of length $|V(\tau)|=O(\frac{\log n}{\log\log n})$, and a node $u \in V(\tau)$, return the lowest ancestor of $u$ marked with $1$ in $\sigma_V$. If no such ancestor exists, return a marked node $v'$ such that no ancestor of $v'$ (other than potentially $u$) is marked.
\end{lemma}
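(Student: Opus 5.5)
This is a Four-Russians style lookup-table argument: encode the whole query into one machine word and precompute all answers. The domain is small enough that the table has $o(n)$ entries.

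\textbf{Step 1: encoding a small tree.} A rooted tree on $m \le s$ nodes has a canonical encoding as a balanced-parentheses (DFS) string of $2m$ bits. Its nodes are identified by their preorder rank, which needs $\lceil \log s\rceil = O(\log\log n)$ bits. Each small tree $\tau$ used by the data structure is stored once in this canonical form, together with a map from its actual nodes to preorder ranks; this costs $O(|V(\tau)|)$ space. The binary string $\sigma_V$ is interpreted as indexed by preorder rank. A query $(\tau,\sigma_V,u)$ is then described by the triple (encoding of $\tau$, $\sigma_V$, rank of $u$). Its total length is at most $2s + s + O(\log\log n) \le 3\log n/(8\log\log n) + O(\log\log n)$ bits. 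This is $o(\log n)$, and in particular at most $\tfrac12\log n$ for large $n$. Concatenating the three fields gives an integer key that fits in a word, which we form in $O(1)$ time with shifts and ors. (If the trees given to us are not already in canonical form, their encoding is computed once during preprocessing.)

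\textbf{Step 2: the table.} We allocate an array $L$ indexed by all possible keys, so it has $2^{O(\log n/\log\log n)} \le \sqrt n$ entries. Each entry stores the preorder rank of the answer, which takes $O(\log\log n)$ bits. The answer is defined as follows:
\begin{itemize}
\item if some ancestor of $u$ (including $u$ itself) is marked, the lowest marked ancestor;
\item otherwise, a fixed canonical marked node $v'$ whose only marked ancestor can be $u$, for instance the marked node with smallest depth (ties broken by preorder), or a sentinel value if no node is marked.
\end{itemize}
A minimal-depth marked node has no proper marked ancestor, so it satisfies the required fallback property. To fill $L$, we enumerate every valid key, decode the parenthesis string into parent pointers, and walk upward from $u$. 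This takes $O(s)$ time per key. Invalid bit patterns are simply skipped, or given a dummy value. The total cost is $O(\sqrt n\cdot s)=o(n)$ time and $o(n)$ space.

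\textbf{Step 3: query and accounting.} A query forms the key, performs one table lookup, and maps the returned preorder rank back to an actual node through the stored rank map of $\tau$. All of this takes $O(1)$ time.

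The main obstacle is keeping the key within $\tfrac12\log n$ bits. The constant $8$ in $s$ is what gives this slack: the tree encoding, $\sigma_V$ and the node index together use only a $3/8$ fraction of $\log n$, up to lower-order terms. Care is also needed so that node identifiers are the canonical preorder ranks. Otherwise two isomorphic trees with different labelings would need different table entries, and the index would no longer be a function of the encoding alone.
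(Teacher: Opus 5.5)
Your proposal is correct and is essentially the paper's argument: tabulate the answer for every (small tree, marking $\sigma_V$, node) triple in a table of size $n^{o(1)}$ filled by brute force, then answer each query with one lookup. Your $2m$-bit balanced-parentheses encoding and stored per-tree preorder-rank map spell out the ``map $\tau$ to its canonical ID in $O(1)$'' step that the paper leaves implicit. One small correction: with your encoding the key has only $O(\log n/\log\log n)$ bits, so the slack comes from the $\log\log n$ in $s$, not from the constant $8$; the $8$ matters only for the paper's cruder $2^{s\log s}$ count of trees.
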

\begin{proof}
    We employ the tabulation method. We precompute answers for all possible canonical instances of small trees. The number of distinct rooted trees with size up to $s$ is bounded by $O(2^{s\log s})$. For each tree topology $\tau$, there are $2^{|V(\tau)|} \le 2^s$ possible bitstrings $\sigma_V$. Thus, the total number of distinct query inputs $(\tau, \sigma, u)$ is bounded by:
    \[
        O(2^{s\log s} \cdot 2^s \cdot s) = O(2^{2\log\log n\frac{\log n}{8 \log \log n}} \cdot \log n) = O(n^{1/4} \log n) \ll O(n).
    \]
    We construct a lookup table $T$ indexed by the canonical ID of $\tau$, the bitstring $\sigma_V$, and the node index $u$. We populate $T$ using a standard DFS traversal for each instance. During the query phase, we map the input small tree to its canonical ID and retrieve the answer from $T$ in $O(1)$ time. The total space and preprocessing time are sublinear.
\end{proof}

We apply the same logic to edge markings rather than node markings, and summarize this result via the following corollary.

\begin{corollary}
    \label{corollary:smalle_tree_edge_search}
    There exists an $O(n)$-space data structure, constructible in $O(n)$ time, that answers the following query in $O(1)$ time: Given a small tree $\tau$, a binary string $\sigma_{E}$ of length $|E(\tau)|=O(\frac{\log n}{\log\log n})$, and an edge $e \in E(\tau)$, return the lowest edge above $e$ marked with $1$ in $\sigma_E$. If no such edge exists, return a marked edge $e'$ such that no edge above $e'$ (other than potentially $e$) is marked.
\end{corollary}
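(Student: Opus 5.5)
The plan is to reduce the edge version to the node version of Lemma~\ref{lemma:small_tree_search}. Every non-root node of a rooted tree has exactly one parent edge, so edges correspond one-to-one with non-root nodes. Given a small tree $\tau$ and an edge marking $\sigma_E$, we identify each edge $e=(x,\mathrm{parent}(x))$ with its lower endpoint $x$. From $\sigma_E$ we form a node marking $\sigma_V$: node $x$ gets bit $1$ exactly when its parent edge is marked, and the root gets bit $0$.

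Under this identification, the edges ``above'' $e=(x,\mathrm{parent}(x))$ are exactly the parent edges of the ancestors of $x$ other than the root. The lowest marked edge above $e$ is therefore the parent edge of the lowest ancestor of $x$ marked in $\sigma_V$. Marking the root $0$ ensures no spurious answer arises.

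The fallback case maps across directly. The node structure returns a marked node $v'$ such that no ancestor of $v'$, other than possibly the query node, is marked. The parent edge of $v'$ is then a marked edge with no marked edge above it, other than possibly $e$. This is exactly the requirement in the statement.

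To keep the $O(1)$ query bound, the bijection must be usable without overhead:

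1. Fix the canonical numbering of $\tau$ from Lemma~\ref{lemma:small_tree_search}, for instance DFS order.
2. Index edges by the DFS number of their lower endpoint, so that $\sigma_E$, listed in that order, becomes $\sigma_V$ by inserting a $0$ at the root's position. This is a constant number of word operations, since the string has $O(\log n/\log\log n)$ bits and fits in one word.
3. Translating the answer node back to its parent edge is an identity on indices.

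Alternatively, and more robustly, we could redo the tabulation of Lemma~\ref{lemma:small_tree_search} directly with edge bitstrings. The count of query inputs is at most $O(2^{s\log s}\cdot 2^{s-1}\cdot s)=O(n^{1/4}\log n)$. Each table entry is filled by a DFS that carries the lowest marked edge seen so far, so the total space and preprocessing are sublinear, hence $O(n)$.

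The main obstacle I expect is stating precisely what ``above'' and the fallback answer mean for edges, so that they line up with the node version's semantics, especially at the root. The counting and tabulation parts are routine.
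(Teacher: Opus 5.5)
Your proposal is correct and takes the same route as the paper's proof. Both identify each edge with its lower endpoint, set $\sigma_V$ to the parent-edge bit with the root marked $0$, and answer the query by calling the node structure of Lemma~\ref{lemma:small_tree_search} at $low(e)$. Your treatment of the $O(1)$ conversion from $\sigma_E$ to $\sigma_V$, and of mapping the returned node back to its parent edge, fills in details the paper leaves implicit.
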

\begin{proof}
    For each edge $e$, let $low(e)$ denote its lowest endpoint (the node furthest from the root). We build the node marking $\sigma_V$ the following way:
    \begin{align*}
        \sigma_{V}(u)=
        \begin{cases}
            \sigma_E(e), & \text{if } low(e)=u\\
            0, & \text{if } u\text{ is the root of $\tau$}
        \end{cases}
    \end{align*}
    We then build the data structure $DS_V$ from Lemma~\ref{lemma:small_tree_search} for the pair $(\tau, \sigma_V)$. In order to answer a query $e$ on $(\tau, \sigma_E)$, we determine $u \gets low(e)$, and return $DS_V(u)$.
\end{proof}

We combine the node marking and edge marking small tree search data structures to obtain the following result related to finding occurrences of a color $c$ within a block hierarchy.

\begin{lemma}
    \label{lemma:hashing_multi_level_color_search}
    Given a colored tree $\mathcal{T}$ and blocking factors $t_1 < t_2$ with $t_2/t_1 = O(\log n)$, there exists a linear-space perfect hashing structure (preprocessed in $O(n)$ expected time) that, given a color $c$ and a $t_2$-block $B$, returns in $O(1)$ time:
    \begin{enumerate}
        \item a binary string $\sigma_V$ of length $t_2/t_1$ indicating which $t_1$-blocks inside $B$ contain $c$.
        \item a binary string $\sigma_E$ of length $t_2/t_1$ indicating which of the paths between adjacent $1$-block representatives contain the color $c$.
    \end{enumerate}
\end{lemma}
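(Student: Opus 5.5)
The plan is to store, for every pair $(c,B)$ in which color $c$ actually occurs in $t_2$-block $B$ (or on the inter-representative paths $B$ owns), the two bitstrings $\sigma_V$ and $\sigma_E$ as one packed word. These entries are then indexed by a static perfect hash table (FKS) keyed on $(c,\mathrm{id}(B))$. Pairs where $c$ is absent from $B$ are not stored: a failed lookup, detected by comparing the stored key, returns the all-zero strings. Since $t_2/t_1 = O(\log n)$ and $w = \Omega(\log n)$, each bitstring fits in $O(1)$ words. The $O(1)$ query time is then immediate from FKS hashing, and the real work is bounding space and construction time.

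For the space bound, I would charge each stored pair $(c,B)$ to an occurrence of $c$ inside $B$, or on a path between adjacent $1$-block representatives attributed to $B$. Every node of $\mathcal{T}$ lies in exactly one $t_2$-block, so it contributes to at most one pair. For $\sigma_E$, I would fix a canonical assignment of each inter-representative path to a block, for instance the block containing the lower representative of the corresponding edge of $\mathcal{T}_{t_1}$ (i.e.\ the $low(e)$ convention of Corollary~\ref{corollary:smalle_tree_edge_search}). Lemma~\ref{lemma:adjacent-representatives-connection} guarantees that the interior nodes of $P(u,v)$, and of $\mathcal{T}(u,v)$, contain no other representatives. These interior nodes belong to the $t_1$-blocks of the two endpoints, so each node lies on $O(1)$ such paths up to the branching at $lca$-closed points. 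Consequently, the number of pairs with a nonzero $\sigma_E$ is at most the number of (node, edge) incidences, which is $O(n)$. The table therefore has $O(n)$ keys of $O(1)$ words each, and FKS uses linear space.

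For the construction, I would run one pass over the $t_1$-blocks. For each node $x$ of color $c$ in $t_1$-block $B_1(x)$, I set bit $\mathrm{rank}(B_1(x))$ within $B_2(x)$ in the entry for $(c,B_2(x))$. Here $\mathrm{rank}$ is the index of the $t_1$-block inside its $t_2$-block, obtained from the block tree of Corollary~\ref{corollary:block_trees} after a DFS numbering. To handle $\sigma_E$, I walk each path $P(u,v)$ between adjacent $1$-representatives, for each edge $(u,v)\in E(\mathcal{T}_{t_1})$, and set the edge's bit for every color appearing on it. While building, I accumulate the entries in a temporary dictionary by sorting keys $(c,B)$ with radix sort, which takes $O(n)$ time since keys lie in $[n]^2$. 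The static FKS table is then built in $O(n)$ expected time.

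The main obstacle is bounding the total length of all paths $P(u,v)$ walked in the $\sigma_E$ construction by $O(n)$, which must be done without double counting. Here I would argue that distinct edges of $\mathcal{T}_{t_1}$ have edge-disjoint paths in $\mathcal{T}$: the paths are internally disjoint by adjacency together with LCA closure. Since $\mathcal{T}$ has $n-1$ edges, the total length is $O(n)$.
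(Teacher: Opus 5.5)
Your proposal is correct and follows the paper's approach: a static perfect hash table keyed on $(c,B)$ stores the packed masks, and it has linear size because each node of $\mathcal{T}$ contributes a bit to only one $(c,B)$ pair. On $\sigma_E$ you give more detail than the paper, which only says ``by a similar perfect hashing approach''; base that bound on your final edge-disjointness argument (total node--path incidences at most $(n-1)+|E(\mathcal{T}_{t_1})|$), not on the claim that each node lies on $O(1)$ such paths, which fails for a representative of high degree in $\mathcal{T}_{t_1}$.
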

\begin{proof}
    We construct a dictionary of valid pairs. Let $S$ be the set of triples $(c, B, \sigma_V)$, where $c$ is a color present in the $t_2$-block $B$, and $\sigma_V$ is the bitmask representing the presence of $c$ in the constituent $t_1$-blocks. The $t_1$ nodes inside the big $t_2$ block form a subtree $\tau$ of size $O(\log n)$. The size of the dictionary $S$ is bounded by $O(n)$ because each node in $\mathcal{T}$ belongs to exactly one $t_1$-block, contributing exactly one bit to one mask in the entire structure.
    
    We store $S$ using a static perfect hashing scheme (e.g., Fredman, Komlós and Szemerédi~\cite{10.1145/828.1884}, or Belazzougui et al.~\cite{belazzougui2009hash}, Hagerup and Tholey~\cite{hagerup2001efficient}), allowing $O(1)$ worst-case lookups to retrieve $\sigma_V$ for a pair $(c, B)$, or returning null if $c \notin C(B)$. By a similar perfect hashing approach, we build a data structure to retrieve the edge marking $\sigma_E$ in $O(1)$ time.
\end{proof}

As a final note, we mention that for preprocessing the data structure from Lemma~\ref{lemma:hashing_multi_level_color_search}, we can perform a simple DFS traversal of each block $B_k$, and determine, for each color $c$ in block $B_k$, the respective strings $\sigma_V$ and $\sigma_E$, thus requiring $O(n)$ time. Then, the bottleneck of our data structure becomes the perfect hashing data structure, for which we know that an expected $O(n)$ time preprocessing is possible. 

We emphasize a crucial structural distinction here: when defining the edge marking $\sigma_E$ for the small tree $\tau$, we restrict our attention \textit{strictly} to the simple paths $P(u,v)$ between adjacent $1$-block representatives, treating these paths as single edges. We explicitly disregard the broader tree structure $\mathcal{T}(u,v)$ associated with these paths (as defined in Lemma~\ref{lemma:adjacent-representatives-connection}). This isolation of the backbone path from its surrounding subtree guarantees that $\sigma_E$ only flags colors that actually bridge the blocks, a distinction that will prove vital for the subsequent endpoint retrieval. In the spirit of Lemma~\ref{lemma:hashing_multi_level_color_search}, we now develop a data structure that allows us to precisely locate these occurrences.

\begin{lemma}
    \label{lemma:block-or-path-color-finding}
    There exists a linear-space data structure that can, in $O(1)$ time, answer the following queries:
    \begin{enumerate}
        \item Given a color $c$ and a $k$-block $B_k$, retrieve $u_{high}^c$, the highest occurrence of $c$ in $B_k$.
        \item Given a color $c$ and two adjacent $k$-block representatives $u$ and $v$, retrieve $u_c$ and $v_c$, defined as the occurrences of $c$ closest to $u$ and to $v$, respectively, on the path $P(u,v)$.
    \end{enumerate}
\end{lemma}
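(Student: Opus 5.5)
The plan is to precompute all answers explicitly and store them in static perfect hash dictionaries keyed by the relevant pairs, exactly as in the proof of Lemma~\ref{lemma:hashing_multi_level_color_search}. The whole proof rests on a counting argument: for every level $k\in\{1,2,3\}$ the number of valid keys is $O(n)$. With three levels, the total space is $O(n)$.

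\textbf{Query 1.} For a fixed level $k$, consider all pairs $(c,B_k)$ with $c\in C(B_k)$. Each such pair is witnessed by at least one node of color $c$ in $B_k$. Since $\mathcal{B}_{t_k}$ partitions $V(\mathcal{T})$, the number of pairs is at most $n$. For each pair we store $u^c_{high}$, the occurrence of $c$ in $B_k$ of minimum depth. This node is unique: $B_k$ induces a connected subtree by Lemma~\ref{lemma:durocher_real_blocking}, so it has a unique highest node. Two occurrences of $c$ at equal minimal depth would have their LCA inside $B_k$ strictly higher, so we may define ``highest'' as the first occurrence in a DFS order of $B_k$; any consistent tie-break suffices.

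To build the dictionary, we run one DFS over each block and record, per color, the first node visited. This takes $O(n)$ time in total, plus the expected $O(n)$ time for the perfect hashing. A query is a single lookup in $O(1)$ worst-case time.

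\textbf{Query 2.} Fix level $k$, and let $(u,v)$ range over adjacent $k$-block representatives, i.e., the edges of $\mathcal{T}_{t_k}$. By Lemma~\ref{lemma:adjacent-representatives-connection}, the interiors $P(u,v)\setminus\{u,v\}$ of distinct adjacent pairs lie in the distinct subtrees $\mathcal{T}(u,v)$. These interiors are pairwise disjoint: a shared interior node $x$ would place the representatives of one pair inside the other's subtree $\mathcal{T}(u,v)$, contradicting the second part of that lemma. Hence $\sum_{(u,v)}|P(u,v)| \le n + 2|E(\mathcal{T}_{t_k})| = O(n)$.

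For each adjacent pair and each color $c$ on $P(u,v)$, we store the key $(c,u,v)$ with value $(u_c,v_c)$. We compute these values by walking $P(u,v)$ once from $u$ to $v$, recording the first and last occurrence of each color. The dictionary then has $O(n)$ keys, and all paths are walked in $O(n)$ total time. Keys are normalized so that $(c,u,v)$ and $(c,v,u)$ map to the same entry, with the roles of $u_c$ and $v_c$ swapped as needed. A query that misses the dictionary reports that $c$ does not occur on $P(u,v)$.

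\textbf{Main obstacle.} The delicate step is the disjointness claim for Query 2, since linear space depends entirely on it. The paths $P(u,v)$ must be taken strictly as backbone paths, not as the whole subtrees $\mathcal{T}(u,v)$, and we must show that no interior node of one backbone lies on another. I would argue this by rooting: orient each adjacent pair so that $u$ is the ancestor of $v$, which is valid because $lca(u,v)\in\{u,v\}$ by the argument in Lemma~\ref{lemma:adjacent-representatives-connection}. Each interior node $x$ then belongs to the backbone determined by the nearest representative below it on its root path, i.e., the unique representative $v$ in its subtree whose path to $x$ contains no other representative. If there were two such representatives $v$ and $v'$, their LCA would be an ancestor-or-self of $x$ below $u$, and by LCA closure it would itself be a representative, contradicting adjacency. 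This gives uniqueness, and hence disjointness.
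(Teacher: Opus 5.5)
Your proposal takes the same approach as the paper. Both store every answer explicitly in static perfect-hash dictionaries keyed by $(c,B_k)$ and by $(c,u,v)$, and both argue that each dictionary has $O(n)$ keys. You justify the counts more carefully than the paper. The paper only asserts that the backbone paths between adjacent representatives have total length $O(n)$. Your disjointness argument, via the second part of Lemma~\ref{lemma:adjacent-representatives-connection}, actually proves it: a shared interior node would put an endpoint of the other backbone inside $\mathcal{T}(u,v)$.

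Two small slips should be fixed.

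\emph{Query 1 preprocessing.} It does not compute the node you say you store. A preorder DFS does not visit nodes in nondecreasing depth. So the first $c$-colored node it meets can lie deep in an early branch while a shallower $c$-colored node sits in a later branch. Use a BFS from the root of the block's induced subtree, or keep, per color, the visited occurrence of least depth. Relatedly, your claim that the minimum-depth occurrence is unique is false: two $c$-colored children of a non-$c$ node tie. Their LCA being higher does not help, since it need not have color $c$. So the tie-break you mention is genuinely required.

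\emph{Alternative argument in your last paragraph.} The direction is reversed. If $v$ and $v'$ both lie in the subtree of $x$, then $lca(v,v')$ is a descendant-or-self of $x$, not an ancestor. It therefore lies on both $P(x,v)$ and $P(x,v')$. LCA closure then forces it to equal both $v$ and $v'$, since $x$ itself is not a representative.

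Neither slip affects the overall approach, and your first disjointness argument already suffices on its own.
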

\begin{proof}
    \begin{figure}[H]
    \centering
    \begin{tabular}{c}
        \includegraphics[width=0.6\linewidth]{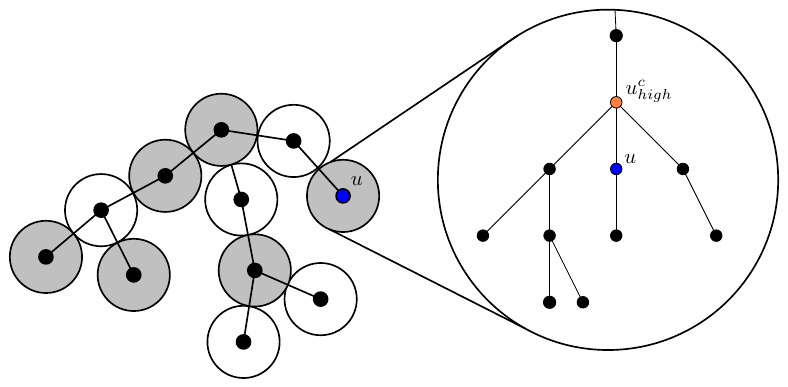}
    \end{tabular}
    \caption{An example blocking of a tree $\mathcal{T}$ with a vertex marking. A block is gray if it contains color $c$, or white otherwise. For each block $B_k$ containing $c$, our dictionary maintains the tuple $(c, B_k, u_{high}^c)$.}
    \label{fig:filled_smaller_blocks}
    \end{figure}
    
    For the block search, we build a dictionary $S_{\mathcal{B}_k}$ of valid tuples. Let $S_{\mathcal{B}_k}$ be the set of triples $(c, B_k, u_{high}^c)$. The total size of $S_{\mathcal{B}_k}$ is bounded by $O(n\log n)$ bits (which is $O(n)$ words), as each pair $(c, B_k)$ stores an $O(\log n)$-bit node identifier. We store this using static perfect hashing, allowing $O(1)$ lookups to retrieve $u_{high}^{c}$.

    \begin{figure}[H]
    \centering
    \begin{tabular}{c}
        \includegraphics[width=0.6\linewidth]{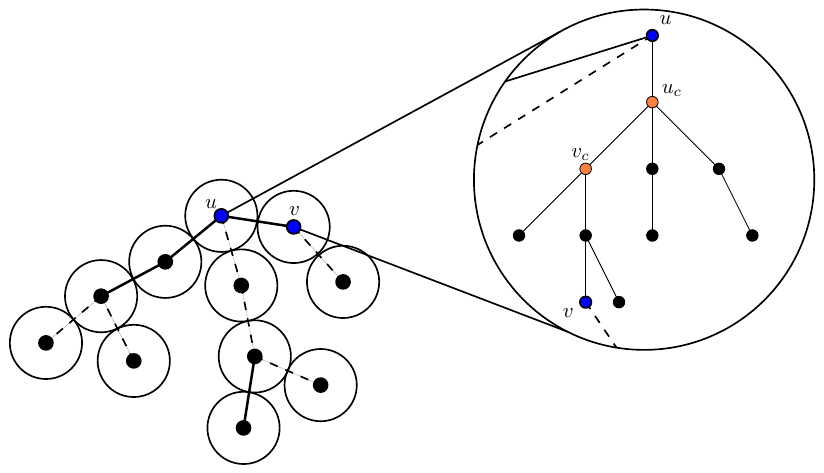}
    \end{tabular}
    \caption{An example blocking of a tree $\mathcal{T}$ with an edge marking. A path is a solid line if it contains color $c$, or dashed otherwise. For each path $P(u,v)$ containing $c$, our dictionary maintains the tuple $(c, u, v, u_c, v_c)$.}
    \label{fig:edge-containing-color-endpoint-occurrences}
    \end{figure}
    
    For retrieving occurrences on the extremities of a path $P(u,v)$ between adjacent $k$-block representatives, we build a dictionary $S_{E_k}$ containing tuples $(c, u, v, u_c, v_c)$. We can query $S_{E_k}$ using the triple $(c,u,v)$ to directly retrieve $u_c$ and $v_c$. Because the number of nodes occurring on any path between adjacent representatives sums to at most $O(n)$ across the tree, the number of distinct valid tuples is at most $O(n)$. Static perfect hashing again guarantees $O(1)$ retrieval.
\end{proof}

Combining these lemmas with the block tree definitions yields the following result for locating path endpoints.

\begin{corollary}
    \label{corollary:first_ocurrences_on_path_constant}
    Let $i_c$ and $j_c$ denote the first occurrences of a color $c$ on the query path $P(i, j)$ closest to $i$ and $j$, respectively. There exists a linear-space data structure that, given endpoints $i, j$, blocks $B_k(i), B_{k'}(j)$ ($k, k' \in \{2,3\}$), and a color $c$ present in the blocks but not in their predecessors $B_{k-1}(i), B_{k'-1}(j)$, determines $i_c$ (and symmetrically $j_c$) in $O(1)$ time.
\end{corollary}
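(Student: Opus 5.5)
The plan is to locate $i_c$ by walking up the block hierarchy around $i$. The key observation is that $c$ does not occur in $B_{k-1}(i)$. So the portion of $P(i,j)$ inside $B_{k-1}(i)$ contributes no occurrence, and the first occurrence $i_c$ must lie in $B_k(i)\setminus B_{k-1}(i)$. In other words, it lies in a level-$(k-1)$ block of the block tree $\mathcal{T}_{t_k/t_{k-1}}[B_k(i)]$ other than $B_{k-1}(i)$, or on a backbone path between adjacent $(k-1)$-representatives.

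I will treat $k=2$; the case $k=3$ is identical with $t_2,t_3$ in place of $t_1,t_2$, noting $t_3/t_2=\sqrt{\log n}$ and $t_2/t_1=\log\log n$. In both cases the block tree has $O(\log n/\log\log n)$ nodes, so it is a small tree in the sense of Lemma~\ref{lemma:small_tree_search}, possibly after adjusting constants.

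First, I determine how $P(i,j)$ traverses the block tree of $B_k(i)$. The path leaves $B_{k-1}(i)$ through the representative of $B_{k-1}(i)$ or one of its ancestors in the block tree. Using LCA queries (Lemma~\ref{lemma:constant_time_lca_query}) on $i,j$ and the representatives, I decide whether $P(i,j)$ exits $B_k(i)$ upward, toward the root of the block tree, or passes through the lca. In the generic case the portion of the path inside $B_k(i)$, after leaving $B_{k-1}(i)$, is an ancestor chain of $(k-1)$-blocks together with the backbone edges between their adjacent representatives. I precompute, for each $(k-1)$-block, its index inside the block tree, so that the starting node $u$ is available in $O(1)$.

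Second, I obtain $\sigma_V$ and $\sigma_E$ for $(c,B_k(i))$ from Lemma~\ref{lemma:hashing_multi_level_color_search}. Querying Lemma~\ref{lemma:small_tree_search} with $u$'s parent gives the lowest ancestor block containing $c$. Corollary~\ref{corollary:smalle_tree_edge_search}, queried from the edge above $u$, gives the lowest backbone edge above containing $c$. Comparing depths, I take whichever is encountered first along the path. If it is an edge $P(u',v')$, Lemma~\ref{lemma:block-or-path-color-finding}(2) returns the occurrence closest to the lower endpoint, which is $i_c$.

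If it is a block $B'$, Lemma~\ref{lemma:block-or-path-color-finding}(1) returns $u^c_{high}$. I then need the occurrence on the path rather than the highest one in the block. I will handle this by the observation that the path enters $B'$ at a single node and exits through its representative region; I would refine the dictionary to store, per $(c,B')$, the occurrence nearest the representative on the representative-to-parent-backbone route, or argue that when the path crosses $B'$ only along its backbone, the $\sigma_E$ check already catches it first. The case where the ancestor chain passes the lca and turns downward toward $j$ is handled symmetrically, with the search performed from $j$'s side and results confined to the chain below the lca.

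The main obstacle is exactly this block case: guaranteeing that an occurrence of $c$ in an intermediate $(k-1)$-block actually lies on $P(i,j)$, rather than in a hanging subtree of $\mathcal{T}(u,v)$, and that the returned node is the closest one to $i$. Lemma~\ref{lemma:adjacent-representatives-connection} is the tool I would use to argue that crossed blocks meet the path only along backbones, which would reduce everything to the $\sigma_E$ search. I expect this requires care about whether $\sigma_V$ or $\sigma_E$ is the right marking at each step.
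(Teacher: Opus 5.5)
There is a genuine gap, and it is in your opening ``key observation.'' The hypothesis only says $c\in C(B_k(i))\setminus C(B_{k-1}(i))$, i.e.\ $c$ occurs \emph{somewhere} in $B_k(i)$, possibly only on branches hanging off $P(i,j)$. It does not give $i_c\in B_k(i)\setminus B_{k-1}(i)$. In the subtasks that use this corollary ($C_5$, $C_6$, $C_9$), $i_c$ may lie many $k$-blocks up the ancestor chain of $i$. It may also lie on the downward segment $P(\mathrm{lca}(i,j),j)$, outside $B_{k'}(j)$.

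Every query in your plan (Lemma~\ref{lemma:small_tree_search}, Corollary~\ref{corollary:smalle_tree_edge_search}, Lemma~\ref{lemma:block-or-path-color-finding}) is confined to the small tree $\tau$ of $B_k(i)$. In these cases it returns nothing on the path, or an off-path node, and you have no $O(1)$ way to continue. Your ``symmetric'' fallback does not repair this. An upward search from $j$'s side yields $j_c$, the occurrence closest to $j$. But the occurrence closest to $i$ on $P(\mathrm{lca}(i,j),j)$ is the \emph{highest} $c$-node there. That node is neither what a lowest-marked-ancestor query returns nor, in general, inside $B_{k'}(j)$.

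The missing ingredient is the virtual tree $\mathcal{V}_c$ with its level-ancestor structure (Corollary~\ref{corollary:virtual_tree_queries}). Because $B_k(i)$ is connected and $u^c_{high}$ is its highest $c$-node, the parent of $u^c_{high}$ in $\mathcal{V}_c$ is the lowest $c$-colored proper ancestor of the top of $B_k(i)$. This is how the paper's proof jumps past the block in $O(1)$. Suppose the backbone search inside $\tau$ finds no occurrence on $P(i,\mathrm{lca}(i,j))$. If this parent lies on $P(i,\mathrm{lca}(i,j))$, it is $i_c$. Otherwise it is $a'_c$, lying above the lca. Then $i_c$ is the $(\delta_c-1)$-th ancestor of $j_c$ in $\mathcal{V}_c$, with $\delta_c=\mathrm{depth}_{\mathcal{V}_c}(j_c)-\mathrm{depth}_{\mathcal{V}_c}(a'_c)$, where $j_c$ comes from running the same upward procedure from $j$.

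As for the obstacle you flagged, your suspicion is correct. By LCA-closure, once the ancestor path of $i$ leaves $B_{k-1}(i)$ it runs along whole backbone paths between consecutive $(k-1)$-representatives. So inside $\tau$, only $\sigma_E$ should decide $i_c$, followed by Lemma~\ref{lemma:block-or-path-color-finding}(2) and a check of the result against the lca. Drop your $\sigma_V$ depth comparison, since it can report a block whose copy of $c$ is off the path. $\sigma_V$ is needed only to confirm $c\notin C(B_{k-1}(i))$, and $u^c_{high}$ only as the springboard out of the block.
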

\begin{proof}
    We assume the data structures from Lemmas~\ref{lemma:small_tree_search}, \ref{lemma:hashing_multi_level_color_search}, and \ref{lemma:block-or-path-color-finding}, alongside Corollary~\ref{corollary:smalle_tree_edge_search}, are constructed. Without loss of generality, we focus on routing the data structure to find $i_c$.

    First, we query the hashing structure from Lemma~\ref{lemma:hashing_multi_level_color_search} with $(c, B_k(i))$ to retrieve the node-marking $\sigma_V$ and edge-marking $\sigma_E$ for the block $B_k(i)$. Treating the hierarchy of $t_{k-1}$-blocks inside $B_k(i)$ as a small tree $\tau$, we use the structure from Lemma~\ref{lemma:small_tree_search} with start node $u = B_{k-1}(i)$ and bitstring $\sigma_V$ to verify that $c$ does not occur inside the predecessor block $B_{k-1}(i)$ but is present in the larger block $B_k(i)$. 

    Using Lemma~\ref{lemma:block-or-path-color-finding}, we retrieve $u^c_{high}$, the highest occurrence of color $c$ inside $B_k(i)$. Next, we identify the edge $e$ in $\tau$ such that $low(e)=u$, and query the edge search structure (Corollary~\ref{corollary:smalle_tree_edge_search}) with $(\tau, \sigma_E, e)$ to determine $e_c$, the lowest edge above $e$ containing $c$. If $e_c$ exists, we determine $v_{e}^c$, the lowest occurrence of $c$ on $e_c$, again using Lemma~\ref{lemma:block-or-path-color-finding}.

    \begin{figure}[H]
    \centering
    \hspace*{-2.3cm}
    \begin{tabular}{ccc}
        \includegraphics[width=0.4\linewidth]{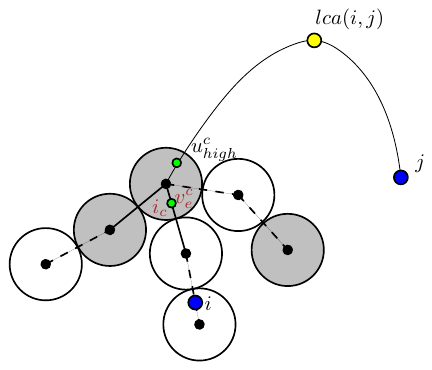} &
        \includegraphics[width=0.4\linewidth]{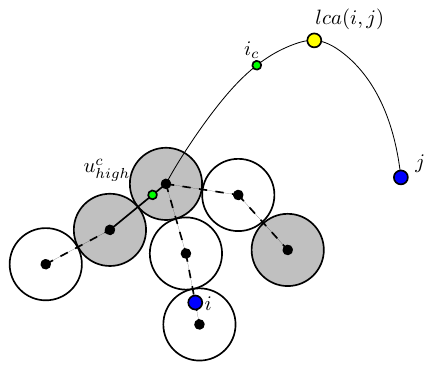} &
        \includegraphics[width=0.4\linewidth]{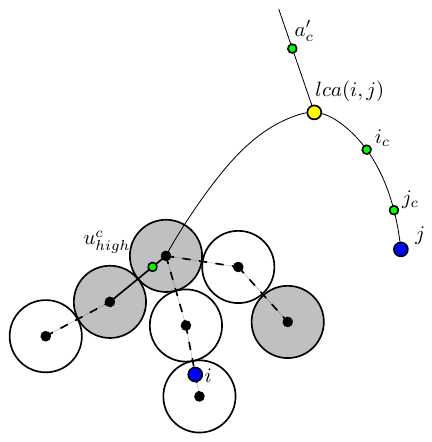} \\
        (a) $i_c$ found directly & (b) $i_c$ found as ancestor & (c) $i_c \in P(lca(i,j), j)$
    \end{tabular}
    \caption{The three topological cases for finding $i_c$, the first occurrence of color $c$ on $P(i,j)$. Solid black nodes represent occurrences of $c$. 
    In \textbf{(a)}, the data structures locate $v_e^c$ or $u_{high}^c$ directly on the upward path $P(i, lca(i,j))$, instantly identifying $i_c$. In this case, $i_c=v_e^c$. 
    In \textbf{(b)} Here $e_c$ is inexistent, and $u_{high}^c$ lies on side-branches, but the first color-$c$ ancestor of $u_{high}^c$ intersects the upward path $P(i, lca(i,j))$, identifying it as $i_c$. 
    In \textbf{(c)}, $c$ is absent from the upward path. We identify $a'_c$, the lowest out-of-path ancestor. By finding $j_c$ on the downward path and traversing $\delta_c - 1$ steps up the virtual tree $\mathcal{V}_c$ (dashed line), we land exactly on $i_c$.}
    \label{fig:path_endpoints_cases}
    \end{figure}

    By evaluating the topological positions of $u_{high}^c$ and $v_e^c$ relative to the path $P(i, lca(i,j))$, we can determine $i_c$ in $O(1)$ time. Specifically, if either node lies on $P(i, lca(i,j))$, the one closest to $i$ is exactly $i_c$.

    If neither node lies on $P(i, lca(i,j))$, we conclude that $c$ does not appear on the upward segment of the query path. In this case, we identify $a'_c$, the lowest color-$c$ ancestor of $u_{high}^c$ that lies outside $P(i,j)$. The true occurrence $i_c$ must therefore reside on the downward segment $P(lca(i,j), j)$.
    
    To resolve this scenario, we repeat the search process from endpoint $j$ to find $j_c$. If $j_c$ does not exist, $c$ is absent from $P(i,j)$ entirely. If $j_c$ does exist, we calculate the depth difference in the virtual tree $\mathcal{V}_c$: $\delta_c = depth_{\mathcal{V}_c}(j_c) - depth_{\mathcal{V}_c}(a_c')$. Because $a'_c$ is the lowest color-$c$ ancestor outside $P(i,j)$, the $(\delta_c-1)$-th ancestor of $j_c$ in the virtual tree $\mathcal{V}_c$ is guaranteed to lie on $P(i,j)$. Since all occurrences on this segment are ancestors of $j$, this $(\delta_c-1)$-th ancestor is exactly $i_c$. We retrieve it in $O(1)$ time using the Level Ancestor structure on $\mathcal{V}_c$.
\end{proof}

\subsection{Solving Subtask 5: Level 3 Intersection}
We apply a stratified blocking approach. We define a sequence of blocking factors $s_0, \dots, s_{|s|}$ such that $s_0 = t_3$, $s_{|s|} = t_2$, and $s_{k+1} = t_2 \cdot (\log(s_k/t_2))^2$. Note that $|s| = O(\log^* (t_3/t_2))$.
 
For each blocking factor $s_k$, let $\mathcal{B}_{s_k}$ be the set of $s_k$ blocks which partitions $V(\mathcal{T})$. For an $s_k$ block $B\in \mathcal{B}_{s_k}$, let $L_B$ denote the list of all nodes from $V(\mathcal{T})$ contained inside block $B$, ordered in increasing order of the node index. We construct tables $T_k$ for each level $s_k$. For every configuration of blocks $(B_{s_k}(i), B_{s_k}(j))$, we store the approximate position (using the most significant $2 \log(s_k/t_2)$ bits) of the optimal color $c$ within the linearized list of nodes of the blocks $B_{s_k}(i), B_{s_k}(j)$. The total space required is:
\[
    \sum_{k=0}^{|s|} \left(\frac{n}{s_k}\right)^2 \cdot 2 \log \left(\frac{s_k}{t_2}\right) = \left(\frac{n}{t_2}\right)^2 \sum_{k=0}^{|s|} \frac{2 \log r_k}{r_k^2} = O\left(\left(\frac{n}{t_2}\right)^2\right) \text{ bits,}
\]
where $r_k = s_k/t_2$. The query iterates through $O(t_2)$ candidate colors suggested by these tables. For each candidate $c$, we verify it is in $C(B_3) \cap \overline{C(B_2)}$ using Lemma~\ref{lemma:hashing_multi_level_color_search} in $O(1)$ time, find its endpoints $l, r$ using Corollary~\ref{corollary:first_ocurrences_on_path_constant}, and add $(c, l, r)$ to $S_2$.

\subsection{Solving Subtasks 6, 9, and 10}
\textbf{Subtask 6 (Mixed Level 2/3):} We iterate through all $c \in C(B_2(i))$. We verify membership in the target set $C_6$ using Lemma~\ref{lemma:hashing_multi_level_color_search} and find endpoints $(l, r)$ using Corollary~\ref{corollary:first_ocurrences_on_path_constant}, all in $O(1)$ per color. Valid triplets are added to $S_2$. The complexity is dominated by the size of $B_2(i)$, i.e., $O(t_2)$.

\textbf{Subtask 9 (Level 2 Intersection):} Analogous to Subtask 6; runtime is $O(t_2)$.

\textbf{Subtask 10 (Local):} We iterate through every color in $C(B_1(i)) \cup C(B_1(j))$ and add them directly to $S_1$ (verified in $O(\log \log n)$). The cost is $O(t_1)$.

Finally, to determine the actual answer to the query, we need to iterate through the colors inside $S_1$, find their endpoints on the path $P(i,j)$ by using the $O(\log\log n)$ time query from Lemma~\ref{lemma:durocher:path_frequency_query}, and compute the associated value of $g$. Then, we iterate through the triplets $(c,l,r)\in S_2$, and compute $g(c,l,r)$. The answer to the query for the path $P(i,j)$ is the color corresponding to the maximum value of $g$ computed in this step.

Processing $S_2$ will take $O(|S_2|)=O(t_2)$ time, and processing $S_1$ will take $O(|S_1|\cdot\log\log n)=O(t_1\cdot \log\log n)$ time. Since $t_1\log\log n=t_2$, the total time of processing the sets of candidates $S_2$ and $S_1$ is $O(t_2)$. We will present the preprocessing algorithms and analyze their respective runtime for each data structure in Appendix~\ref{appendix:precomputation_algorithms}. Now we only mention that a preprocessing runtime of $O(n(n/t_2))=O(n\cdot \frac{n}{t_1\log\log n})$ can be achieved. The total space required by our data structure is $O((n/t_2)^2)$ bits. By substituting $t_2$ by $t$, we formalize our result as follows.

\begin{lemma}
    \label{lemma:path_g_max_value_parameter_t}
    Given a colored tree $\mathcal{T}$, of $n$ nodes, a function $g$ respecting the properties from Section~\ref{section:problem_presentation}, and a blocking factor $t$, there exists a data structure, supporting path maximum $g$-value color queries in $O(t)$ time, requiring $O((\frac{n}{t})^2)$ space in bits, and an $O(n(n/t))$ preprocessing time.
\end{lemma}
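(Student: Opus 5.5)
The plan is to assemble the lemma from the subtask solutions of Section~\ref{section:solving_subtasks}, instantiated with the three-level hierarchy $t_1 = t/\log\log n$, $t_2 = t$, $t_3 = t\sqrt{\log n}$. The argument has three parts: correctness, query time, and space/preprocessing. The real content lies in the space and preprocessing accounting.

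\textbf{Correctness.} By the completeness argument of Section~\ref{subsection:proof_of_completeness}, the sets $C_1,\dots,C_{10}$ partition $C_{path}$. So the optimal color $c^\ast$ lies in exactly one $C_k$, and I would argue per subtask that $c^\ast$ (or a color of equal $g$-value) reaches $S_1 \cup S_2$.

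For subtasks 1--3 the tables are defined to store the maximizer restricted to that class. For subtask 1, the range contraction property is essential: a color in $C_1$ avoids both $B_3(i)$ and $B_3(j)$, so all its occurrences on $P(i,j)$ lie on $P(u,v)$ between the representatives. Hence $g(i,j,c)=g(u,v,c)$, and the table entry is the true class maximizer.

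For subtasks 5, 6, 9 every color in the relevant block is enumerated, or the tables point to it. Its endpoints $l,r$ come from Corollary~\ref{corollary:first_ocurrences_on_path_constant}, so $g(l,r,c)=g(i,j,c)$ again by range contraction. Subtask 10 enumerates all local colors. Returning the overall maximum over $S_1\cup S_2$ therefore yields $c^\ast$.

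\textbf{Query time.} Each subtask contributes $O(1)$ table lookups plus at most $O(t_2)$ candidates to $S_2$ and $O(t_1)$ to $S_1$. Verification costs $O(1)$ per $S_2$ element and $O(\log\log n)$ per $S_1$ element (Lemma~\ref{lemma:durocher:path_frequency_query}). This gives $O(t_2 + t_1\log\log n)=O(t)$. Locating $B_k(i), B_k(j)$ and their representatives is $O(1)$ with stored pointers.

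\textbf{Space.} I would sum the table sizes in bits.
\begin{itemize}
\item Subtask 1: $O((n/t_3)^2\log n) = O((n/t_2)^2)$, since $t_3^2 = t_2^2\log n$.
\item Subtasks 2 and 3: $O\bigl(\frac{n}{t_3}\cdot\frac{n}{t_1}\log\log n\bigr) = O\bigl(\frac{n^2\log\log n\cdot\log\log n}{t_2^2\sqrt{\log n}}\bigr) = o((n/t_2)^2)$.
\item Subtask 5: $O((n/t_2)^2)$ by the geometric-type sum already computed.
\end{itemize}
All auxiliary structures (Lemmas~\ref{lemma:small_tree_search}, \ref{lemma:hashing_multi_level_color_search}, \ref{lemma:block-or-path-color-finding}, the LCA/LA structures, Corollary~\ref{corollary:virtual_tree_queries}, and Lemma~\ref{lemma:durocher:path_frequency_query}) are $O(n)$ words $=O(n\log n)$ bits. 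That is dominated by $(n/t)^2$ in the regime of interest $t\le\sqrt{n/\log n}$; I would state this assumption explicitly or absorb it.

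\textbf{Preprocessing, the main obstacle.} The table contents must be filled within $O(n\cdot n/t)$. My approach: for each representative $u$ of a block at the coarsest relevant level, run a DFS of $\mathcal{T}$ from $u$ while maintaining incremental per-color $g$-values. Range contraction lets us keep, for each color, its first occurrence from $u$ and update $g$ in $O(1)$ via the oracle when a new occurrence is met. The running maximum per class is updated when entering each representative $v$, and rolled back on DFS exit.

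The delicate part is that classes like $C_1$ exclude colors of $B_3(v)$. Handling this while still paying $O(1)$ amortized per visited node needs care: I would maintain a bucket or ordered structure keyed by $g$-value for monotone updates, or compute answers per target block by scanning $B(v)$'s colors.

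The subtask 5 tables at all scales $s_k$ cost $\sum_k n\cdot n/s_k = O(n\cdot n/t_2)$, since $s_k\ge t_2$ grows super-geometrically. Each table is filled by the same DFS sweep over representatives of $\mathcal{B}_{s_k}$, giving $O(n\cdot n/t)$ overall.
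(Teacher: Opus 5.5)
Your correctness, query-time and space accounting match the paper's assembly of Subtasks~1--10. Your caveat that the $O(n)$-word auxiliary structures are dominated by $(n/t)^2$ bits only when $t\le\sqrt{n/\log n}$ is a fair sharpening of the statement.

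The gap is in the preprocessing, at the step you call delicate and leave open. For the Subtask~1 table the maximizer ranges over \emph{all} colors of $P(u,v)$ outside $C(B_3(u))\cup C(B_3(v))$. Scanning block colors therefore only tells you what to exclude; you still need a max-structure over the current path colors. Buckets do not help. For general $g$ (signed weights, least frequent element, mean-type values) a color's value is an arbitrary word that can move in either direction when an occurrence is pushed or popped. With a heap or balanced search tree, each of your $O(n/t_3)$ sweeps performs $\Theta(n)$ updates: one per push/pop, plus $O(t_3)$ temporary deletions at each of the $O(n/t_3)$ target blocks. At $O(\log n)$ apiece this totals $O(n^2\log n/t_3)=O(n^2\sqrt{\log n}/t)$, which misses $O(n\cdot n/t)$ by a $\sqrt{\log n}$ factor. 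For the Subtask~2 and~3 tables your sweep-and-scan idea does fit the budget; the problem is specific to $C_1$.

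The paper's missing ingredient is a fourth level $t_4=t_3\sqrt{\log n}=t\log n$, which splits $C_1$ into two parts:
\begin{itemize}
\item Colors absent from both $t_4$-blocks. Their $g$-values are unchanged when the $t_3$-representatives are replaced by the $t_4$-representatives (blocks are connected and nested, plus range contraction). So the heap sweep is run only from the $O(n/t_4)$ representatives of $\mathcal{B}_{t_4}$, at total cost $O(n\log n\cdot n/t_4)=O(n\cdot n/t)$.
\item Colors present in $B_4(u)$ or $B_4(v)$ but in neither $B_3(u)$ nor $B_3(v)$. These are handled by the small-tree tabulation traversal of Lemma~\ref{lemma:universal_precomputation_traversal}, with the $O(\log\log n)$ colored-ancestor oracle where needed, within $O(n\cdot n/t)$ time.
\end{itemize}
Alternatively, you could keep your single sweep level and spend the $\sqrt{\log n}$ slack. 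In each sweep from $v$, first compute the $n$ values $g(v,y,\mathcal{T}[y])$, one per node $y$; by range contraction this is the value of color $\mathcal{T}[y]$ whenever $y$ is its last occurrence on the current DFS path. Rank-reduce these values by integer sorting in $O(n\log\log n)$ time and maintain the current ones in a van Emde Boas structure. This gives $O(n\log\log n)$ per sweep and $O(n^2\log\log n/t_3)=o(n^2/t)$ overall. Either way, the structure and its cost have to be pinned down; as written, the $O(n\cdot n/t)$ preprocessing bound is not established.
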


By setting $t_1=\frac{\sqrt{n/w}}{\log\log n}$, while keeping the values of $t_2$ and $t_3$ the same as a function of $t_1$, we conclude this section with our main result directly following Lemma~\ref{lemma:path_g_max_value_parameter_t}.

\begin{theorem}
    \label{theorem:path_g_max_value_linear_space}
    Given a colored tree $\mathcal{T}$, of $n$ nodes, a function $g$ respecting the properties from Section~\ref{section:problem_presentation}, there exists a linear-space data structure, supporting path maximum $g$-value color queries in $O(\sqrt{n/w})$ time, requiring an $O(n\sqrt{nw})$ preprocessing time.
\end{theorem}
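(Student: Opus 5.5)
The plan is to instantiate Lemma~\ref{lemma:path_g_max_value_parameter_t} with $t = t_2 = t_1\log\log n = \sqrt{n/w}$. This corresponds to $t_1 = \frac{\sqrt{n/w}}{\log\log n}$, with $t_2$ and $t_3$ still defined from $t_1$ as in the parameter selection. Once this substitution is fixed, each of the three claimed bounds should follow by direct calculation. The remaining care is to check that the lower-order components of the structure, which were absorbed into the lemma's statement, stay within the same budget.

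\textbf{Space.} The lemma gives $O((n/t)^2)$ bits. Substituting $t=\sqrt{n/w}$ gives $(n/t)^2 = n^2 \cdot w/n = nw$ bits, which is $O(n)$ words of $w$ bits. I would also check that the auxiliary structures fit in $O(n)$ words. These are the level ancestor and LCA structures (Lemmas~\ref{lemma:constant_time_level_ancestor_queries} and~\ref{lemma:constant_time_lca_query}), the virtual trees (Corollary~\ref{corollary:virtual_tree_queries}), the lowest colored ancestor structure (Lemma~\ref{lemma:durocher:path_frequency_query}), the perfect hashing dictionaries (Lemmas~\ref{lemma:hashing_multi_level_color_search} and~\ref{lemma:block-or-path-color-finding}), the small-tree tables (Lemma~\ref{lemma:small_tree_search}), and the $O(n)$-space oracle for $g$. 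Each is linear by its own statement. The tables for Subtasks~1--3 use $O((n/t_3)^2)$ words and $O((n/t_3)(n/t_1)\log\log n)$ bits. Both are dominated by $(n/t_2)^2$ bits, because $t_3 = t_2\sqrt{\log n}$ with $\sqrt{\log n}\ge \log\log n$. The stratified Subtask~5 tables sum to $O((n/t_2)^2)$ bits, as already computed.

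\textbf{Query time and preprocessing.} The query time is $O(t) = O(\sqrt{n/w})$, coming from processing $S_1$ in $O(t_1\log\log n)$ time and $S_2$ in $O(t_2)$ time. Preprocessing is $O(n\cdot n/t) = O(n\sqrt{nw})$. The blocking constructions cost an additional $O(n\log n)$ and the perfect hashing an additional expected $O(n)$. Both are subsumed, since $n\log n \le n\sqrt{nw}$.

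\textbf{Main obstacle.} The constraints needed by the intermediate lemmas must hold under this choice of parameters. In particular, Lemma~\ref{lemma:hashing_multi_level_color_search} requires $t_2/t_1 = \log\log n = O(\log n)$, and the tables at Levels 2 and 3 treat the blocks as small trees. Level 2 is fine, since $t_2/t_1=\log\log n$ is below $s=\frac{\log n}{8\log\log n}$. At Level 3, however, $t_3/t_2 = \sqrt{\log n}$ could exceed $s$ only by lower-order factors. I would first try shrinking the constant in $t_3$, or equivalently choosing $t_3 = t_2\cdot\frac{\log n}{8\log\log n}$. The space bound for Subtasks~1--3 is unaffected by this change. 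I would also handle degenerate small $n$, where $t_1<1$, by falling back to a trivial $O(n)$ scan, which is constant-size in that regime.
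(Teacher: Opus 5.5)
Your proposal is correct and follows the paper's argument exactly: the theorem is obtained by setting $t_1=\sqrt{n/w}/\log\log n$, i.e.\ $t=t_2=\sqrt{n/w}$, in Lemma~\ref{lemma:path_g_max_value_parameter_t}, which turns $O((n/t)^2)$ bits into $O(nw)$ bits $=O(n)$ words, $O(t)$ query time into $O(\sqrt{n/w})$, and $O(n\cdot n/t)$ preprocessing into $O(n\sqrt{nw})$. Your extra checks do no harm, but the Level~3 worry does not arise asymptotically, since $\sqrt{\log n}\big/\bigl(\frac{\log n}{8\log\log n}\bigr)=8\log\log n/\sqrt{\log n}\to 0$, so $t_3$ need not be changed; also, for the Subtask~3 table the inequality actually needed is $\sqrt{\log n}\ge(\log\log n)^2$, not $\sqrt{\log n}\ge\log\log n$, and it likewise holds for large $n$.
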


\section{\boldmath Faster Path \texorpdfstring{$\alpha$}{alpha}-Minority Queries on Trees}
\label{section:alpha_minority_in_alpha-1_time}

An $\alpha$-minority in a multiset $A$, for some parameter $\alpha \in [0,1]$, is defined as an element that occurs at least once in $A$ but constitutes no more than an $\alpha$ fraction of the total size of $A$. The path $\alpha$-minority query for a tree path $(u,v)$ asks to report any single color that occurs at most an $\alpha$ fraction of the tree path between nodes $u$ and $v$. Durocher et al.~\cite{durocher2016linear} introduced a linear-space data structure capable of answering path $\alpha$-minority queries on trees in $O(\alpha^{-1}\log\log n)$ time, where the parameter $\alpha$ is specified at query time, and can differ for every query. In this section, we show that by incorporating randomness, the query time can be improved to $O(\alpha^{-1})$ for a Monte Carlo algorithm (success probability $\geq 1/2$) and to expected $O(\alpha^{-1} + \log\log n)$ for a Las Vegas algorithm (success probability $1$).

\subsection{\boldmath \texorpdfstring{$O(\alpha^{-1})$}{O(alpha-1)} Time Algorithm with Success Probability \texorpdfstring{$\geq 1/2$}{geq1/2}}

We use the following result from Durocher et al.~\cite{durocher2016linear} to efficiently retrieve candidate colors near the endpoints of the path.

\begin{lemma}[Durocher et al.~\cite{durocher2016linear}]
    There exists an $O(n)$-space data structure that supports $k$-nearest distinct ancestor queries on trees in $O(k)$ time. The query returns the $k$ distinct colors closest to a node $u$ on the path to the root, in order of increasing distance.
\end{lemma}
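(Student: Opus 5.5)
This lemma is the standard tree analogue of the $k$-nearest distinct ancestor structure used for arrays, and I would obtain it from Durocher et al.\ by reducing to prefix structures along root paths. The plan is to guarantee two things for a node $u$. First, its first $k$ distinct colors toward the root can be produced by repeatedly jumping to the nearest ancestor whose color has not yet been seen. Second, each such jump costs $O(1)$ amortized, without the $O(\log\log n)$ cost of a lowest colored ancestor query (Lemma~\ref{lemma:durocher:path_frequency_query}).

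The key definition is, for every node $v$, the pointer $\mathrm{next}(v)$: the nearest proper ancestor $x$ of $v$ such that $\mathcal{T}[x]$ does not occur on $P(v, x)$ strictly below $x$. This is the first ancestor that introduces a new color relative to $v$ alone. That pointer is not enough, since after several steps we must avoid all colors already collected. So I would instead use the virtual trees $\mathcal{V}_c$ of Section~\ref{subsection:prerequisites:subsubsection:intermediary_results_on_colored_trees}. A node $x$ is the first occurrence of its color on the upward path from $u$ if and only if $u$ lies in the subtree of $x$ but not in the subtree of any child of $x$ in $\mathcal{V}_{\mathcal{T}[x]}$.

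This turns the query into reporting, in order of decreasing depth, the $k$ deepest ancestors $x$ of $u$ satisfying that condition. Equivalently, using Euler-tour intervals, we report the $k$ deepest nodes $x$ whose interval contains $\mathrm{pre}(u)$, after removing the intervals of their same-color virtual children. Each node contributes its interval minus $O(\deg_{\mathcal{V}})$ holes, so there are $O(n)$ elementary intervals in total. The query becomes a stabbing query on $O(n)$ intervals, reporting the $k$ with largest depth in sorted order.

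For that sorted reporting in $O(k)$ time, I would use a heap-ordered structure combining Frederickson's selection or a Cartesian-tree/RMQ approach. Store the intervals in a structure where the stabbing set at a point is a union of $O(1)$ canonical lists, each sorted by depth. This works because ancestors nest along a single root path, so the stabbed sets at $\mathrm{pre}(u)$ form a chain ordered by depth. The chain structure is what should make a persistent linked list per leaf-to-root path work. Concretely, I would define for each node $u$ the list $L(u)$ of first occurrences along its root path. Then $L(u)$ is obtained from $L(\mathrm{parent}(u))$ by prepending $u$ and deleting the old occurrence of $\mathcal{T}[u]$. Reading the first $k$ entries of $L(u)$ answers the query.

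The main obstacle is storing all $L(u)$ in $O(n)$ space while still allowing $O(k)$ traversal. A deletion in the middle of a persistent list normally costs path copying. My first attempt would be to show that the only deleted element is the unique occurrence of $\mathcal{T}[u]$. Then I would represent $L(u)$ implicitly: it is $u$ followed by $L(\mathrm{parent}(u))$ with one skip. Traversing it means following parent-lists while remembering $O(1)$ skips per step, and collapsing chains of skips with level-ancestor jumps (Lemma~\ref{lemma:constant_time_level_ancestor_queries}) so that each reported color costs $O(1)$. If this bookkeeping fails, I would fall back to the exact construction of Durocher et al.
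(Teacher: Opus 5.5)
Your reformulations are correct: the virtual-tree test for first occurrences holds, it yields $O(n)$ elementary intervals, and $L(u)$ is indeed $u$ followed by $L(\mathrm{parent}(u))$ with the unique node of color $\mathcal{T}[u]$ removed. None of this is the hard part, though, and the step that is supposed to give $O(k)$ time in $O(n)$ space fails.

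In your implicit ``one skip per list'' representation, traversing $L(u)$ unfolds into $L(\mathrm{parent}(u))$, then $L(\mathrm{parent}^2(u))$, and so on. So every \emph{tree node} on the root path is visited, and the pending skips grow by one per visited node, not per reported color. The cost is the length of the path from $u$ to its $k$-th first occurrence, which is not bounded in terms of $k$. For example, if the colors read upward from $u$ are $1,2,1,2,\dots,1,2,3$, a query with $k=3$ touches $\Theta(n)$ nodes.

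Level-ancestor jumps cannot shorten these runs, because the jump length depends on $u$. The successor of a node $z$ in $L(u)$ is a function of the pair $(u,z)$, not of $z$ alone. Take $z$ of color $2$ with parent $b$ of color $1$ and two children, $u_1$ of color $1$ and $u_2$ of color $4$. The successor of $z$ is $b$ in $L(u_2)$ but lies strictly above $b$ in $L(u_1)$. Tabulating it over all such pairs is exactly the superlinear blow-up you were trying to avoid.

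Similarly, you assert that the stabbed set at $\mathrm{pre}(u)$ is a union of $O(1)$ precomputed depth-sorted lists, but you give no construction. The chain property alone does not provide one, and sharing such lists across query points in linear total space is again the persistence problem.

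As written, your proposal does not establish the lemma; it rests entirely on your closing fallback to Durocher et al. That fallback is in fact what the paper does. The paper does not prove this statement; it imports it as a cited result from Durocher et al.~\cite{durocher2016linear}. A self-contained proof would need a genuinely output-sensitive mechanism, for instance a linear-space, top-$k$, depth-sorted stabbing structure over your $O(n)$ intervals with $O(1)$ startup cost. That is the nontrivial content behind the citation, and your proposal does not supply it.
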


Let $P(i, j)$ be the query path, and let $w = \text{LCA}(i, j)$. We define the length of the path as $N = |P(i, j)|$. We propose the following $4$-phase query algorithm:

\begin{itemize}
    \item \textbf{Phase 1 (Candidate Collection):}
    We retrieve the sets of distinct colors closest to the endpoints. Let $k = \lceil 2\alpha^{-1} \rceil$.
    \begin{align*}
        S_l &\gets \text{the } k \text{ distinct colors on } P(i, w) \text{ closest to } i. \\
        S_r &\gets \text{the } k \text{ distinct colors on } P(j, w) \text{ closest to } j.
    \end{align*}
    Note that if the path segment $P(i, w)$ contains fewer than $k$ distinct colors, $S_l$ will simply contain all distinct colors on that segment (and similarly for $S_r$).

    \item \textbf{Phase 2 (Pruning Majorities):}
    We remove obvious majorities. For each $c \in S_l$, if the frequency of $c$ on the sub-path $P(i, w)$ exceeds $\alpha N$, we remove $c$ from $S_l$. We perform the symmetric check for $S_r$ on $P(j, w)$. This is valid because if a color is a majority on a sub-path with frequency exceeding the threshold for the \textit{entire} path, it is certainly an $\alpha$-majority for the whole path.

    \item \textbf{Phase 3 (Exact Verification of Overlap):}
    Identify colors $c \in S_l \cap S_r$. For these colors, we have identified both their first occurrence from $i$ and their first occurrence from $j$. Since the sets $S_l$ and $S_r$ are built searching inwards from the endpoints, if $c$ appears in both, we have bounded its occurrences. We can compute the exact frequency of $c$ on $P(i, j)$ in $O(1)$ time (using the precomputed frequency data structures). If the frequency is $\leq \alpha N$, we return $c$ immediately as the answer. We remove all colors $c\in S_l\cap S_r$ from $S_l$ and $S_r$ respectively. Thus, at the end of this phase, $S_l\cap S_r=\emptyset$. 

    \item \textbf{Phase 4 (Random Sampling):}
    If no answer was returned in Phase 3, we construct the set of remaining candidates $D = S_l \cup S_r$. We select a color $c$ uniformly at random from $D$ and return it.
\end{itemize}

\begin{lemma}
    The query algorithm returns an $\alpha$-minority in $O(\alpha^{-1})$ time with probability $\ge \frac{1}{2}$.
\end{lemma}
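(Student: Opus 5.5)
The proof splits into a running-time part and a probability part.

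\textbf{Running time.} Phase 1 uses the $k$-nearest distinct ancestor structure twice with $k=\lceil 2\alpha^{-1}\rceil$, so it costs $O(\alpha^{-1})$. In Phase 2 we need the frequency of each $c\in S_l$ on $P(i,w)$ in $O(1)$. The query returns the closest occurrence $u_c$ of $c$ to $i$. Let $a_c$ be the highest occurrence of $c$ on $P(i,w)$. Then the frequency is the difference of depths of $u_c$ and $a_c$ in $\mathcal{V}_c$, plus one. We obtain $a_c$ from $u_c$ using the depth of $w$ and the precomputed $\mathcal{V}_c$-depths. For this we rely on the frequency machinery of Durocher et al., which reports the frequency of $c$ on an ancestor path from its first occurrence in $O(1)$. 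Phase 3 computes exact frequencies in $O(1)$ per color from the two innermost-facing occurrences $u_c$ and $v_c$. We use the virtual-tree LCA and level-ancestor structure of Corollary~\ref{corollary:virtual_tree_queries}, exactly as in relation~(\ref{relation:g_path_colored_sum}) with unit weights. Intersection is done with a timestamped array indexed by color. Phase 4 is $O(1)$. The total is $O(\alpha^{-1})$.

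\textbf{Correctness of the returned colors.} Phase 2 removes only colors that are certainly not $\alpha$-minorities. Phase 3 returns only verified colors. So it remains to bound the probability that the random $c\in D$ in Phase 4 is an $\alpha$-minority.

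\textbf{Probability bound.} Every $c\in D$ lies on $P(i,j)$. The key claim is that at most half of $D$ are non-minorities, that is, colors with total frequency $>\alpha N$ on $P(i,j)$. Consider a color $c\in S_l\setminus S_r$ that survived Phase 2. It has frequency $\le \alpha N$ on $P(i,w)$.
\begin{itemize}
\item If $S_r$ was not truncated, then $S_r$ lists every color on $P(j,w)$. Since $c\notin S_r$, $c$ does not occur on $P(j,w)$. Note that $c$ was removed from $S_r$ in Phase 3 only if $c$ was in both sets, which is not our case. Hence the frequency of $c$ is its $P(i,w)$ frequency, which is $\le\alpha N$, and $c$ is a minority.
\item If $S_r$ was truncated, then there are $k$ distinct colors on $P(j,w)$ closer to $j$ than any occurrence of $c$ there.
\end{itemize}
A symmetric dichotomy holds for $S_r$.

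The counting I plan is as follows. Non-minority colors have frequency $>\alpha N$ each and are pairwise disjoint in occurrences. Hence there are fewer than $1/\alpha$ of them, that is, at most $\lceil 1/\alpha\rceil-1$ colors in total on the path. We must also show $|D|\ge 2\cdot(\#\text{non-minorities in }D)$. The main obstacle is that $D$ might be tiny after pruning, for instance when almost all of $S_l$ was removed in Phase 2. In that case I will argue directly. Removals in Phase 2 are themselves non-minorities, so they come out of the same budget of fewer than $1/\alpha$ colors. The same holds for Phase 3 removals that were not returned.

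Suppose both $S_l$ and $S_r$ were truncated, so each had $k\ge 2/\alpha$ colors. Removed colors plus non-minorities number at most $1/\alpha$ globally. Hence $|D|\ge 2k-2/\alpha-(\text{overlap})$. The overlap colors are either returned or are non-minorities, so they also consume budget. From this I expect $|D|\ge 2/\alpha$ while the non-minorities in $D$ number fewer than $1/\alpha$. This gives success probability $\ge 1/2$.

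If a side was not truncated, the first bullet above shows that all of the other side's survivors are minorities. A short case analysis then handles the remaining configurations. The tight accounting of these budgets, with $k=\lceil 2\alpha^{-1}\rceil$ chosen precisely so that the factor-two margin appears, is the step requiring care. The degenerate case $D=\emptyset$ must be excluded, or shown to imply that no minority exists.
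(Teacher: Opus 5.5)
\textbf{Gap in the first bullet.} From $c\notin S_r$ and ``$S_r$ not truncated'' you conclude that $c$ does not occur on $P(j,w)$, and you justify this only by ruling out removal in Phase~3. Phase~2 also deletes colors from $S_r$, and it deletes a color only from the list on whose side its sub-path frequency exceeds $\alpha N$. Take a color that occurs more than $\alpha N$ times on $P(j,w)$ but at most $\alpha N$ times on $P(i,w)$. It is dropped from $S_r$, kept in $S_l$, never examined in Phase~3, and ends up in $D$ as a non-minority. So ``if one side is not truncated, all survivors of the other side are minorities'' is false. No case analysis can rescue the configuration with neither list truncated, because for the algorithm exactly as written the lemma fails there.

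Here is a counterexample. Take $\alpha=1/3$, so $k=6$. Let $\mathcal{T}$ be a path whose colors read $a\,a\,b\,b\,b\,a\,b\,d$ from $i$ to $j$, rooted at the sixth node, which is then $w$. Then $N=8$ and $\alpha N=8/3$. The colors $a$ and $b$ occur $3$ and $4$ times, so $d$ is the only minority. Both lists are complete. Phase~2 empties $S_l$, since $a$ and $b$ each occur three times on $P(i,w)$. It removes nothing from $S_r=\{a,b,d\}$, since each color occurs once on $P(j,w)$. Phase~3 does nothing, and the random pick from $D=\{a,b,d\}$ succeeds with probability $1/3$. Similar examples with $1/\alpha-1$ such colors push this down to $\alpha$. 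The paper's Case~1 has the same hole: it claims probability $1$ whenever the path has at most $2k$ colors, which this example contradicts. It also conflates ``few colors on the path'' with ``neither list truncated''. So you cannot borrow it.

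\textbf{Repair.} Your overall plan, splitting on truncation and charging everything to the budget of fewer than $1/\alpha$ non-minorities, is the right one once this is fixed. Let Phase~2 discard $c$ from \emph{both} lists, since a sub-path frequency above $\alpha N$ already certifies a global non-minority. With that change your bullet holds, the untruncated case succeeds with probability $1$, and $D=\emptyset$ there means no minority exists.

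For the truncated case, replace the estimate $|D|\ge 2k-2/\alpha-(\text{overlap})$ with a single count, valid even for the literal algorithm. Let $U$ be the union of the two lists output by Phase~1, and let $m<1/\alpha$ be the number of non-minorities in $U$. Minorities are never pruned in Phase~2, and they would be returned in Phase~3 if they lay in both lists. So on reaching Phase~4 all $|U|-m$ minorities of $U$ lie in $D$, while $D$ contains at most $m$ non-minorities. If either list was truncated then $|U|\ge k\ge 2/\alpha$, so the success probability is at least $1-m/|U|>1/2$. This is exactly where your worry that $D$ may be small is justified; the paper's Case~2 simply asserts $|D|\approx 4\alpha^{-1}$.

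\textbf{Phase~2 timing.} Your time bound for Phase~2 relies on an unstated $O(1)$ ``frequency machinery''. Finding the highest occurrence $a_c$ on $P(i,w)$ is a lowest-colored-ancestor search, which Lemma~\ref{lemma:durocher:path_frequency_query} provides only in $O(\log\log n)$. A threshold test suffices instead. The frequency on $P(i,w)$ exceeds $\alpha N$ iff the $\lfloor\alpha N\rfloor$-th ancestor of $u_c$ in $\mathcal{V}_c$ exists, is not $r_c$, and has depth in $\mathcal{T}$ at least that of $w$. That is one level-ancestor query by Corollary~\ref{corollary:virtual_tree_queries}.
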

\begin{proof}
    The time complexity is dominated by the retrieval of $O(\alpha^{-1})$ colors and the set operations, which take $O(\alpha^{-1})$ time. The frequency checks in Phase 2 and 3 take $O(1)$ using the standard Level Ancestor and precomputed depth/frequency arrays (as employed in the virtual tree structures).

    We analyze the success probability in two cases:
    \begin{enumerate}
        \item \textbf{Case 1: The path has few distinct colors.}
        Suppose the number of distinct colors on $P(i, j)$ is at most $2\lceil 2\alpha^{-1} \rceil$. In this scenario, $S_l$ and $S_r$ effectively cover the entire path. Any color present on the path is in $S_l \cup S_r$. Specifically, any $\alpha$-minority is in this set. Thus, the algorithm succeeds with probability $1$.

        \item \textbf{Case 2: The path has many distinct colors.}
        Suppose there are more than $2\lceil 2\alpha^{-1} \rceil$ distinct colors. The number of $\alpha$-majorities (colors with frequency $> \alpha N$) is strictly less than $\alpha^{-1}$.
        Our candidate set $D = S_l \cup S_r$ has size up to $2 \lceil 2\alpha^{-1} \rceil \approx 4\alpha^{-1}$.
        Even in the worst case where every $\alpha$-majority is present in $D$, there are at most $\alpha^{-1}$ "bad" candidates.
        The number of "good" candidates ($\alpha$-minorities) in $D$ is at least $|D| - \alpha^{-1} \approx 3\alpha^{-1}$.
        Therefore, the probability of selecting an $\alpha$-minority at random is at least: $\frac{3\alpha^{-1}}{4\alpha^{-1}} = \frac{3}{4} \ge \frac{1}{2}.$
    \end{enumerate}
\end{proof}

\subsection{\boldmath Expected \texorpdfstring{$O(\alpha^{-1} + \log\log n)$}{O(alpha-1+loglog n)} Time Algorithm}

To achieve a success probability of $1$, we extend Phase 4 into a Las Vegas algorithm. Instead of returning the random candidate immediately, we verify it. We can determine the exact endpoints of $c$ on $P(i, j)$ using an $O(\log \log n)$-time predecessor/successor query (using the structure from Lemma~\ref{lemma:durocher:path_frequency_query}). Once the endpoints are known, calculating the frequency takes $O(1)$ time. We repeatedly pick a color $c$ at random from $S_l\cup S_r$, and verify its frequency, until an $\alpha$-minority is found. 

Since the probability of success in one sampling step is at least $1/2$, the number of trials follows a geometric distribution with success parameter $p \ge 1/2$. The expected number of trials is at most $2$.
The total expected time complexity is: $O(\alpha^{-1}) + E[\text{trials}] \cdot O(\log \log n) = O(\alpha^{-1} + \log \log n).$



\bibliography{main}

@article{chan2014linear,
  title={Linear-space data structures for range mode query in arrays},
  author={Chan, Timothy M and Durocher, Stephane and Larsen, Kasper Green and Morrison, Jason and Wilkinson, Bryan T},
  journal={Theory of Computing Systems},
  volume={55},
  number={4},
  pages={719--741},
  year={2014},
  publisher={Springer}
}

@inproceedings{chan2012linear,
  title={Linear-space data structures for range minority query in arrays},
  author={Chan, Timothy M and Durocher, Stephane and Skala, Matthew and Wilkinson, Bryan T},
  booktitle={Scandinavian Workshop on Algorithm Theory},
  pages={295--306},
  year={2012},
  organization={Springer}
}

@article{durocher2016linear,
  title={Linear-space data structures for range frequency queries on arrays and trees},
  author={Durocher, Stephane and Shah, Rahul and Skala, Matthew and Thankachan, Sharma V},
  journal={Algorithmica},
  volume={74},
  pages={344--366},
  year={2016},
  publisher={Springer}
}

@inproceedings{brodnik1999resizable,
  title={Resizable arrays in optimal time and space},
  author={Brodnik, Andrej and Carlsson, Svante and Demaine, Erik D and Ian Ian Munro, J and Sedgewick, Robert},
  booktitle={Algorithms and Data Structures: 6th International Workshop, WADS’99 Vancouver, Canada, August 11--14, 1999 Proceedings 6},
  pages={37--48},
  year={1999},
  organization={Springer}
}

@article{krizanc2005range,
  title={Range mode and range median queries on lists and trees},
  author={Krizanc, Danny and Morin, Pat and Smid, Michiel},
  journal={Nordic Journal of Computing},
  volume={12},
  number={1},
  pages={1--17},
  year={2005},
  publisher={Publishing Association Nordic Journal of Computing}
}

@inproceedings{hagerup2001efficient,
  title={Efficient minimal perfect hashing in nearly minimal space},
  author={Hagerup, Torben and Tholey, Torsten},
  booktitle={STACS 2001: 18th Annual Symposium on Theoretical Aspects of Computer Science Dresden, Germany, February 15--17, 2001 Proceedings 18},
  pages={317--326},
  year={2001},
  organization={Springer}
}

@inproceedings{belazzougui2009hash,
  title={Hash, displace, and compress},
  author={Belazzougui, Djamal and Botelho, Fabiano C and Dietzfelbinger, Martin},
  booktitle={European Symposium on Algorithms},
  pages={682--693},
  year={2009},
  organization={Springer}
}

@article{10.1145/828.1884,
author = {Fredman, Michael L. and Koml\'{o}s, J\'{a}nos and Szemer\'{e}di, Endre},
title = {Storing a Sparse Table with 0(1) Worst Case Access Time},
year = {1984},
issue_date = {July 1984},
publisher = {Association for Computing Machinery},
address = {New York, NY, USA},
volume = {31},
number = {3},
issn = {0004-5411},
url = {https://doi.org/10.1145/828.1884},
doi = {10.1145/828.1884},
journal = {J. ACM},
month = jun,
pages = {538–544},
numpages = {7}
}

@article{bender2004level,
  title={The level ancestor problem simplified},
  author={Bender, Michael A and Farach-Colton, Mart{\i}n},
  journal={Theoretical Computer Science},
  volume={321},
  number={1},
  pages={5--12},
  year={2004},
  publisher={Elsevier}
}

@article{berkman1994finding,
  title={Finding level-ancestors in trees},
  author={Berkman, Omer and Vishkin, Uzi},
  journal={Journal of computer and System Sciences},
  volume={48},
  number={2},
  pages={214--230},
  year={1994},
  publisher={Elsevier}
}

@inproceedings{dietz1991finding,
  title={Finding level-ancestors in dynamic trees},
  author={Dietz, Paul F},
  booktitle={Workshop on Algorithms and Data Structures},
  pages={32--40},
  year={1991},
  organization={Springer}
}

@article{berkman1993recursive,
  title={Recursive star-tree parallel data structure},
  author={Berkman, Omer and Vishkin, Uzi},
  journal={SIAM Journal on Computing},
  volume={22},
  number={2},
  pages={221--242},
  year={1993},
  publisher={SIAM}
}

@inproceedings{bender2000lca,
  title={The LCA problem revisited},
  author={Bender, Michael A and Farach-Colton, Martin},
  booktitle={Latin American Symposium on Theoretical Informatics},
  pages={88--94},
  year={2000},
  organization={Springer}
}

@article{chazelle1986filtering,
  title={Filtering search: A new approach to query-answering},
  author={Chazelle, Bernard},
  journal={SIAM Journal on Computing},
  volume={15},
  number={3},
  pages={703--724},
  year={1986},
  publisher={SIAM}
}

@Inbook{Skala2013,
author="Skala, Matthew",
title="Array Range Queries",
bookTitle="Space-Efficient Data Structures, Streams, and Algorithms: Papers in Honor of J. Ian Munro on the Occasion of His 66th Birthday",
year="2013",
publisher="Springer Berlin Heidelberg",
address="Berlin, Heidelberg",
pages="333--350",
isbn="978-3-642-40273-9",
doi="10.1007/978-3-642-40273-9_21",
url="https://doi.org/10.1007/978-3-642-40273-9_21"
}

\appendix
\section{Building the Virtual Trees}
\label{appendix:virtual_trees}

In this section, we provide the algorithms and complexity analysis for computing all the virtual trees $\mathcal{V}_c$. We will use the following result of Brodnik et al.~\cite{brodnik1999resizable} on implementing stacks using resizable arrays:

\begin{lemma}[Brodnik et al.~\cite{brodnik1999resizable}]
    Using singly resizable arrays, stacks can be implemented in $O(1)$ worst-case time per operation, and $O(\sqrt{n})$ extra storage, besides the storage necessary to maintain the contents of the stack.
\end{lemma}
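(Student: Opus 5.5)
The natural approach is the Brodnik et al.\ layout: the stack is a sequence of \emph{data blocks} whose sizes grow slowly, plus an \emph{index block} of pointers to them. I will use data blocks of sizes $1,2,3,\dots$, so the $k$-th block has size $k$. With $n$ elements stored, $k(k+1)/2 \ge n$ already holds at $k = O(\sqrt{n})$. Hence the number of blocks is $O(\sqrt{n})$ and the last block has size $O(\sqrt{n})$. A stack only touches its top, so besides the index I only keep the number of blocks in use and the offset of the top element inside the last block. I assume, as Brodnik et al.\ do, that allocating or freeing an uninitialised block of any size costs $O(1)$.

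\textbf{Operations.} For a push, I write into the last block if it has room. Otherwise I obtain a block of the next size, reusing the spare block if one exists, and append its pointer to the index. For a pop, I decrement the offset. If the last block becomes empty, I keep it as the \emph{single} spare block and free any previous spare. Keeping one spare block is what rules out thrashing when pushes and pops alternate at a block boundary. With the spare in place, every data-block transition costs $O(1)$ and never triggers a cascade.

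\textbf{Space.} The extra storage beyond the stored elements has three parts: the unused cells of the last block, the spare block, and the index block. The last block and the spare each have size $O(\sqrt{n})$. The index has $O(\sqrt{n})$ entries. So the overhead is $O(\sqrt{n})$ in total.

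\textbf{Main obstacle: worst-case time for the index.} The index is itself a growing array, so a naive doubling step would cost $\Theta(\sqrt{n})$ in a single operation. I will deamortise it in the standard way. While the current index array of capacity $m$ fills up, I allocate an array of capacity $2m$ and copy $O(1)$ old entries into it per operation. Entries written during the migration go into both arrays. The copy is therefore finished by the time the old array is full, and I switch pointers in $O(1)$. Shrinking is symmetric, using a half-size copy started when occupancy drops to $m/4$. Only $O(1)$ index arrays of size $O(\sqrt{n})$ ever coexist, so the space bound is preserved and every operation runs in $O(1)$ worst-case time.
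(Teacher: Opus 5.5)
The paper does not prove this lemma; it imports it from Brodnik et al. There, stacks are an immediate corollary of the general singly resizable array: push is a Grow followed by a write into the last slot, and pop is a read followed by a Shrink. That array uses power-of-two data blocks grouped into superblocks, so that locating element $i$ needs only bit operations. It also uses an index block resized by doubling and halving, and keeps at most one empty data block.

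Your proof is correct but more direct. A stack only ever touches its top, so you can use the arithmetic block sizes $1,2,3,\dots$. These are unsuitable for a general resizable array only because finding the block of element $i$ amounts to computing a square root. You also supply the standard incremental-copying deamortization of the index yourself.

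The cited route buys more: $O(1)$ random access with plain word operations, and the same machinery also underlies queues and deques. That general structure is what the phrase ``using singly resizable arrays'' refers to. Your route is self-contained and simpler, but stack-specific.

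Two minor points. First, under your own assumption that allocating or freeing any block costs $O(1)$, the spare block is not what gives the worst-case bound. A boundary crossing without it costs one allocation or deallocation plus one $O(1)$ index update, so the spare is harmless but not essential. Second, the index migration needs the routine care for interleaved pushes and pops, which your sketch leaves implicit. For instance, copy two entries per operation, and discard an unfinished migration if the occupancy reaches the opposite threshold.
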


 Prior to the DFS procedure, for each color $c$, we instantiate an empty stack $\mathcal{S}_c$. For each color $c$ for which the virtual tree $\mathcal{V}_c$ is not connected by default, we push the virtual root node $r_c$ to the stack $\mathcal{S}_c$, as described in section~\ref{section:prerequisites}.

 We will perform a single DFS traversal over the input tree $\mathcal{T}$ starting from node $1$, which is also the root of $\mathcal{T}$. For each color $c$, we build the parent list $parent_{c}[index_c(u)]$ for each node $u$ of color $c$. By $index_c(u)$ we denote an index assigned to the node $u$ of color $c$ in some arbitrary ordering of the nodes of color $c$. The arrays $parent_c$ and $index_c$ can be initialized in $O(n)$ time, and $index_c$ can also be precomputed in $O(n)$ time. Each time the DFS procedure enters a node $u$ of color $c$, we query the top of $\mathcal{S}_c$, and assign $parent_c[index_c(u)]\gets \mathcal{S}_c.top()$. This is exactly what our definition of $\mathcal{V}_c$ implies, as the parent of node $u$ in $\mathcal{V}_c$, is the last node visited in a DFS traversal before entering $u$. When the DFS traversal leaves the subtree of node $u$, we pop the node $u$ from $\mathcal{S}_c$. The pseudocode for the DFS traversal can be summarized the following way:

\begin{algorithm}[H]
\caption{DFS on $\mathcal{T}$ for building the virtual trees}
\begin{algorithmic}[1]
\State Input: the currently examined node $u$
\State $c\gets \text{color of $u$}$
\State $parent_c[index_c(u)]\gets \mathcal{S}_c.top()$
\State $\mathcal{S}_c.push(u)$
\For{$v\in children(u)$}
    \State DFS($v$)
\EndFor
\State $\mathcal{S}_c.pop(section:virtual_trees)$
\end{algorithmic}
\end{algorithm}

It is well known that the children arrays of each node of a tree can be computed in linear time from its parent array, thus we arrive at a $O(n)$ time algorithm for computing all the virtual trees $\mathcal{V}_c$, which, in their turn, require $O(n)$ words of space in total.
\section{Proofs for Blocking Results}
\label{appendix:blocking_results_proofs}

In this section, we provide the implementation details and complexity analysis for the blocking strategies introduced in Section~\ref{section:blocking_technique}. While the existence of these structures was established by Durocher et al.~\cite{durocher2016linear}, we formalize the algorithms here to prove they can be constructed in $O(n \log n)$ time.

First, we define the \textbf{Auxiliary Tree}, a structure implicitly used in the second step of the marking procedure to ensure LCA closure.

\subsection{Auxiliary Trees}
\label{subsection:auxiliary_trees}

The \textbf{Auxiliary Tree} (often referred to in literature as a Virtual Tree, distinct from the color-based virtual trees in Subsection~\ref{subsection:prerequisites:subsubsection:intermediary_results_on_colored_trees}) is a compression of the original tree $\mathcal{T}$ with respect to a specific subset of marked nodes $S \subseteq V(\mathcal{T})$.

\begin{definition}
    Given a subset of marked nodes $S$, the Auxiliary Tree $\mathcal{T}_{aux}(S)$ is the minimal subgraph of $\mathcal{T}$ that contains $S$ and preserves connectivity between them. Its vertex set consists of $S \cup \{ \text{LCA}(u, v) \mid u, v \in S \}$.
\end{definition}

\textbf{Construction.} The Auxiliary Tree can be constructed efficiently using the following standard procedure:
\begin{enumerate}
    \item Sort the nodes in $S$ according to their pre-order traversal index in $\mathcal{T}$.
    \item Compute the LCA of every adjacent pair of nodes in the sorted list.
    \item The vertex set of $\mathcal{T}_{aux}$ is the union of $S$ and these LCAs.
    \item Edges are formed by connecting each node to its nearest ancestor present in this new set using a stack-based construction.
\end{enumerate}

\textbf{Complexity.} If the pre-order indices and an $O(1)$ LCA oracle are precomputed (taking $O(n)$ time), constructing $\mathcal{T}_{aux}(S)$ takes $O(|S| \log |S|)$ time (dominated by sorting).

\subsection{Proofs of Computational Complexity}

\begin{lemma}[Durocher et al.~\cite{durocher2016linear}]
\label{lemma:durocher_blocking_runtime}
    Given a tree $\mathcal{T}$ with $n$ nodes and a blocking factor $t \le n$, the set of marked nodes $M_t$ obeying the properties of Lemma~\ref{lemma:durocher_blocking} can be computed in $O(n \log n)$ time.
\end{lemma}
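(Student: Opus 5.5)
The construction I have in mind is the classic marking scheme of Durocher et al., carried out in two phases: first a depth-based marking, then closure under LCA. I will argue that each phase runs in $O(n)$ or $O(n\log n)$ time and that the three properties of Lemma~\ref{lemma:durocher_blocking} survive.

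\textbf{Phase 1 (depth marking).} I compute depths and the height $h(u)$ of every node (the longest downward path from $u$) with one DFS in $O(n)$ time. I then mark every node $u$ with $h(u)\ge t$ and $\mathrm{depth}(u)\equiv 0 \pmod t$. To bound the count, I charge each marked $u$ to a downward path of $t$ nodes starting at $u$ whose nodes, other than $u$, are unmarked before depth $\mathrm{depth}(u)+t$. Two marked nodes at depths differing by a multiple of $t$ get disjoint charged sets if I choose the path toward the deepest leaf and cut it at the next marked level. Hence the number of marked nodes is $O(n/t)$.

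\emph{Gap property.} Take any path of unmarked nodes. If it has more than $2t$ nodes, then its vertical part (one of its two monotone halves) contains $t$ consecutive depths. One of these depths is $\equiv 0 \pmod t$, and the node there has height at least $t$ unless the path bottoms out near a leaf. To handle that exception I will refine the rule so that the root of every maximal subtree of height less than $t$ is absorbed. If this makes the gap $2t$ instead of $t$, I simply rescale $t$ by a constant.

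\textbf{Phase 2 (LCA closure).} Let $S$ be the Phase 1 marks. I build the auxiliary tree $\mathcal{T}_{aux}(S)$ as in Subsection~\ref{subsection:auxiliary_trees}, and $M_t$ is its vertex set. Pre-order numbering and the LCA oracle of Lemma~\ref{lemma:constant_time_lca_query} cost $O(n)$. Sorting $S$ costs $O(|S|\log|S|)\subseteq O(n\log n)$; this could even be $O(n)$ by reading $S$ off the DFS order directly. Computing LCAs of adjacent pairs costs $O(|S|)$. By the standard fact that the LCA closure of a set sorted in pre-order is obtained from consecutive-pair LCAs, $M_t$ is LCA-closed and $|M_t|\le 2|S|-1=O(n/t)$. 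The gap property is preserved because adding marks only shortens unmarked paths.

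\textbf{The tree $\mathcal{T}_t$.} The edges of $\mathcal{T}_t$ are exactly those produced by the stack-based construction (each node is connected to its nearest ancestor in $M_t$), which takes $O(|M_t|)$ time. I must also check that this adjacency coincides with ``no other marked node on the path''. Because of LCA closure, any two marked nodes adjacent in that sense are in an ancestor–descendant relation; this is the same argument as in Lemma~\ref{lemma:adjacent-representatives-connection}.

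\textbf{Main obstacle.} I expect the hardest step to be the counting and gap argument of Phase 1. The delicate points are the charging argument, which must assign disjoint sets of $t$ nodes to marked nodes, and the unmarked paths that bend at an LCA or end inside shallow subtrees. The running-time claim itself is routine once the marking rule is fixed, with the $O(n\log n)$ cost dominated by sorting.
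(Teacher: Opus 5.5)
Your proof is correct in substance, but it takes a different route from the paper's. The paper implements Durocher et al.'s size-based rules in three steps:
\begin{enumerate}
\item mark nodes of subtree size at least $t$ whose children all have size below $t$;
\item close the set under LCA via the auxiliary tree;
\item fill gaps by repeatedly marking the lowest unmarked $v$ with $\mathrm{size}(v)-\mathrm{size}(u)\ge t$ for a descendant $u$, using a max-priority queue over auxiliary-tree edges.
\end{enumerate}
Its $O(n\log n)$ bound comes from the sorting and the queue.

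You replace this with the classical depth-modulo-$t$ marking (depth $\equiv 0 \pmod t$ and height at least $t$), followed by a single LCA closure. That buys simplicity. The count becomes a clean disjoint-charging argument: the depth bands $[kt,kt+t)$ are disjoint across levels, and marked nodes on one level have disjoint subtrees. The gap bound is pure depth arithmetic, no iterative gap filling is needed, and everything runs in $O(n)$ time.

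What the size-based rules buy in return is control of subtree-size differences between unmarked nodes and their nearest marked descendants. That is the kind of information the block-size bound of Lemma~\ref{lemma:durocher_real_blocking} draws on. Your marking controls only path lengths: a short unmarked spine between two marks can carry arbitrarily many pendant leaves. For the statement at hand only the three listed properties matter, so this is no obstacle here.

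You should clean up the gap argument. Drop the refinement that ``absorbs'' roots of maximal subtrees of height less than $t$. If it means marking them, it destroys the $O(n/t)$ bound: a long path with one pendant leaf at every node already yields $\Theta(n)$ such roots. It is also unnecessary. Work with $2t$ depths instead of $t$. If a monotone unmarked chain covers depths $a,\dots,a+2t-1$, take its node at the unique depth $d^*\equiv 0\pmod t$ in $[a,a+t-1]$. That node has at least $t$ chain nodes below it, hence height at least $t$, so it is marked.

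Thus monotone unmarked chains have fewer than $2t$ nodes, and arbitrary unmarked paths have fewer than $4t$; the constant is $4$, not $2$. Running the construction with $\lfloor t/4\rfloor$ in place of $t$ (marking everything if $t<4$) yields exactly the stated gap. Two small additions finish the $\mathcal{T}_t$ part. LCA closure also forces a unique topmost marked node, so your stack construction produces a tree rather than a forest. And if $S$ is empty (shallow trees), you may add the root without breaking LCA closure.
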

\begin{proof}
    We employ the marking procedure exactly as defined by Durocher et al.~\cite{durocher2016linear}. The procedure consists of three specific rules applied to determine the set $M_t$:
    \begin{enumerate}
        \item Mark any node $u$ where $\text{size}(u) \ge t$ and $\text{size}(v) < t$ for every child $v$ of $u$.
        \item Mark the lowest common ancestor of any two marked nodes.
        \item As long as there is an unmarked node $v$ with a descendant $u$ such that $\text{size}(v) - \text{size}(u) \ge t$, and $v$ is the lowest such node in the tree, mark $v$.
    \end{enumerate}

    We implement these rules in three computational phases to establish the $O(n \log n)$ bound:

    \textbf{Phase 1: Size Threshold Marking (Rule 1).}
    We perform a single Depth First Search (DFS) traversal to compute subtree sizes. During the post-order visit, we identify and mark all nodes satisfying Rule 1, adding them to an initial set $S_1$. This takes $O(n)$ time.

    \textbf{Phase 2: LCA Closure (Rule 2).}
    To satisfy Rule 2, we compute the LCA-closure of $S_1$. Let $S_2$ be the set of nodes in the \textbf{Auxiliary Tree} constructed from $S_1$. This structure implicitly includes all necessary LCAs. Sorting $S_1$ by pre-order index takes $O(|S_1| \log |S_1|)$. Since $|S_1| \le n/t$, this step takes $O(n \log n)$.

    \textbf{Phase 3: Gap Filling (Rule 3).}
    Rule 3 requires filling gaps where the unmarked path length exceeds $t$. The condition $\text{size}(v) - \text{size}(u) \ge t$ implies a path of unmarked nodes with a total weight (subtree size difference) exceeding the threshold.
    \begin{enumerate}
        \item We interpret the edges of the Auxiliary Tree (from Phase 2) as paths of unmarked nodes.
        \item We maintain a max-priority queue of these paths, keyed by the size difference between endpoints.
        \item While the largest gap violates the condition, we extract it, identify the lowest node $v$ satisfying Rule 3, mark it, and update the Auxiliary Tree and priority queue.
    \end{enumerate}
    The total number of marked nodes is $O(n/t)$. Each marking operation involves priority queue updates taking $O(\log n)$. Thus, this phase runs in $O((n/t) \log n) \subseteq O(n \log n)$.

    Thus, the total runtime of this precomputation algorithm is $O(n \log n)$.
\end{proof}

\begin{lemma}[Durocher et al.~\cite{durocher2016linear}]
\label{lemma:durocher_real_blocking_runtime}
    Given a tree $\mathcal{T}$ and the set of marked nodes $M_t$ from Lemma~\ref{lemma:durocher_blocking_runtime}, the partition of $V(\mathcal{T})$ into blocks satisfying the properties of Lemma~\ref{lemma:durocher_real_blocking} can be computed in $O(n \log n)$ time.
\end{lemma}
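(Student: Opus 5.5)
We need to prove Lemma \ref{lemma:durocher_real_blocking_runtime}: given $M_t$, compute a partition of $V(\mathcal{T})$ into $O(n/t)$ connected blocks of size $O(t)$ in $O(n\log n)$ time.

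The plan is to derive the partition directly from the marked set $M_t$ and the gap property of Lemma~\ref{lemma:durocher_blocking}, in two stages.

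\textbf{Stage 1: initial components.} First, I precompute subtree sizes, pre-order indices, and the LCA structure of Lemma~\ref{lemma:constant_time_lca_query} in $O(n)$ time. I then assign every unmarked node to its nearest marked ancestor, if one exists, or to the root otherwise. A single top-down DFS does this, passing along the most recently seen marked node.

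Each marked node $m$ together with its assigned unmarked nodes forms a connected set: it consists of $m$ and the unmarked nodes hanging below $m$ before the next marked nodes. This gives at most $|M_t|+1 = O(n/t)$ connected groups.

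\textbf{Stage 2: splitting oversized groups.} Each group is bounded by Rule 3 of the marking procedure. Rule 3 guarantees that for an unmarked $v$ the unmarked region below it has weight less than $t$, but a marked node $m$ may have many children whose unmarked subtrees are each small yet together large.

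To handle this, for each marked node $m$ I process its children's unmarked regions, each of size $<t$, and greedily pack them into bins. A bin is closed once its total size reaches $t$, so every bin has size in $[t,2t)$, except possibly the last one. Each bin, together with $m$, would be connected but would share $m$. To obtain a genuine partition, I put $m$ only into its first bin and make each further bin its own block. Such a block is a union of sibling subtrees whose parent lies outside the block, so connectivity fails.

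\textbf{The main obstacle} is exactly this connectivity issue. I see two possible fixes:
\begin{itemize}
\item relax the notion of connectivity to ``connected after adding the parent representative'', which is all that is needed later for block trees; or
\item follow Durocher et al.\ and define blocks so that a node may serve as a shared boundary representative.
\end{itemize}
I would first try the second fix, noting that the paper's representatives already play exactly this boundary role (Lemma~\ref{lemma:adjacent-representatives-connection}).

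\textbf{Counting blocks.} Every closed bin has size at least $t$, so there are at most $n/t$ such bins. There is at most one leftover bin per marked node, giving $O(n/t)$ more. The total number of blocks is therefore $O(n/t)$, each of size $O(t)$.

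\textbf{Running time.} Stage 1 is linear. The greedy packing is linear in the number of children plus the region sizes. Sorting children by pre-order, to keep the bins contiguous in DFS order, costs $O(n\log n)$ at most. Hence the whole construction runs in $O(n\log n)$ time, dominated by the auxiliary-tree sorting already used in Lemma~\ref{lemma:durocher_blocking_runtime}.
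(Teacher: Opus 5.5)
Your proposal does not prove the statement, because it stops exactly where the proof is needed. Consider any bin that does not receive $m$ but contains two or more regions. It induces a disconnected subgraph, since those regions meet only at $m$, so it violates Property~3 of Lemma~\ref{lemma:durocher_real_blocking}. Neither fix rescues this. The first replaces induced connectivity by connectivity after adding $m$. The second lets $m$ lie in several blocks, so the output is no longer a partition of $V(\mathcal{T})$. Either way you end up proving a different statement. Two smaller points. The sort is unnecessary, since the DFS already lists children in order. The claim that each pendant region has fewer than $t$ nodes should cite Rule~1 for regions with no marked node below them and Rule~3 for the others, with LCA closure ensuring at most one marked node hangs directly below any region.

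That said, no fix of this kind can work, because the three properties of Lemma~\ref{lemma:durocher_real_blocking} cannot hold simultaneously. Take a star with center $c$ and $n-1$ leaves. A block not containing $c$ induces a connected subgraph only if it is a single leaf, while the block of $c$ has $O(t)$ nodes. Hence there are at least $n-O(t)$ blocks, which is not $O(n/t)$ for, say, $t=\sqrt{n}$.

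The paper's proof never confronts this because it only argues running time. It adopts Durocher et al.'s blocks: $B_u$, a marked node $u$ together with the unmarked component entered through $u$'s parent, and the leaf block $B'_w$ of a marked $w$ without marked descendants. It computes them by restricted DFS in $O(n)$ time after the $O(n\log n)$ marking, and inherits count, size and connectivity by citation. The blocks $B_u$ really do have at most $t$ nodes, by Rule~3 and LCA closure. The difficulty sits entirely in the pendant regions your bins collect, which the leaf blocks lump together. On the star, $c$ is marked by Rule~1 and $|B'_c|=n$.

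So the lemma must be weakened, and your second fix is the right weakening: cover $V(\mathcal{T})$ by $O(n/t)$ connected subtrees of size $O(t)$ that overlap only in their top marked nodes (or assume bounded degree). Your Stage~1 plus greedy packing proves that version in $O(n)$ time once $M_t$ is known. The paper's later reliance on a unique home block $B_k(u)$ would then have to be rechecked.
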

\begin{proof}
    We construct the partition using the specific assignment logic defined by Durocher et al.~\cite{durocher2016linear}. Their procedure defines two types of blocks:
    \begin{enumerate}
        \item \textbf{Internal Blocks ($B_u$):} For a marked node $u$, $B_u$ contains all nodes reachable from $u$ by a path that exits $u$ to its parent and contains no other marked nodes.
        \item \textbf{Leaf Blocks ($B'_w$):} For a marked node $w$ with no marked descendants, $B'_w$ contains all nodes in the subtree of $w$ not reachable from any other marked node (i.e., nodes "near the bottom" of the tree).
    \end{enumerate}

    We compute this partition efficiently as follows:

    \textbf{Step 1: Compute Marked Nodes.}
    Compute $M_t$ using the procedure from Lemma~\ref{lemma:durocher_blocking_runtime}.
    This takes $O(n \log n)$ time.

    \textbf{Step 2: Assign Leaf Blocks.}
    We identify marked nodes $w$ with no marked descendants. We perform a DFS from each such $w$ downwards. All visited unmarked nodes are assigned to $B'_w$. The traversal stops at leaves. Each node in these regions is visited once, taking $O(n)$ time.

    \textbf{Step 3: Assign Internal Blocks.}
    We identify the remaining unassigned nodes. For every marked node $u$, we traverse upwards (or essentially perform a restricted DFS on the forest formed by removing marked nodes). All nodes in the connected component adjacent to $u$'s parent edge are assigned to $B_u$. This is a standard graph traversal visiting every node and edge exactly once, taking $O(n)$ time.

    Dominated by Step 1, the total time is $O(n \log n)$.
\end{proof}
\section{Preprocessing Algorithms}
\label{appendix:precomputation_algorithms}

In this section, we provide the algorithms for computing the components of our data structure. We begin by designing a universal algorithm that is used as a subroutine for computing all the components. Once we establish the universal algorithm, designing the separate procedures for each of the components becomes trivial. 

\begin{lemma}
    \label{lemma:universal_precomputation_traversal}
    Given a colored tree $\mathcal{T}$, of $n$ vertices, and two blocking partitions $\mathcal{B}_{t_1}$ and $\mathcal{B}_{t_2}$, associated with blocking factors $t_1$ and $t_2$, such that $t_2/t_1<\frac{\log n}{8\log\log n}$, respectively, there exists an $O(n(n/t_1))$ time algorithm, requiring $O(n)$ work-space, which lists the optimal color inside the following sets of colors:
    \begin{itemize}
        \item $ c_{opt}\in C(P(u,v)) \cap ( C( B_{1}(u))  \cup C(B_1(v))) \cap \overline{C(B_2(u))} \cap \overline{C(B_2(v))}$
        The optimal color in the set corresponding to the colors appearing on the path $P(u,v)$, in block $B_1(u)$ \textbf{or} $B_1(v)$, and not present in block $B_2(u)$ or $B_2(v)$.
        
        \item $ c_{opt}\in C(P(u,v)) \cap C( B_{1}(u)  \cap C(B_1(v)) \cap \overline{C(B_2(u))} \cap \overline{C(B_2(v))}$

        The optimal color in the set corresponding to the colors appearing on the path $P(u,v)$, in block $B_1(u)$ \textbf{and} in block  $B_1(v)$, and not present in block $B_2(u)$ or $B_2(v)$.
    \end{itemize}
    where $u$, $v$ are representative nodes of $B_1(u)$ and $B_1(v)$. 
    This final list will have length $(n/t_1)^2$.
\end{lemma}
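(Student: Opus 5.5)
We run one traversal of $\mathcal{T}$ from every $t_1$-block representative $u$. There are $O(n/t_1)$ representatives and each traversal is linear, which gives the $O(n(n/t_1))$ bound. Fix a representative $u$ and perform a DFS of $\mathcal{T}$ rooted at $u$. While the DFS is at a node $x$, it maintains the following state for the current path $P(u,x)$:
\begin{itemize}
\item for every color $c$, a counter and the first and last occurrences of $c$ on $P(u,x)$, so that $g(u,x,c)$ can be evaluated in $O(1)$ by the oracle property (Range Contraction plus Oracle Complexity);
\item a bitset-free record of which colors belong to $C(B_1(u))$ and $C(B_2(u))$.
\end{itemize}
The first-occurrence closest to $u$ is fixed once a color appears. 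The occurrence closest to $x$ is updated on entry and restored on backtrack via a per-color stack, exactly as in the virtual-tree construction of Appendix~\ref{appendix:virtual_trees}. All arrays are indexed by color and reused across the $n/t_1$ runs, so the work-space is $O(n)$.

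The main point is that we only need an answer when $x=v$ is itself a $t_1$-representative. The candidate colors for the pair $(u,v)$ are the colors of $C(P(u,v))$ lying in $C(B_1(u))\cup C(B_1(v))$ (or in the intersection) but in neither $B_2(u)$ nor $B_2(v)$.
\begin{itemize}
\item Colors of $B_1(u)$ are fixed for the whole traversal. We precompute the list $L_u$ of colors in $C(B_1(u))\setminus C(B_2(u))$; it has size $O(t_1)$.
\item Colors of $B_1(v)$ come from a list $L_v$ of size $O(t_1)$.
\end{itemize}
At each representative $v$ we scan $L_u\cup L_v$. For each color we check in $O(1)$ that it currently occurs on $P(u,v)$ (counter $>0$) and that it is absent from $B_2(v)$, using Lemma~\ref{lemma:hashing_multi_level_color_search} or a precomputed membership table. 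We also check membership in the other $t_1$-block for the intersection variant. We then evaluate $g$ and keep the argmax for both sets simultaneously.

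This costs $O(t_1)$ per representative pair, hence $O((n/t_1)^2 t_1)=O(n^2/t_1)$ in total, plus $O(n)$ DFS work per source, i.e.\ $O(n(n/t_1))$ overall. We output $(u,v,c_{opt})$ for each pair, which gives the stated list of length $(n/t_1)^2$.

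The obstacle is ensuring the per-$v$ scan and the incremental maintenance are truly $O(1)$ per color. In particular, the first and last occurrences relative to $u$ must be correct when the DFS path turns at $lca(u,v)$. Rooting the DFS at $u$, rather than at the tree root, makes $P(u,x)$ simply the DFS stack, so the per-color stack of occurrences gives both extremal occurrences directly. Absence from $B_2(v)$ is answered via the constant-time hashing of Lemma~\ref{lemma:hashing_multi_level_color_search}; this is where the hypothesis $t_2/t_1<\frac{\log n}{8\log\log n}$ is used. The small-tree encoding from Lemma~\ref{lemma:small_tree_search} applies, so the masks fit in a word.
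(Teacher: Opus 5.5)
\textbf{This has a genuine gap: your proof covers only a degenerate reading of the statement, and your method does not meet the time bound for the sets the lemma actually concerns.}

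Read literally, the two sets are empty. The partitions here are nested: the paper's proof works inside block trees $\mathcal{T}_{t_2/t_1}[B_v]$, so every $t_1$-block lies inside a $t_2$-block. Hence $B_1(u)\subseteq B_2(u)$ and $C(B_1(u))\cap\overline{C(B_2(u))}=\emptyset$. Your lists $L_u=C(B_1(u))\setminus C(B_2(u))$ and $L_v$ are therefore always empty, not ``of size $O(t_1)$'' in any meaningful sense, and your algorithm establishes only a vacuous statement. For two unrelated partitions, your per-source DFS with an $O(t_1)$ scan per representative would be a correct and simpler argument. It would also never use the hypothesis on $t_2/t_1$, which is a sign that it misses what the lemma is for.

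The subscripts in the statement are swapped. The paper's proof, for each fine block $y$ inside $B_2(v)$, looks for the first occurrence of $c$ \emph{outside} $B_1(y)$. It locates the $u$-side endpoint with the block search for $B_2(u)$, assuming $c\notin C(B_1(u))$. The lemma is also invoked in the Subtask~1 precomputation for $C_{path}\cap C(B_4(u))\cap C(B_4(v))\cap\overline{C(B_3(u))}\cap\overline{C(B_3(v))}$. So the intended candidates are colors present in the coarse blocks and absent from the fine ones.

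For those sets your approach fails on running time. Your DFS bookkeeping from each source is fine, but the candidate lists become $C(B_2(\cdot))\setminus C(B_1(\cdot))$, of size $\Theta(t_2)$. Scanning them at each of the $(n/t_1)^2$ pairs of representatives costs $O(n^2t_2/t_1^2)$, a factor $t_2/t_1$ above the claimed $O(n^2/t_1)$.

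The missing idea is to share work among the $t_2/t_1$ fine blocks of one coarse block:
\begin{enumerate}
\item The paper iterates over pairs consisting of a $t_1$-block $B_u$ and a $t_2$-block $B_v$; there are only $(n/t_1)(n/t_2)$ of them. For each pair it traverses $B_2(v)$ once, in $O(t_2)$ time.
\item For each relevant occurrence $x$ of a color $c$, it obtains in $O(1)$ the bitmask $\psi_x$ of all fine blocks $y$ (with $c\notin C(B_1(y))$) for which $x$ is the first occurrence of $c$ on the path from $B_1(y)$ toward $u$. This uses tabulation over the small block tree with masks $\sigma_V,\sigma_E$ (Lemma~\ref{lemma:tabulation_small_tree_subset_nodes_reaching_e}).
\item It gets the $u$-side endpoint in $O(1)$ via Lemma~\ref{lemma:hashing_multi_level_color_search} and Corollary~\ref{corollary:first_ocurrences_on_path_constant}, and keeps the best $g$-value per mask in a table $T[\psi]$.
\item It then distributes $T$ to every fine block $z$ by maximizing over all masks with $\psi[z]=1$, in $O(2^{t_2/t_1}\cdot t_2/t_1)$ time.
\end{enumerate}
This gives $O(t_2+2^{t_2/t_1}\,t_2/t_1)$ per pair and $O(n^2/t_1)$ overall. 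That step, not word-packing of the hashing masks, is where $t_2/t_1<\frac{\log n}{8\log\log n}$ is needed.
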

\begin{proof}
   We iterate through each pair of blocks $B_u \in \mathcal{B}_{t_1}$ and $B_v \in \mathcal{B}_{t_2}$, where $u$ is the representative of $B_u$ and $v$ is the representative of $B_v$.
    For the pair $(u,v)$, we determine the following important nodes:
    
    \begin{itemize}
        \item \textbf{The first $t_1$-block $B_{v'}$ on path $P(u,v)$, and node $v'$. } 
        
        By appealing to the block tree $\mathcal{T}_{t_2/t_1}[B_v]$ of $B_v$, we determine in $O(t_2/t_1)$ time, the $t_1$-block $B_{v'}\in \mathcal{B}_{t_1}$, which is the first $t_1$ block of $\mathcal{T}_{t_2/t_1}[B_v]$, encountered by traversing the path $P(u,v)$, from the representative node $u$ to node $v$. We assume $v'$ is the representative of $B_{v'}$. This will take $O(t_2/ t_1)$ time.
        \item \textbf{The first node $v''$ contained in $B_2(v)$, on path $P(u,v)$.}
        
        By iterating through the nodes of $\mathcal{T}$ inside $B_{v'}$, we determine the first node $v''$ contained in $B_2(v)$, encountered when traversing the path $P(u,v)$. This takes $O(t_1)$ time.
    \end{itemize}

    From $v''$, we start a DFS traversal through the subtree of $\mathcal{T}$ induced by the nodes present in $B_2(v)$. Suppose we arrived at a node $x$ of color $c$. We are only interested in the nodes $x$, corresponding to the following occurrences of $c$:
    
    \begin{enumerate}
        \item $x$ is the first occurrence of $c$, on the path $P( B_1(y), v')$, for some node $y$ inside the $t_2$-block $B_2(v)$.
        \item $x$ is the first occurrence of $c$, on the path $P(v', u)$.
    \end{enumerate}
    
    We refer to the node $x$ as being of type $1$ or $2$, according to the cases defined above. It is easy to develop a linear-space data structure that can determine whether $x$ is of type $1$ or $2$, or none of these types, in $O(1)$ time, hence we omit the proof.
    
    \textbf{Solving Case 1. }
        For a node $x$ of type $1$, we are interested in all nodes $y$, such that $x$ is the first occurrence of $c$, on the path $P( B_1(y), v')$. We limit our search to the nodes $y$, that are representatives of their corresponding $t_1$-block, that is, $y$ is the representative of $B_1(y)$. We denote the set of such nodes by $S_x$. Listing all nodes from $S_x$ would take $O(t_2/t_1)$ time, which is too slow. However, we can work around this limitation by employing tabulation and small trees once more. The following result is crucial for developing our tabulation approach:
        \begin{lemma}
        \label{lemma:tabulation_small_tree_subset_nodes_reaching_e}
            There exists a linear-space data structure, requiring an $O(n)$ preprocessing time, that supports the following queries for rooted small trees in $O(1)$ time: 
            
            \textbf{Input: }
            A small tree $\tau$, rooted at node $r$; a binary string $\sigma_V$ of length $|V(\tau)|$; a binary string $\sigma_E$ of length $|E(\tau)|$; and an edge $e$ of $E(\tau)$ such that $\sigma_E(e)=1$,
            
            \textbf{Output: } Return a binary string $\psi$, corresponding to the subset of nodes $S\subseteq V(\tau)$, marked with $0$ according to $\sigma_V$, such that, for any node $u\in S$, $e$ is the first node marked with $1$, on the path $P(u,r)$. 
        \end{lemma}
        \begin{proof}
        Similar to the proof of Lemma~\ref{lemma:small_tree_search}, we will construct a table $T$, whose entries will contain the answer to a specific query, $\psi= T[\tau][r][\sigma_V][\sigma_E][e]$.
        The table $T$ will have size $O(2^{s\log s}\cdot s\cdot 2^{s}\cdot 2^{s}\cdot s) \subset O(2^{2s\log s})\subset O(n^{1/4})\ll O(n)$.
        To compute the answer to a query, stored in an entry of this table, we require a $O(s)$ DFS traversal of the tree $\tau$, beginning with edge $e$, which keeps our runtime in $O(n^{1/4})\ll O(n)$.
        \end{proof}
        To use this tabulation technique, we require the additional computation of the binary strings $\sigma_E$. The binary strings $\sigma_V$ are the binary strings corresponding to the presence of a color $c$ among the blocks of $\mathcal{T}_{t_2/t_1}[B_2(v)]$. Meanwhile, the binary strings $\sigma_E$ correspond to the presence of a color $c$ on the path between the representatives of adjacent blocks. This can also be computed in $O(n)$ time, requiring an $O(n)$-space perfect hashing data structure.

            \begin{figure}[H]
    \centering
    \begin{tabular}{cc}
        \includegraphics[width=0.3\linewidth]{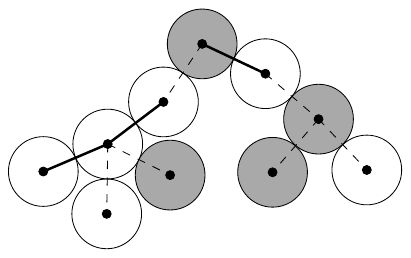} &
        \includegraphics[width=0.4\linewidth]{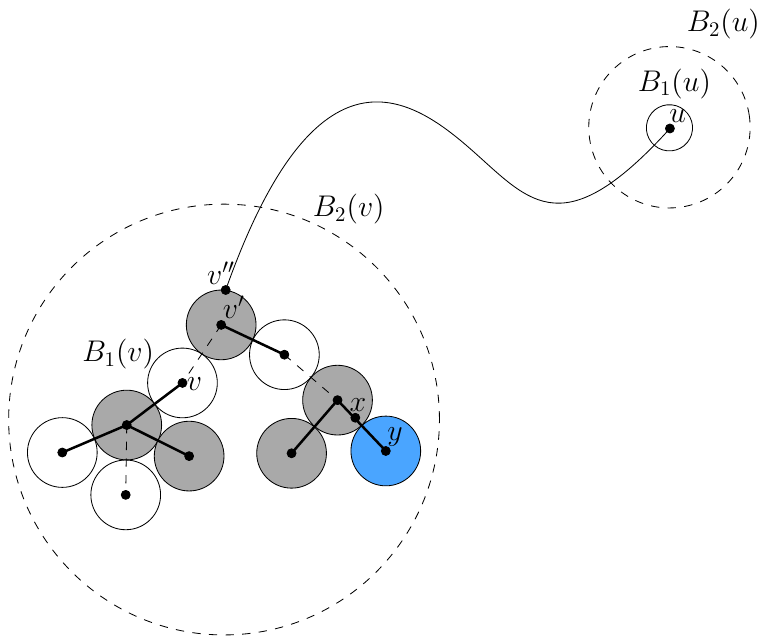} \\
        (a) & (b)
    \end{tabular}
    \caption{Small trees with vertex marking $\sigma_V$ and edge marking $\sigma_E$. A node is colored in gray if it is marked with $1$ and white if it is marked with $0$ according to $\sigma_V$. An edge is a normal segment if it is marked with $1$ and a dashed line if it is marked with $0$ according to $\sigma_E$. In figure (a), a small tree with some arbitrary markings $\sigma_V$ and $\sigma_E$ is depicted. In figure (b), a small tree $\tau$ together with the markings $\sigma_V$ $\sigma_E$, corresponding to the positions of some color $c$, is superimposed over a tree block $\mathcal{T}_{t_2/t_1}[B_2(v)]$. The nodes $u$, $v$, $v'$ and $v''$ are also depicted. A possible status node $x$ of the DFS procedure is depicted, and the only possible $y$ node of it is colored with blue.}
    \label{fig:small_trees_edge_colors}
    \end{figure}

        \textbf{Solving case 1.} To solve case $1$, we maintain for the $t_2$-block $B_2(v)$, a table $T$ (do not mistake it for the table $T$ from the proof of Lemma~\ref{lemma:tabulation_small_tree_subset_nodes_reaching_e}), maintaining for each binary string $\psi$, corresponding to a subset of nodes $S_x$, one optimal color, corresponding to a maximum value of the function $g$. That is, $T[\psi]=c_{opt}$.

        For the node $x$, we can determine the binary string $\psi_x$ corresponding to the set $S_x$ by an $O(1)$ query to the data structure from Lemma~\ref{lemma:tabulation_small_tree_subset_nodes_reaching_e}. At the same time, we must determine the occurrence of $c$, encountered first on the path $P(u,x)$, by querying the block search data structure from Lemma~\ref{lemma:hashing_multi_level_color_search}, for the big block $B_2(u)$ and small block $B_1(u)$. Assume this occurrence of $c$ appears at node $r$. Finally, we update the entry $T[\psi_x]$, if $g(c,x,r)$ is greater than the $g(c,v,u)$. Note that for achieving this, we need to store the maximum found value of $g$ inside a separate table.

        Building the table $T[\psi]$ takes $O(t_2)$ time, as we pass through each node inside $B_2(v)$ exactly once, and perform a constant number of $O(1)$-time operations on each node.

        We now use $T[\psi]$ to compute the answer for each small $t_1$-block inside $B_2(v)$. We iterate through every pair $(\psi, z)$, where $\psi$ is a binary string and $z$ is an index of a node of $\mathcal{T}_{t_2/t_1}[B_2(v)]$, such that $\psi[z]=1$. During this iterative process, we update the optimal values of $g$ and $c_{opt}$ for each index $z$, which we can store in a separate table. This procedure requires $O(2^{t_2/t_1}\cdot t_2/t_1)$ time.
        
    \textbf{Solving Case 2. } The solution to case $2$ is analogous to that of case $1$, the only difference being that now we need to construct a new small tree $\tau'$, by adding a new node $v_{new}$, and setting it as the parent of $v'$. We also set the binary symbol associated with the edge $e=(v_{new},v)$ to $1$ in the new associated $\sigma_{E}$. 

    One important scenario that we missed is when the first occurrence of an optimal color $c$ appears outside $B_2(v)$. This is not a problem, as we can use the solution to \textbf{Case 2} to process the colors falling under this assumption.

    The total runtime of processing a pair of blocks $(B_u,B_v)$ is $O(t_2+2^{t_2/t_1}\cdot t_2/t_1)$. The total number of such pairs is $(n/t_1)\cdot (n/t_2)$. Thus, the total runtime of this algorithm is $O(n(n/t_1))$, 
    assuming that $t_2/t_1$ is reasonably small.
\end{proof}

In our case, we deal with a partition hierarchy, and it is easy to see that the precomputation algorithm provided in the proof of Lemma~\ref{lemma:universal_precomputation_traversal} can be adapted to this case. We can modify the algorithm such that it computes the optimal color and its corresponding value of $g$, for pairs of blocks $(B_u, B_v)$, where $B_u$ is a $k$-level block and $B_v$ is a $k'$-level block, while keeping the runtime of the algorithm at $O(n(n/t_k))$.

Now we refer to the subtask decomposition from Subsection \ref{subsection:decomposition_into_subtasks}. First, we present the precomputation algorithm for the table used in Subtask 1.

\textbf{Precomputation for Subtask 1. } In order to compute the table required in subtask 1, we need to introduce one more blocking factor to our multi-level partition. Namely, we introduce the blocking factor $t_4=t_3\sqrt{\log n}$.

We aim to determine the optimal color from the following sets of colors:
\begin{itemize}
    \item \textbf{Case 1}: $C_{path} \cap C(B_4(u))\cap C(B_4(v)) \cap \overline{C(B_3(u))} \cap \overline{C(B_3(v))}$
    \item \textbf{Case 2}:  $C_{path} \cap C(B_3(v)) \cap \overline{C(B_4(u))} \cap \overline{C(B_4(v))}$
    \item \textbf{Case 3}: $C_{path} \cap \overline{C(B_4(u))} \cap \overline{C(B_4(v))}$
\end{itemize}

\textbf{Solving case 1.} According to Lemma~\ref{lemma:universal_precomputation_traversal}, we can build the table $T$, such that, for every pair $(B_u,B_v)$, where $B_u$ and $B_v$ are $t_3$ blocks, with their respective representative nodes $u$ and $v$, the entry $T[B_u][B_v]$ stores the optimal color $c_{opt}$ for the examined set of colors.

\textbf{Solving case 2.} To compute the optimal color for the sets corresponding to \textbf{Case 2}, we can again apply the algorithm from Lemma~\ref{lemma:universal_precomputation_traversal}, with a slight modification. The algorithm requires an $O(1)$-time oracle for computing the first occurrence of a color $c$ on the path $P(u,v)$, which appears inside $B_{4}(v)$ and outside $B_3(v)$. Instead of using an $O(1)$-time oracle, we will apply the $O(\log\log n)$-time search from Lemma~\ref{lemma:durocher:path_frequency_query}. This increases the preprocessing time of our algorithm from $O(n\frac{n}{t_4})$ to $O(n(n/t_4)\log\log n)\subset O(n(n/t_3))$.

\textbf{Solving case 3.} To compute the optimal color for the sets corresponding to \textbf{Case 3}, we can implement the trivial algorithm for building the respective table $T_2$, such that $T_2[B_u][B_v]$ contains the optimal color from the set $C_{path} \cap \overline{C(B_4(u))} \cap \overline{C(B_4(v))}$. For each $t_4$-block $B_u\in \mathcal{B}_{t_4}$, we start a DFS traversal of $\mathcal{T}$, starting from $u$, the representative of $B_u$. During this traversal, we maintain a heap of the $g$-values of the colors encountered on the path from $u$ to the currently examined node. When we encounter the representative $v$ of a new $t_4$-block, $B_v$, we iterate through all the colors contained in $B_v$ or in $B_u$ and delete them from the heap. Then, $T[B_u][B_v]$ is assigned the color of maximal value of $g$, contained in the heap after the deletions. After retrieving the sought value, we insert the deleted elements back into the heap and continue the traversal. 

Each DFS traversal will visit each node exactly once, and will compute the values $T[B_u][B_v]$ for $n/t_4$ distinct blocks $B_v$, requiring $O(t_4\log n)$ time for each block. Thus, the time required for one DFS traversal is $O(n\log n)$. There will be $O(n/t_4)$ traversals in total, thus the runtime of the precomputation algorithm is $O(n\log n(n/t_4))=O(n(n/t_2))$. 

Finally, we note that the optimal color, from the set of colors $C_{path} \cap \overline{C(B_3(i))} \cap \overline{C(B_3(j))}$, is the color with the maximum value of the function $g$, among the colors in the $3$ examined sets.
\begin{align*}
    c_{opt}=\arg\max \{g(c_{1},u,v), g(c_{2},u,v), g(c_{3},u,v)\} \\
    c_1\in C_{path} \cap C(B_4(u))\cap C(B_4(v)) \cap \overline{C(B_3(u))} \cap \overline{C(B_3(v))}\\
    c_2\in C_{path} \cap C(B_3(v)) \cap \overline{C(B_4(u))} \cap \overline{C(B_4(v))}\\
    c_3 \in C_{path} \cap \overline{C(B_4(u))} \cap \overline{C(B_4(v))}
\end{align*}

The runtime of the procedure for computing the table $T$ required for subtask 1 is $O(n(n/t_2))$, which corresponds to the preprocessing time mentioned in the previous sections, in Lemma~\ref{lemma:path_g_max_value_parameter_t} and Theorem~\ref{theorem:path_g_max_value_linear_space}. 

\textbf{Solving subtasks 2 and 3.} To solve subtasks 2 and 3, we can directly apply the algorithm used in \textbf{Case 2} of the precomputation procedure for Subtask 1, to achieve an $O(n(n/t_3)\log\log n)\subset O(n(n/t_2))$ time algorithm.
 
\textbf{Solving subtask 5.} For subtask $5$, we will have to adapt the procedure associated with Lemma~\ref{lemma:universal_precomputation_traversal} to the multi-level partition hierarchy designed specifically for subtask $5$. For every blocking factor $s_k$, where $k>0$, from the list $s_0,s_1,\dots, s_{|s|}$, we will apply the procedure from Lemma~\ref{lemma:universal_precomputation_traversal} to the pair of blocking factors $(s_k, s_{k-1})$, and compute the data structure by iterating through every pair of blocks $(B_u, B_v)$, where $B_u$ is an $s_{k-1}$ block and $B_v$ is an $s_{k}$ block. This way, we obtain an algorithm with the following runtime:
\begin{align*}
    \sum_{k>0}n\frac{n}{s_k} = n^2/t_2 \sum_{k>0} \frac{1}{r_k} = O(n^2/t_2)
\end{align*}   
where $r_k=s_k/t_2$. 

\textbf{Solving subtasks $9$ and $6$.} The precomputation algorithms for subtasks $9$ and $6$ follows naturally from the previously described solutions, and can be designed by applying the algorithm from Lemma~\ref{lemma:universal_precomputation_traversal} adapted to pairs formed by a $t_k$ and a $t_{k'}$ block, respectively. These procedures take $O(n(n/t_2))$ time in total.

We remind that subtasks 4, 7, and 8 are symmetric to the other described subtasks, thus, they do not require a designated preprocessing algorithm.

The time requirement for any preprocessing algorithm assigned to the presented subtasks is $O(n(n/t_2))$. Thus, the overall preprocessing time for the whole data structure is $O(n(n/t_2))$. Also, note that the presented algorithms can be implemented such that they require $O(n)$ working space.

\end{document}